\documentclass[12pt]{article}

\usepackage{graphicx}
\usepackage{amsmath}
\usepackage{amssymb}
\usepackage{amsthm}

\usepackage{geometry}
\geometry{verbose,tmargin=1in,bmargin=1in,lmargin=1in,rmargin=1in,headheight=1in,headsep=1in,footskip=.5in}

\usepackage[colorlinks = true]{hyperref}

\hypersetup{
  pdftitle = {Gapped and gapless phases of frustration-free spin-1/2 chains},
  pdfauthor = {Sergey Bravyi, David Gosset}
}

\usepackage{xcolor}
\definecolor{darkred}  {rgb}{0.5,0,0}
\definecolor{darkblue} {rgb}{0,0,0.5}
\definecolor{darkgreen}{rgb}{0,0.5,0}
\hypersetup{
  urlcolor   = blue,         
  linkcolor  = darkblue,     
  citecolor  = darkgreen,    
  filecolor  = darkred       
}

\usepackage{subfig}

\newcommand{\be}{\begin{equation}}
\newcommand{\ee}{\end{equation}}
\newcommand{\ba}{\begin{array}}
\newcommand{\ea}{\end{array}}
\newcommand{\bea}{\begin{eqnarray}}
\newcommand{\eea}{\end{eqnarray}}

\newcommand{\calB}{{\cal B }}

\newcommand{\calG}{{\cal G }}

\newcommand{\calA}{{\cal A }}

\newcommand{\CC}{\mathbb{C}}

\newcommand{\RR}{\mathbb{R}}

\newcommand{\trace}{{\mathrm{Tr}}}

\newtheorem{prop}{Proposition}
\newtheorem{Lemma}{Lemma}

\newtheorem{theorem}{Theorem}
\newtheorem{regionexclusion}{Region Exclusion Lemma}
\newtheorem*{gappedphasethm}{Gapped phase theorem}
\newtheorem*{gaplessphasethm}{Gapless phase theorem}


\title{Gapped and gapless phases of frustration-free spin-$\frac{1}{2}$ chains}
\author{Sergey Bravyi\footnote{IBM T.J. Watson Research Center, Yorktown Heights, NY}\and David Gosset\footnote{Walter Burke Institute for Theoretical Physics and Institute for Quantum Information and Matter, California Institute of Technology} \footnote{Department of Combinatorics \& Optimization and Institute for Quantum Computing, University of Waterloo}}
\date{June 1, 2015}
\begin{document}
\maketitle

\begin{abstract}
We consider a family of translation-invariant quantum spin chains with nearest-neighbor interactions and derive necessary and sufficient conditions for these systems to be gapped in the thermodynamic limit. More precisely, let $\psi$ be an arbitrary two-qubit state. We consider a chain of $n$ qubits with open boundary conditions and Hamiltonian $H_n(\psi)$ which is defined as the sum of rank-$1$ projectors onto $\psi$ applied to consecutive pairs of qubits. We show that the spectral gap of  $H_n(\psi)$ is upper bounded by $1/(n-1)$ if the eigenvalues of a certain $2\times 2$ matrix simply related to $\psi$ have equal non-zero absolute value. Otherwise, the spectral gap
is lower bounded by a positive constant independent of $n$ (depending only on $\psi$).  A key ingredient in the proof is a new operator inequality for the ground space projector which expresses a monotonicity under the partial trace. This monotonicity property appears to be very general and might be interesting in its own right. As an extension of our main result, we obtain a complete classification of gapped and gapless phases of frustration-free translation-invariant spin-$1/2$ chains with nearest-neighbor interactions.
\end{abstract}

\newpage
\tableofcontents


\section{Introduction}
Many properties of quantum spin chains depend crucially on whether the Hamiltonian
is gapped or gapless in the thermodynamic limit. Ground states of gapped Hamiltonians
are weakly entangled, as quantified by
the entanglement area law~\cite{Hastings2007area,Arad2013area,Brandao2013area},
and exhibit an exponential decay of correlation functions~\cite{Hastings2006spectral}.
For such systems the ground energy and the ground state itself can be efficiently computed
using algorithms based on Matrix Product States~\cite{Vidal2003efficient,perez2006matrix,Landau2013polynomial,Chubb2015}.
On the other hand, ground states of gapless spin chains
can exhibit drastic violations of the entanglement area 
law~\cite{Gottesman09,Irani09,Bravyi2012criticality,Movassagh2015power},
and computing the ground energy can be quantum-NP 
hard~\cite{Aharonov2009power,Hallgren2013local}.
Spin chain models studied in physics usually become gapless along quantum phase transition lines 
separating distinct gapped phases~\cite{Sachdev11}. Deciding whether a given family of  Hamiltonians is gapped or gapless in the thermodynamic limit 
is therefore a fundamental problem.

In this paper we provide a complete solution of this problem for a class of translation-invariant
chains of qubits with nearest-neighbor interactions. 
Let $\psi\in \CC^2\otimes \CC^2$ be a fixed two-qubit state with $\|\psi\|=1$. 
Consider a chain of $n$ qubits with open boundary conditions and define a Hamiltonian
\begin{equation}
\label{H}
H_n(\psi)  =\sum_{i=1}^{n-1}|\psi\rangle\langle\psi|_{i,i+1}.
\end{equation}
Here each term is a rank-$1$  projector onto $\psi$  applied to a consecutive pair of qubits. We shall refer to $\psi$ as the {\em forbidden state} since the Hamiltonian penalizes adjacent qubits
for being in the state $\psi$. 
 As we will see in Section \ref{sec:ground space}, the Hamiltonian $H_n(\psi)$ is frustration-free for any choice of $\psi$,
that is, ground states of $H_n(\psi)$ are zero eigenvectors of each individual projector $|\psi\rangle\langle\psi|_{i,i+1}$ and the ground energy of $H_n(\psi)$
is zero. Futhermore, the ground state degeneracy of $H_n(\psi)$ is equal to $n+1$ for almost all choices of $\psi$.

We are interested in the spectral gap separating the ground states
and the excited states of $H_n(\psi)$ or, equivalently, the smallest non-zero eigenvalue
of $H_n(\psi)$. To state our results, define a $2\times 2$ matrix
\begin{equation}
\label{Tpsi}
T_\psi=\left( \begin{array}{rr}
\langle \psi |0,1\rangle & \langle \psi|1,1\rangle \\
-\langle \psi|0,0\rangle & -\langle \psi|1,0\rangle \\
\end{array}
\right).
\end{equation}
Here $|0\rangle,|1\rangle$ is the standard basis of $\CC^2$.
As we will see, 
the matrix $T_\psi$ is crucial for understanding the structure of the ground space of $H_n(\psi)$. In this paper we prove that the eigenvalues of $T_\psi$ determine if $H_n(\psi)$ is gapped or gapless\footnote{Although our definition of 
the matrix
$T_\psi$ is basis-dependent, eigenvalues of $T_\psi$ are invariant under global $SU(2)$ rotations.
In particular, one can check that  a transformation $\psi \to (U\otimes U) \psi$ with a single-qubit unitary
operator $U$ maps
$T_\psi$ to $(\det{U})^{-1} \cdot U T_\psi U^\dag$.}.
Our main result is the following.
\begin{theorem}
\label{thm:main}
Let $\psi$ be an arbitrary two-qubit state. 
Suppose the eigenvalues of $T_\psi$ have equal non-zero 
absolute value. Then the spectral gap of $H_n(\psi)$ is at most $1/(n-1)$.
Otherwise the spectral gap of $H_n(\psi)$ is lower bounded by a positive constant independent of $n$, which depends only on the forbidden state $\psi$.
\end{theorem}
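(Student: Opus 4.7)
The plan is to treat the two directions separately.

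For the upper bound in the gapless regime, I would use a variational argument. The ground space $\calG_n$ of $H_n(\psi)$ admits an explicit matrix-product-state description in which the $2\times 2$ matrix $T_\psi$ plays the role of a transfer matrix along the chain (this is already signaled by the basis-dependence remark in the footnote and by the invariance of the spectrum of $T_\psi$ under $SU(2)$). When the eigenvalues $\lambda_1,\lambda_2$ of $T_\psi$ have equal non-zero absolute value, I would construct a trial vector $|\phi\rangle \perp \calG_n$ by interfering two MPS families built from the two (generalized) eigenvectors of $T_\psi$, analogous to a Bloch-wave excitation. The symmetry $|\lambda_1|=|\lambda_2|$ should force the contributions from the two families to have matching norms across the chain, and a telescoping computation exploiting frustration-freeness (each local projector kills most contributions when the two families are normalized identically) should yield $\langle \phi | H_n(\psi) | \phi \rangle / \|\phi\|^2 \leq 1/(n-1)$.

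For the lower bound in the gapped regime, I would apply a Nachtergaele-style recursive argument based on the operator monotonicity inequality signaled in the abstract. Writing $\Pi_{[a,b]}$ for the projector onto the ground space of the subchain on sites $a,\ldots,b$, the first step is to establish an inequality relating the partial trace of $\Pi_{[1,n]}$ over the last qubit to $\Pi_{[1,n-1]}$, in a monotone form whose strength is controlled by the spectrum of $T_\psi$. Iterating this should produce a geometric bound of the form $\|\Pi_{[1,k]}\,\Pi_{[k+1,n]}\| \leq \alpha$ with $\alpha < 1$ independent of $n$, where the strict contraction $\alpha<1$ reflects the unequal absolute values of the eigenvalues of $T_\psi$. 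The standard Nachtergaele / martingale estimate for frustration-free chains then converts this contraction into a positive constant lower bound on the spectral gap of $H_n(\psi)$.

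The principal obstacle is the monotonicity inequality itself: the abstract flags it as the main technical ingredient, and it must be strong enough to convert the imbalance $|\lambda_1|\neq|\lambda_2|$ into a bona fide $n$-independent contraction, rather than one that degrades with the chain length. A secondary subtlety is the non-generic case where $T_\psi$ fails to be diagonalizable or has a zero eigenvalue; there the Bloch-wave construction and the MPS parametrization both degenerate, and I expect to handle such cases either by a generalized eigenvector construction or by continuity from the diagonalizable case. A further bookkeeping point is ensuring that the constructed $|\phi\rangle$ is truly orthogonal to all $n+1$ ground states, which I would enforce by projecting off $\calG_n$ using the same MPS machinery without losing the $1/(n-1)$ energy estimate.
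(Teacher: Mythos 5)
Your proposal diverges from the paper in both directions, and in each case there is a concrete gap.

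For the upper bound, the paper does not use a variational trial state at all. It proves the bound indirectly: by Proposition~\ref{gs_counting} the periodic chain $H_m^{\circ}(\psi)$ has ground degeneracy $m+1$ when $T_\psi^m\sim I$ and only $2$ otherwise; combining this with the eigenvalue-perturbation bound (Proposition~\ref{Weyl}) one shows that $e_3^{\circ}(\psi,m)$ can be made arbitrarily small along a suitable sequence of lengths $m$, and Knabe's inequality (Lemma~\ref{lemma:knabe}) then forces $\gamma(\psi,n)\le 1/(n-1)$ --- the constant $1/(n-1)$ is precisely the constant appearing in Knabe's bound, not the outcome of a telescoping energy computation. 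Your sketch does not explain how the interference of two MPS families produces a state orthogonal to all $n+1$ ground states with Rayleigh quotient $\le 1/(n-1)$, and it does not engage with the feature that makes this direction delicate: the hypothesis constrains only $|\lambda_1|=|\lambda_2|$, while whether $T_\psi^m$ returns (approximately) to a multiple of the identity depends on the relative \emph{phase} of the eigenvalues. When that phase is an irrational multiple of $\pi$ the paper must invoke continued-fraction approximants to find the lengths $q_j$ at which the near-degeneracy occurs, and then transfer the conclusion back to every $n$ via Knabe plus a continuity argument. A Bloch-wave ansatz that treats all $\psi$ with $|\lambda_1|=|\lambda_2|$ uniformly would have to reproduce this number-theoretic input somehow; as written, the ``matching norms'' heuristic does not.

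For the lower bound, you name the right ingredients (the monotonicity inequality and a Nachtergaele-type criterion), but two points are off. First, the intermediate estimate $\|\Pi_{[1,k]}\Pi_{[k+1,n]}\|\le\alpha<1$ cannot hold: both projectors act as the identity on the global ground space $\calG_n$ (which is $(n+1)$-dimensional and nonzero), so the product has norm $1$. The quantity that must be controlled is $\|G_{ABC}-G_{AB}G_{BC}\|$ for a three-block partition with a middle buffer $B$, as in Lemma~\ref{lemma:N}, and the paper's Theorem~\ref{thm:Nbound} shows this is exponentially small in $|B|$. Second, the monotonicity inequality $\trace_n(G_n)\ge G_{n-1}$ holds for \emph{every} forbidden state (and indeed for non-translation-invariant rank-one chains), so it carries no information about the spectrum of $T_\psi$ and cannot by itself, however it is iterated, yield an $n$-independent contraction; otherwise the gapless phase would be gapped too. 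In the paper the spectral condition $|\mu_1|\ne|\mu_2|$ enters through the exponential decay of the ground-space correlation functions $\tau(i,j,n)=O((j-i)|\lambda|^{-2(j-i)})$ (Lemma~\ref{lemma:decay}), and it is the combination of that decay with monotonicity, packaged in the Region Exclusion lemmas, that produces the exponentially small error fed into Nachtergaele's bound. Your outline is missing this second, equally essential ingredient.
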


We now motivate our choice of the model Eq.~(\ref{H}),
highlight previous work on related models, 
and provide some intuition for why the eigenvalues of $T_\psi$ appear in the statement of the
theorem.
An informal sketch of the proof is provided in Section~\ref{sketch}.
Below we write $\gamma(\psi,n)$ for the spectral gap of $H_n(\psi)$.

The family of Hamiltonians defined in Eq.~(\ref{H}) includes some
well-known quantum models as special cases. 
For example, choosing $\psi$ proportional to  $|0,1\rangle-|1,0\rangle$
(the singlet state) one can easily check that $H_n(\psi)$ coincides with the
ferromagnetic Heisenberg chain up to an overall energy shift. 
For this model $H_n(\psi)$ has spectral gap $\gamma(\psi,n)=1-\cos{(\pi/n)}$ which decays 
as $n^{-2}$ for large $n$~\cite{KN}.  Note that in this case $T_\psi$ is proportional to the identity matrix,
so Theorem~\ref{thm:main} gives an upper bound $\gamma(\psi,n)\le 1/(n-1)$.
Koma and Nachtergaele studied a one-parameter deformation
of the Heisenberg chain known as the ferromagnetic XXZ chain with kink boundary conditions~\cite{KN}.
In this example $\psi$ is proportional to $|0,1\rangle-q|1,0\rangle$ for $q>0$ 
and the spectral gap of $H_n(\psi)$ is given by 
\[
\gamma(\psi,n)=1-2(q+q^{-1})^{-1} \cos{(\pi/n)}
\]
for all $n\ge 2$, see~\cite{KN} for details.
One can check that $T_\psi$ is a diagonal matrix with eigenvalues $\mu_1=(1+q^2)^{-1/2}$ and
$\mu_2=q(1+q^2)^{-1/2}$. It follows that $|\mu_1|\ne |\mu_2|$ for any $q\ne 1$
and Theorem~\ref{thm:main} asserts that $H_n(\psi)$ has a constant spectral gap. We note that in the two special cases considered above the Hamiltonian has a symmetry 
which enables an exact computation of the  spectral gap.  
Such symmetries are not available for a general state $\psi$.

The exact results summarized above may suggest that the Hamiltonian
$H_n(\psi)$ is gapless if $\psi$ is a maximally entangled state and
gapped otherwise. Theorem~\ref{thm:main} demonstrates that this  naive intuition 
is wrong. Indeed, choose $\psi$ proportional to $\sqrt{1-p} |0,0\rangle + \sqrt{p}|1,1\rangle$
for some $0<p<1$.
 Then the matrix $T_\psi$ has eigenvalues $\pm i\sqrt{p(1-p)}$ and Theorem~\ref{thm:main} implies
that $H_n(\psi)$ is gapless for all  $p$ as above. 

As a simple application of Theorem~\ref{thm:main}, we now map out the phase
diagram of $H_n(\psi)$ restricted to  the subset of {\em real} states $\psi\in \RR^2\otimes \RR^2$.  
Using the Schmidt decomposition 
any real two-qubit state can be  written as
\begin{equation}
|\psi_\pm\rangle=R(\theta_1)\otimes R(\theta_2) \left[ \sqrt{1-p} |0,0\rangle \pm \sqrt{p} |1,1\rangle \right],
\quad  \quad R(\theta)\equiv \left( \begin{array}{cc}
\cos{(\theta)} & \sin{(\theta)} \\
-\sin{(\theta)} & \cos{(\theta)} \\ 
\end{array} \right),
\label{realstates}
\end{equation}
for some $0\le p\le 1/2$ and  $\theta_i\in [0,\pi]$.
Since the spectrum of $H_n(\psi_\pm)$ is invariant under a simultaneous rotation of all qubits, 
the spectral gap depends only on two parameters $\theta_2-\theta_1$ and $p$.
One can easily check that the eigenvalues of $T_{\psi_+}$  have equal non-zero magnitude iff
\[
p>0 \quad \mbox{and} \quad \sin^2{(\theta_2-\theta_1)} \le \frac4{2+(p(1-p))^{-1/2}}.
\] 

On the other hand, the eigenvalues of $T_{\psi_-}$ have equal non-zero magnitude iff either $p=1/2$, or $\sin(\theta_2-\theta_1)=0$ and $0<p<1$.  These conditions determine the gapless phase of the model for the special case of real states $\psi$. The gapped and gapless regions for $\psi_+$ as a function of  $p,\theta_2-\theta_1$ are shown in Fig.~\ref{fig:real}.
A surprising feature is that the gapless phase occupies a finite volume in the parameter space.
In contrast, most of the models studied in physics only become gapless along phase transition lines which have zero measure in the parameter space.
\begin{figure}[h]
\centerline{\includegraphics[trim=10mm 64mm 21mm 70mm,clip,scale=.4]{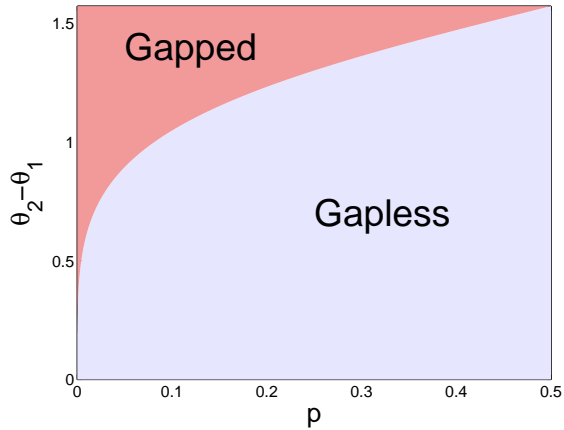}}
\caption{(Color Online) Phase diagram of the unfrustrated qubit chain 
$H_n(\psi)$  where $\psi$ has real amplitudes. We use the parameterization
$|\psi_\pm \rangle=R(\theta_1)\otimes R(\theta_2) \left[ \sqrt{1-p} |0,0\rangle \pm \sqrt{p} |1,1\rangle \right]$ and we show the gapped and gapless phases for $\psi_+$ as a function of $\theta_2-\theta_1\in [0,\pi/2]$ and $p\in [0,1/2]$. The phase diagram
is symmetric under flipping the sign of $\theta_2-\theta_1$ and under the transformation $\theta_2-\theta_1 \rightarrow \pi-(\theta_2-\theta_1)$. The sector corresponding to  $\psi_-$ is not shown since it has a simple description: $\psi_-$ is in the gapless phase iff either
$p=1/2$, or $\sin(\theta_2-\theta_1)=0$ and $0<p<1$.
\label{fig:real}
}
\end{figure}

While it is possible to construct frustration-free translation-invariant Hamiltonians on qubits which are composed of projectors of rank $2$ or $3$, one can show that there are only a handful of such examples. In the Appendix we describe them and for each we determine if the system is gapped or gapless. Taken together with our main result, this gives a complete classificaction of gapped and gapless phases for frustration-free translation-invariant qubit chains with nearest-neighbor interactions. Note that the restriction to Hamiltonians which are sums of projectors is without loss of generality.\footnote{Suppose instead we consider a frustration-free qubit chain $H=\sum_{i=1}^{n-1} h_{i,i+1}$ where $h$ has smallest eigenvalue zero (which can be arranged by adding a constant times the identity). Since $H$ is frustration-free, it has the same null space as $H'=\sum_{i=1}^{n-1} \Pi_{i,i+1}$ where $\Pi$ projects onto the range of $h$. Using this fact and the inequality $c H'\leq H\leq \|h\| H'$, where $c$ is the smallest non-zero eigenvalue of $h$, we see that $H$ is gapped if and only if $H'$ is.}

There are several open questions related to our work.  We do not know if the gapless phase of the 
model Eq.~(\ref{H}) can be connected to some known universality class of critical spin chains
and what is the actual scaling of the spectral gap in the gapless phase.
In particular, we do not expect that the  the upper bound $1/(n-1)$ on the spectral gap in Theorem~\ref{thm:main}
is tight.  It is a challenging open problem to generalize our results to qudits, i.e., to map out the phase diagram of translation-invariant frustration-free spin chains for $d$-dimensional spins with $d\ge 3$.
A natural analogue of the Hamiltonian defined in Eq.~(\ref{H}) is
\begin{equation}
H_n(\Pi)=\sum_{j=1}^{n-1} \Pi_{j,j+1},
\label{Hnpi}
\end{equation}
where $\Pi$ is a rank-$r$ projector acting on $\CC^d\otimes \CC^d$. 
It was shown by Movassagh et al~\cite{unfrustrated} that 
such chains are frustration-free for any $\Pi$ and $n$
whenever $r\le d^2/4$. Their results also suggest that the Hamiltonian $H_n(\Pi)$ may be generically frustrated for $r>d^2/4$ and $n$ sufficiently large. On the other hand, when $r>d^2/4$, the Hamiltonian is frustration-free for certain special choices of $\Pi$ and $n$; examples include the famous AKLT model~\cite{AKLT87} (with $d=3$, $r=5$), the model based on Motzkin paths~\cite{Bravyi2012criticality} (with $d=r=3$), and the ``Product Vacua with Boundary States'' models~\cite{pvbs} (with $r=(d-1)(d+2)/2$). In general there is no efficient algorithm for testing
whether $H_n(\Pi)$ is frustration-free for a given $n$
and there are indications that this problem may be computationally hard~\cite{GottesmanIraniTilings}. It is therefore natural to focus on the case $r\le d^2/4$,
where the chain is guaranteed to be frustration-free. A next step could be to investigate the phase diagram of a chain of qutrits ($d=3$) with projectors of rank $r=1,2$.

Finally, if one moves from frustration-free one-dimensional chains to general two-dimensional systems, the problem of
distinguishing between gapped and gapless phases of translation-invariant Hamiltonians becomes undecidable~\cite{CubittGap2015} which leaves no hope for mapping out the full phase diagram of such systems. 

\subsection{Sketch of the proof}
\label{sketch}

{\em Gapless phase.} In Section \ref{sec:gapless} we consider the case when  eigenvalues of $T_\psi$ have the same  non-zero magnitude
and prove  that  the spectral gap of $H_n(\psi)$ is at most
$1/(n-1)$.
The proof uses a result of Knabe~\cite{K88} relating the spectral gap 
of $H_n(\psi)$ to that of the following Hamiltonian
\begin{equation}
\label{Hcirc}
H_n^{\circ}(\psi)= H_n(\psi)+|\psi\rangle\langle \psi|_{n,1}
\end{equation}
which describes the chain with periodic boundary conditions.  The other ingredient in the proof is a detailed understanding of the ground state degeneracy of $H_n^{\circ}(\psi)$. We will see that $H_n^{\circ}(\psi)$ is always frustration-free, but its ground state degeneracy can be smaller than that of $H_n(\psi)$. In particular, if $T_\psi^n$ is not proportional to the identity operator then 
$H_n^{\circ}(\psi)$ has a two-dimensional ground space whereas $H_n(\psi)$ has an $n+1$-dimensional ground space.
Otherwise, if $T_\psi^n \sim I$, then both Hamiltonians $H_n(\psi)$ and $H_n^{\circ}(\psi)$
have ground space degeneracy $n+1$. 

We now sketch how these two ingredients can be used to prove the stated result. For ease of presentation we focus on the example considered above, where $\psi_+$ is of the form given in Eq.~\eqref{realstates}. Recall that the spectrum of $H_n(\psi_+)$ depends only on the two parameters $\theta_2-\theta_1, p$. We can plot the ground state degeneracy of $H_n^{\circ}(\psi_+)$ as a function of these two parameters. As described in the previous paragraph, this function takes the value $2$ or $n+1$ depending on whether or not $T_{\psi_+}^n$ is proportional to the identity. The black lines in Figures \ref{fig:Tncurves}(a) and \ref{fig:Tncurves}(b) show the curves where the ground state degeneracy is equal to $n+1$, for $n=10$ and $n=50$ respectively. Everywhere else ($0<p\leq \frac{1}{2}$ and $\theta_2-\theta_1\in [0,\pi/2]$) the ground state degeneracy is $2$. For reference we show the red and blue regions from Figure \ref{fig:real}, which correspond to the gapped and gapless phases of the open boundary chain in the thermodynamic limit. 
As one might guess by looking at the Figure, the black curves become dense in the blue 
region when $n\rightarrow\infty$. If we consider a point $\psi_+$ in this blue region which does not sit directly on one of the black curves then the eigenvalue gap of $H_n^\circ (\psi_+)$ is equal to its third smallest eigenvalue. However as $n\rightarrow\infty$ this point $\psi_+$ becomes arbitrarily close to a black curve, where the Hamiltonian has ground state degeneracy $n+1$ and third smallest eigenvalue equal to zero. Using a bound on its derivative one can show that as a result the third eigenvalue of $H_n^\circ (\psi_+)$ takes arbitrarily small values as $n\rightarrow\infty$. Finally, Knabe's result implies that this can occur only if the spectral gap of $H_n(\psi_+)$ is at most $1/(n-1)$.  This argument has to be modified slightly for states $\psi_+$ which, for some $n$, lie directly on one of the black curves and for general (complex) states $\psi$.

\begin{figure}[h]
\subfloat[(a)][]{
\includegraphics[trim=10mm 64mm 21mm 70mm,clip,scale=.4]{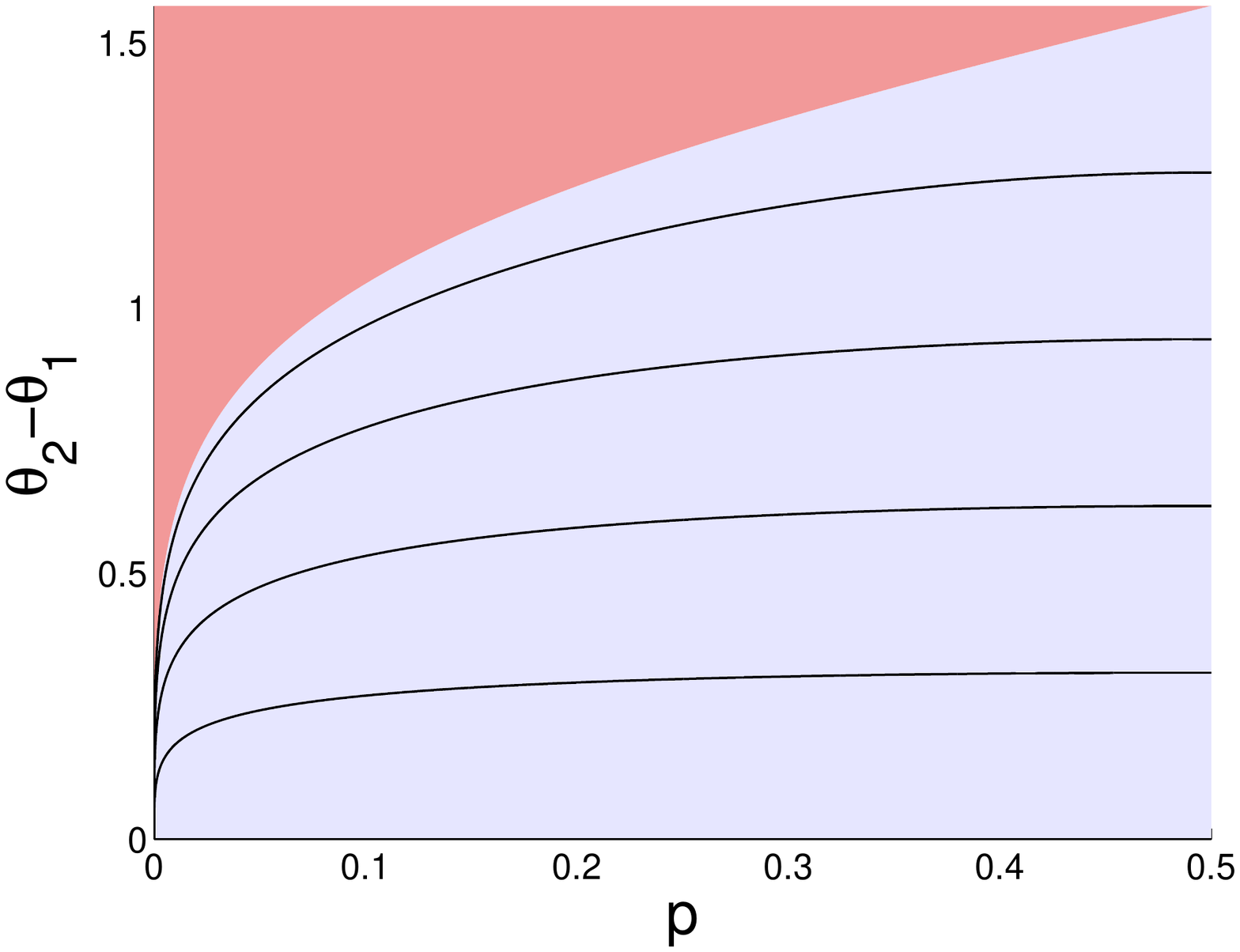}
}
\hspace{0.9cm}
\subfloat[(b)][]{
\includegraphics[trim=10mm 64mm 21mm 70mm,clip,scale=.4]{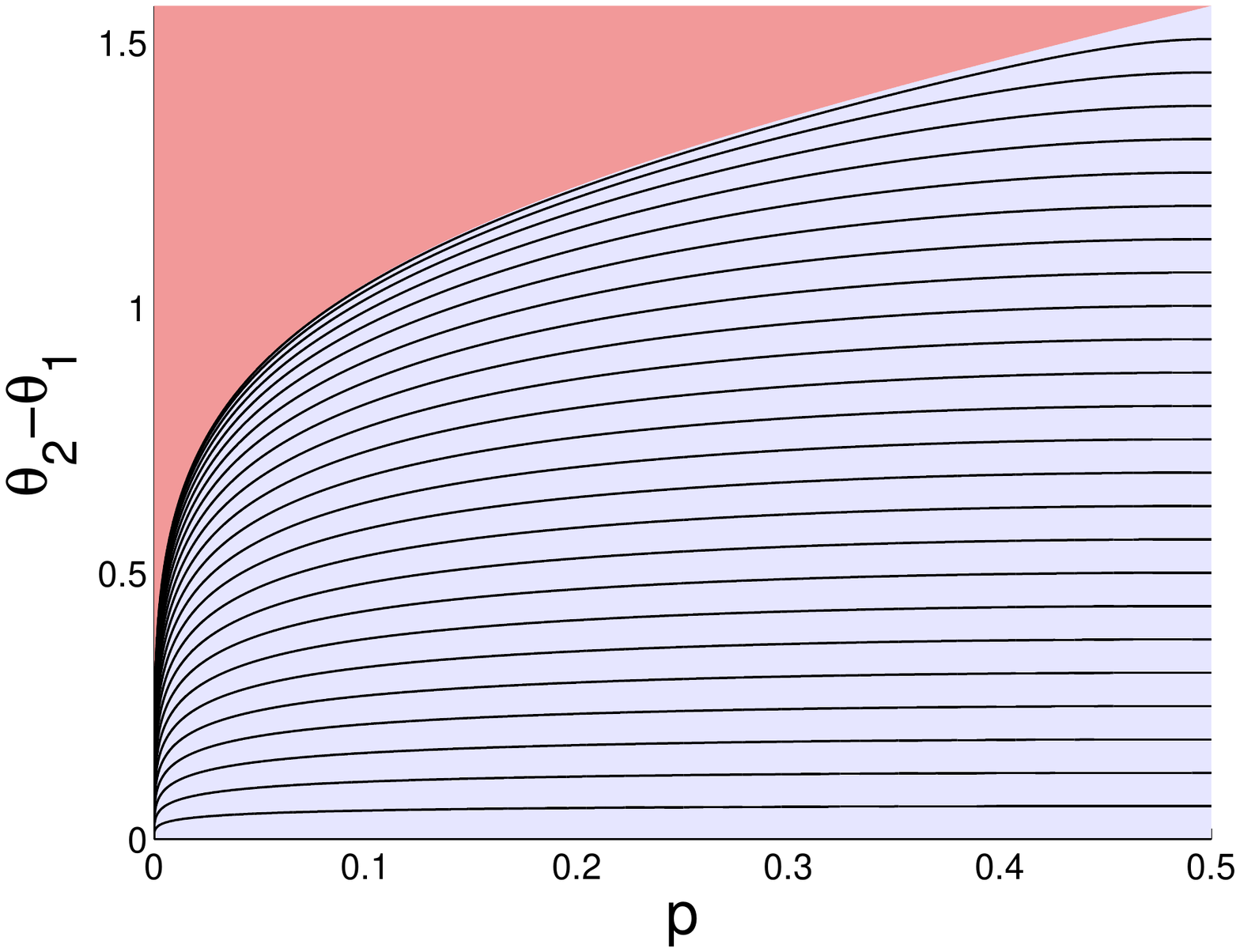}
}
\caption{(Color Online) Depiction of the ground state degeneracy of $H_n^\circ(\psi_+)$ where $\psi_+$ is of the form given in Eq.~\eqref{realstates}. The black lines are curves in the $(\theta_2-\theta_1,p)$ plane where $H_n^\circ(\psi)$ has ground state degeracy equal to $n+1$. Here we plot the curves for (a) $n=10$ and (b) $n=50$. For any point which does not lie on one of these curves (for $0<p\leq \frac{1}{2}$ and $\theta_2-\theta_1\in [0,\pi/2]$), the ground state degeneracy
of $H_n^\circ(\psi)$   is  two. We also show the gapped (red) and gapless (blue) regions for the chain with open boundary conditions. As $n\rightarrow\infty$ the black curves become dense in the blue region.}
\label{fig:Tncurves}
\end{figure}
{\em Gapped phase.}
In Sections~\ref{sec:mono},\ref{sec:gapped} 
we prove that the spectral gap of $H_n(\psi)$ is lower bounded by a
positive  constant 
independent of $n$
if the eigenvalues of $T_\psi$ have distinct magnitudes or if  both eigenvalues are equal to zero. 
Our starting point is a general method for bounding the spectral gap of frustration-free spin chains
due to Nachtergaele~\cite{Nachtergaele1996},
see Lemma~\ref{lemma:N} in Section~\ref{sec:gapped}.
To apply this method one has to manipulate expressions that involve the projector 
onto the ground space of $H_n(\psi)$ which we denote $G_n$.
The main technical difficulty that we had to overcome is a lack of an
explicit expression for $G_n$ which prevents us from straightforwardly applying Nachtergaele's bound.
Our proof is therefore  indirect and is based on establishing some features of the ground space which allow us to control $G_n$ sufficiently well. The key technical ingredient is  
a new operator inequality which expresses a monotonicity of the ground space projectors under the partial trace.
More precisely, we show that 
\begin{equation}
\label{Mintro}
\trace{}_n {(G_n)}\geq G_{n-1}.
\end{equation}
where the partial trace is taken over the $n$-th qubit. 
Using the fact that the Hamiltonians $H_n(\psi)$ are frustration-free one can easily
check that $\trace{}_n {(G_n)}$ and $G_{n-1}$ have the same support, that is,
Eq.~(\ref{Mintro}) is equivalent to saying that all non-zero eigenvalues of $\trace{}_n {(G_n)}$
are at least one. Our proof of this monotonicity property, presented in Section~\ref{sec:mono},
applies to general frustration-free chains  of qubits
composed of  rank-$1$ projectors. Neither translation-invariance nor the conditions 
of Theorem~\ref{thm:main} are needed for the proof of Eq.~(\ref{Mintro}). We note that Eq.~(\ref{Mintro}) differs from the well-known monotonicity property $G_n\le G_{n-1}\otimes I$. The latter 
follows trivially from the fact that $H_n(\psi)$ is frustration-free, whereas  Eq.~(\ref{Mintro}) holds for more subtle reasons. 

We proceed by showing that a quantum state which is completely mixed over the ground space of the $n$-qubit chain (i.e., proportional to the projector $G_n$)
exhibits an exponential decay of correlations for certain local observables, see
Lemma~\ref{lemma:decay} in Section~\ref{subs:corrfunc}. In Section~\ref{subs:regionexclusion} we use the decay of correlations and Eq.~(\ref{Mintro}) to prove several ``Region Exclusion" lemmas. Here we consider  a partition of the chain
into three or more regions and define local ground space projectors associated with each region.
Loosely speaking, the Region Exclusion lemmas state that the global ground space projector
associated with the entire chain can be approximated by a certain operator built from the
local  ground space projectors. The latter are defined on subsets of qubits where some of the chosen
regions are excluded from the chain (hence the name of the lemmas).
By repeatedly  applying the Region Exclusion lemmas
in Section~\ref{subs:finalgappedproof}
we arrive at the condition used in Nachtergaele's bound, thus proving a constant
lower bound on the gap.


\section{Structure of the ground space}
\label{sec:ground space}

In this section we describe the ground spaces of $H_n(\psi)$ and $H_n^\circ(\psi)$, the Hamiltonians for the chain with open and periodic boundary conditions respectively (defined in Eqs.~(\ref{H},\ref{Hcirc})). 

\subsection{Open boundary conditions}
\label{subs:open}
We first consider the Hamiltonian $H_n(\psi)$ for the chain with open boundary conditions. We begin with the simple case where $\psi=\psi_1\otimes \psi_2$ is a product state. It is always possible to choose 
the basis states $|0\rangle$ and $|1\rangle$ so that 
\[
|\psi\rangle =|1\rangle\otimes |v^\perp\rangle 
\]
where
\[
|v\rangle=c|0\rangle+s|1\rangle, \quad  |v^\perp\rangle = s^*|0\rangle - c^* |1\rangle, \quad
\mbox{and} \quad |c|^2+|s|^2=1.
\]
For each $i=1,\ldots,n$ define an
 $n$-qubit state $|g_i\rangle=|0^{i-1} v^\perp v^{n-i}\rangle$.
Also define $|g_0\rangle=|v^{\otimes n}\rangle$.
For example, choosing $n=4$ one gets
\[
\ba{rcrlllll}
|g_0\rangle &=&|&v&v&v&v&\rangle,\\
|g_1\rangle&=&|& v^\perp & v & v& v &\rangle,\\
|g_2\rangle&=&|& 0 & v^\perp & v & v &\rangle,\\
|g_3\rangle&=&| & 0 & 0 & v^\perp & v& \rangle,\\
|g_4\rangle&=&| &0 & 0 & 0 & v^\perp & \rangle.\\
\ea
\]
Loosely speaking, the states $g_i$ can be viewed as ``domain walls" 
where $|0\rangle$ and $|v\rangle$ represent 
two different values of a magnetization.
By direct inspection we see that $g_0,\ldots,g_n$ are pairwise orthogonal
ground states of $H_n(\psi)$. 
\begin{prop}
\label{prop:prod_basis}
Suppose $s\ne 0$. Then 
the states $g_0,\ldots,g_n$ form an orthonormal basis for the ground space of $H_n(\psi)$.
\end{prop}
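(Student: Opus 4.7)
The plan is to verify that $|g_0\rangle,\ldots,|g_n\rangle$ are orthonormal ground states, and then show by induction on $n$ that the ground space $G_n$ of $H_n(\psi)$ has dimension $n+1$; this forces the $n+1$ orthonormal vectors to span $G_n$.

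\emph{Orthonormality and ground-state property.} Each $|g_i\rangle$ is a tensor product of unit vectors, hence has unit norm. For orthogonality, note that for $i\ge 1$ the vectors $g_0$ and $g_i$ differ at position $i$ (one has $v$, the other $v^\perp$), while for $1\le i<j\le n$ the vectors $g_i$ and $g_j$ differ at position $j$ ($v$ versus $v^\perp$); since $\langle v|v^\perp\rangle=0$, the full inner product vanishes in each case. To verify $|g_i\rangle\in G_n$, observe that every consecutive pair of qubits in any $g_i$ has one of the forms $(0,0)$, $(0,v^\perp)$, $(v^\perp,v)$, or $(v,v)$, and $\langle\psi|=\langle 1|\otimes\langle v^\perp|$ annihilates each: the first two via $\langle 1|0\rangle=0$, the last two via $\langle v^\perp|v\rangle=0$.

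\emph{Induction for $\dim G_n=n+1$.} The base case $n=1$ is immediate: $H_1(\psi)=0$ so $G_1=\CC^2$ has dimension $2$. For the inductive step, decompose $|\phi\rangle\in G_n$ along qubit 1 as $|\phi\rangle=|0\rangle|\phi_0\rangle+|1\rangle|\phi_1\rangle$, with $|\phi_0\rangle,|\phi_1\rangle$ on qubits $2,\ldots,n$. Applying $\langle\psi|_{1,2}=\langle 1|_1\otimes\langle v^\perp|_2$ yields $\langle v^\perp|_2|\phi_1\rangle$, which must vanish; equivalently $|\phi_1\rangle=|v\rangle|\beta\rangle$ for some $(n-2)$-qubit state $|\beta\rangle$. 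The remaining edge projectors act only on qubits $2,\ldots,n$, and since the summands $|0\rangle|\phi_0\rangle$ and $|1\rangle|v\rangle|\beta\rangle$ lie in orthogonal qubit-1 sectors, both must be annihilated independently. Identifying qubits $2,\ldots,n$ with an $(n-1)$-qubit chain, this gives $|\phi_0\rangle\in G_{n-1}$ and $|v\rangle|\beta\rangle\in G_{n-1}$, so $\dim G_n=\dim G_{n-1}+\dim W$, where $W=\{|\alpha\rangle\in G_{n-1}:\langle v^\perp|_1|\alpha\rangle=0\}$.

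\emph{Computing $\dim W$.} Expand $|\alpha\rangle=\sum_{i=0}^{n-1}c_i|g_i^{(n-1)}\rangle$ in the inductive basis. Using $\langle v^\perp|v\rangle=0$ and $\langle v^\perp|0\rangle=s$, one computes $\langle v^\perp|_1|g_0^{(n-1)}\rangle=0$, $\langle v^\perp|_1|g_1^{(n-1)}\rangle=|g_0^{(n-2)}\rangle$, and $\langle v^\perp|_1|g_i^{(n-1)}\rangle=s\cdot|g_{i-1}^{(n-2)}\rangle$ for $i\ge 2$ (the last because the first qubit of $|g_i^{(n-1)}\rangle$ is $|0\rangle$ for $i\ge 2$). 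Thus $\langle v^\perp|_1|\alpha\rangle=c_1|g_0^{(n-2)}\rangle+s\sum_{j=1}^{n-2}c_{j+1}|g_j^{(n-2)}\rangle$, and linear independence of the orthonormal vectors $|g_0^{(n-2)}\rangle,\ldots,|g_{n-2}^{(n-2)}\rangle$ combined with $s\ne 0$ forces $c_1=\cdots=c_{n-1}=0$. Hence $W=\mathrm{span}(|g_0^{(n-1)}\rangle)$ is one-dimensional, giving $\dim G_n=n+1$, and the $n+1$ orthonormal vectors $|g_0\rangle,\ldots,|g_n\rangle\in G_n$ form an orthonormal basis.

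The main obstacle is this dimension count for $W$: the hypothesis $s\ne 0$ enters essentially, since if $s=0$ then $|v\rangle=|0\rangle$ and many more ground states of the $(n-1)$-qubit chain would satisfy the qubit-1 condition, blowing up $\dim W$ and invalidating the count.
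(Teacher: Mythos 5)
Your proof is correct, but it takes a genuinely different route from the paper's. The paper argues in one shot: it expands an arbitrary ground state in the (non-orthogonal) product basis $|\tilde{x}\rangle$ with $\tilde{0}=0$, $\tilde{1}=v$, computes ${}_{i,i+1}\langle\psi|g\rangle$ explicitly, and observes that the constraint kills every coefficient $a_x$ for which $x$ contains a consecutive pair $(1,0)$ (the factor $s^2$ is where $s\neq 0$ enters); the surviving strings are exactly the $n+1$ ``sorted'' strings $0^i 1^{n-i}$, giving $\dim\calG_n\le n+1$ directly. You instead run an induction on $n$, peeling off the first qubit to get the recursion $\dim\calG_n=\dim\calG_{n-1}+\dim W$ with $W=\{\alpha\in\calG_{n-1}:\langle v^\perp|_1|\alpha\rangle=0\}$, and then use the inductive basis to show $\dim W=1$. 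Your recursion is essentially the paper's Proposition~\ref{dimincreasing} made exact for this product-state case, and it isolates cleanly why $s\neq 0$ matters (the overlap $\langle v^\perp|0\rangle=s$ is what forces $c_2=\cdots=c_{n-1}=0$). The paper's argument is shorter and avoids induction; yours is a bit longer but self-contained and makes the ``one new ground state per added qubit'' structure explicit. The only cosmetic issue is the degenerate bottom of the induction (at $n=2$ the vectors $g_j^{(0)}$ are scalars), which works out but deserves a sentence.
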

\begin{proof}
It suffices to show that the ground space of $H_n(\psi)$
has dimension at most $n+1$. 
Define $\tilde{0} \equiv 0$ and $\tilde{1} \equiv v$.
Given any binary string $x=(x_1,\ldots,x_n)$, define 
$\tilde{x}\equiv (\tilde{x}_1,\ldots,\tilde{x}_n)$. Note that $|\tilde{0}\rangle$
and $|\tilde{1}\rangle$ are linearly independent since $s\ne 0$. Therefore
the  states $|\tilde{x}\rangle$, $x\in \{0,1\}^n$ 
form a basis (non-orthonormal) for 
the Hilbert space of $n$ qubits.
Suppose $|g\rangle$ is a ground state of $H_n(\psi)$. Then 
$|g\rangle=\sum_{x} a_x |\tilde{x}\rangle$
for some complex coefficients $a_x$.
A simple calculation shows that 
\[
{}_{i,i+1}\langle\psi|g\rangle = s^2 \sum_{x\, : \, (x_i,x_{i+1})=(1,0)}\;
a_x |\tilde{x}_1,\ldots,\tilde{x}_{i-1},\tilde{x}_{i+2},\ldots,\tilde{x}_n\rangle
\]
for any $i=1,\ldots,n-1$. On the other hand, ${}_{i,i+1}\langle\psi|g\rangle=0$
since $|g\rangle$ is a ground state of $H_n(\psi)$. This is possible only if 
$a_x=0$ for all strings $x$ that contain at least one consecutive pair $(1,0)$. 
Thus $|g\rangle$ belongs to a subspace spanned by vectors
$|0^iv^{n-i}\rangle$, where $i=0,\ldots,n$.
This shows that the ground subspace of $H_n(\psi)$ has dimension at most $n+1$.
\end{proof}

Now consider the case where $\psi$ is entangled. In this case we can still construct the ground space of $H_n(\psi)$ although, in contrast with the product state case, we are not able to obtain an orthonormal basis. The matrix $T_\psi$ defined in Eq.~(\ref{Tpsi})  plays a crucial role.  

One can easily check that $\det{(T_{\psi})}\neq0$ whenever $\psi$ is entangled
and
\begin{equation}
\label{Teps}
\langle \psi| (I\otimes T_\psi )= \det{(T_\psi)} \langle \epsilon|,
\end{equation}
 where $|\epsilon\rangle=|0,1\rangle-|1,0\rangle$ is the antisymmetric
state of two qubits. 
This shows that  the ground space of $H_2(\psi)=|\psi\rangle\langle \psi|_{1,2}$ is the image of the $2$-qubit symmetric subspace under the map $1\otimes T_\psi$. A similar characterization holds for $H_n(\psi)$ with $n>2$. In particular, define
\begin{equation}
\label{Talldef}
T^{\mathrm{all}}_\psi=I\otimes T_{\psi}\otimes T_{\psi}^2\otimes \ldots \otimes T_{\psi}^{n-1}.
\end{equation}
The following Proposition is a special case of a result presented in \cite{B06} (and has been used previously in, e.g., \cite{unfrustratedarea}).
\begin{prop}
\label{gsdegen}
Suppose $\det{(T_{\psi})}\neq0$. Then the ground space of $H_n(\psi)$ is the image of the $n$-qubit symmetric subspace under the linear map $T^{\mathrm{all}}_\psi$.
\end{prop}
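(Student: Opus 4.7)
The plan is to reduce the ground space condition on $H_n(\psi)$ to an antisymmetry condition by exploiting the invertibility of $T_\psi$ afforded by the hypothesis $\det(T_\psi)\ne 0$. The central step is a commutation identity expressing $\langle\psi|_{i,i+1}\,T^{\mathrm{all}}_\psi$ in terms of $\langle\epsilon|_{i,i+1}$; once this is in hand, both the forward and reverse inclusions follow almost immediately.

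First I would establish the local identity
\[
\langle\psi|_{i,i+1}\, T^{\mathrm{all}}_\psi \;=\; (\det T_\psi)^i\, \langle\epsilon|_{i,i+1}\,\otimes \bigotimes_{k\ne i,i+1}T_\psi^{k-1}, \qquad 1\le i\le n-1.
\]
On qubits $i,i+1$ the map $T^{\mathrm{all}}_\psi$ acts as $T_\psi^{i-1}\otimes T_\psi^i$, which I rewrite as $(I\otimes T_\psi)(T_\psi^{i-1}\otimes T_\psi^{i-1})$. Applying Eq.~(\ref{Teps}) converts $\langle\psi|(I\otimes T_\psi)$ into $\det(T_\psi)\langle\epsilon|$, and the elementary fact $\langle\epsilon|(A\otimes A)=(\det A)\langle\epsilon|$ applied with $A=T_\psi^{i-1}$ contributes the remaining factor $(\det T_\psi)^{i-1}$.

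For the forward inclusion, let $|\chi\rangle$ be permutation-symmetric. Invariance under the transposition of sites $i$ and $i+1$ gives $\langle\epsilon|_{i,i+1}|\chi\rangle=0$, and the displayed identity then yields $\langle\psi|_{i,i+1}T^{\mathrm{all}}_\psi|\chi\rangle=0$ for every $i$, so $T^{\mathrm{all}}_\psi|\chi\rangle$ is annihilated by every local term of $H_n(\psi)$ and is therefore a ground state. For the reverse inclusion, since $\det(T_\psi)\ne 0$ both $T^{\mathrm{all}}_\psi$ and the operator $M_i:=\bigotimes_{k\ne i,i+1}T_\psi^{k-1}$ are invertible. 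Writing any ground state as $|\phi\rangle=T^{\mathrm{all}}_\psi|\chi\rangle$ and inserting into the identity, after dividing by the nonzero scalar $(\det T_\psi)^i$ I obtain $(\langle\epsilon|_{i,i+1}\otimes M_i)|\chi\rangle=0$; invertibility of $M_i$ then forces $\langle\epsilon|_{i,i+1}|\chi\rangle=0$ for every $i$. This means $|\chi\rangle$ has no antisymmetric component on any adjacent pair, so it is fixed by every adjacent transposition, and since these generate $S_n$, $|\chi\rangle$ lies in the symmetric subspace.

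I expect no genuine obstacle beyond bookkeeping; the only subtle point is the factorization $T_\psi^{i-1}\otimes T_\psi^i=(I\otimes T_\psi)(T_\psi^{i-1}\otimes T_\psi^{i-1})$, which places a lone $I\otimes T_\psi$ on the right where Eq.~(\ref{Teps}) can act. No separate dimension count is required: invertibility of $T^{\mathrm{all}}_\psi$ automatically makes the map a linear bijection between the $(n+1)$-dimensional symmetric subspace and the ground space, so the two inclusions already identify the two spaces.
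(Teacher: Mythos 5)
Your proof is correct and follows essentially the same route as the paper: both rest on Eq.~(\ref{Teps}) together with the identity $(A\otimes A)|\epsilon\rangle=\det(A)|\epsilon\rangle$ to convert each local constraint $\langle\psi|_{i,i+1}$ into the antisymmetry constraint $\langle\epsilon|_{i,i+1}$ after applying $T^{\mathrm{all}}_\psi$, and then use invertibility of $T^{\mathrm{all}}_\psi$ to identify the ground space with the image of the symmetric subspace. The paper phrases this as the operator conjugation $(T_{\psi}^{\mathrm{all}})^\dagger|\psi\rangle\langle\psi|_{j,j+1}T_{\psi}^{\mathrm{all}}=|\epsilon\rangle\langle\epsilon|_{j,j+1}\otimes B_{j}$ rather than as a covector identity, but the content is the same.
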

\begin{proof}
Using Eq.~\eqref{Teps} and the fact that  $M\otimes M|\epsilon\rangle=\det{(M)}|\epsilon\rangle$ we get
\begin{equation}
(T_{\psi}^{\mathrm{all}})^\dagger|\psi\rangle\langle\psi|_{j,j+1}T_{\psi}^{\mathrm{all}}=|\epsilon\rangle\langle\epsilon|_{j,j+1}\otimes B_{j}\label{eq:T_all_eqn}
\end{equation}
for each $j=1,\ldots,n-1$, where $B_{j}$ is a positive operator
acting on qubits in the set $[n]\setminus\{j,j+1\}$. From Eq.~\eqref{eq:T_all_eqn}
we see that the nullspace of $(T_{\psi}^{\mathrm{all}})^\dagger H_n(\psi)T_{\psi}^{\mathrm{all}}$
is equal to the symmetric subspace. The result follows since $T_{\psi}^{\mathrm{all}}$ is invertible.
\end{proof}
Combining Propositions~\ref{prop:prod_basis},\ref{gsdegen} 
and noting that the symmetric subspace of $n$ qubits has dimension $n+1$,
we conclude that the ground space of $H_n(\psi)$ has dimension $n+1$
for almost any choice of $\psi$ (the only exception is when $s=0$ and $\psi$ is a symmetric product state).

\subsection{Periodic boundary conditions}
\label{subs:periodic}

We now consider the Hamiltonian $H_n^\circ(\psi)$ for the chain with periodic boundary conditions.
It is well-known  that  $H^{\circ}_n(\psi)$ is frustration-free for any choice of $\psi$,
see for instance~\cite{B06,laumann2010phase,BMR2010}. However in this paper we will only need to deal with periodic boundary conditions  in the 
case where $\psi$ is an entangled state. Accordingly, in this section we 
assume that  $\det{(T_{\psi})} \neq 0$. For any such $\psi$ we compute the dimension of the zero energy ground space
of  $H^{\circ}_n(\psi)$. We will see that it takes different values depending on the choice of $\psi$. This contrasts with the open boundary chain which has ground space dimension $n+1$ whenever $\psi$ is entangled.

Here and throughout the paper we use the symbol $\sim$ to mean proportional to.
\begin{prop}
\label{gs_counting} 
Suppose $T_{\psi}^{n}\sim I$. Then 
the   ground space of  $H^{\circ}_n(\psi)$ has dimension $n+1$.
Otherwise,  $H^{\circ}_n(\psi)$ has a two-fold degenerate
ground space. 
\end{prop}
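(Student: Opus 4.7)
The plan is to use the explicit description $G_n = T_\psi^{\mathrm{all}}(\mathrm{Sym}_n)$ of the open-boundary ground space from Proposition~\ref{gsdegen} (writing $\mathrm{Sym}_n$ for the $n$-qubit symmetric subspace) and then analyze the extra constraint imposed by the additional projector $|\psi\rangle\langle\psi|_{n,1}$ appearing in $H_n^\circ$. Any ground state of $H_n^\circ(\psi)$ has the form $T_\psi^{\mathrm{all}}|\phi\rangle$ with $|\phi\rangle\in\mathrm{Sym}_n$ and must satisfy the single additional equation $\langle\psi|_{n,1}T_\psi^{\mathrm{all}}|\phi\rangle=0$. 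I therefore define the linear map $L:\mathrm{Sym}_n\to(\mathbb{C}^2)^{\otimes(n-2)}$ acting on qubits $2,\ldots,n-1$ by $L|\phi\rangle = \langle\psi|_{n,1}T_\psi^{\mathrm{all}}|\phi\rangle$; since $T_\psi^{\mathrm{all}}$ is invertible, the dimension of the ground space of $H_n^\circ(\psi)$ is exactly $\dim\ker L$, and the task becomes showing this equals $n+1$ when $T_\psi^n\sim I$ and $2$ otherwise.

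The first step is to rewrite $L$ in a form that isolates $T_\psi^n$. Using Eq.~(\ref{Teps}) together with the identity $\langle\epsilon|(M\otimes M)=\det(M)\langle\epsilon|$, a short manipulation yields
\[
\langle\psi|_{n,1}T_\psi^{\mathrm{all}} \;=\; \langle\epsilon|_{n,1}(T_\psi^n)_n \otimes (T_\psi)_2\otimes(T_\psi^2)_3\otimes\cdots\otimes(T_\psi^{n-2})_{n-1},
\]
so all of the ``cyclic'' content is absorbed into the single factor $T_\psi^n$ acting on qubit $n$. If $T_\psi^n\sim I$, then $\langle\epsilon|_{n,1}(T_\psi^n)_n$ is proportional to $\langle\epsilon|_{n,1}$, and the latter annihilates every state symmetric under the swap of qubits $n$ and $1$---in particular every $|\phi\rangle\in\mathrm{Sym}_n$. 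Hence $L\equiv 0$, $\ker L=\mathrm{Sym}_n$, and the ground space of $H_n^\circ(\psi)$ has dimension $n+1$, as claimed.

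When $T_\psi^n\not\sim I$, I evaluate $L$ on the coherent-state spanning set $\{v^{\otimes n}:v\in\mathbb{C}^2\}$ of $\mathrm{Sym}_n$. A direct calculation from the formula above gives
\[
L(v^{\otimes n}) \;=\; c(v)\cdot M(v^{\otimes(n-2)}), \qquad c(v):=\det[\,T_\psi^n v \,\mid\, v\,],
\]
where $M:=T_\psi\otimes T_\psi^2\otimes\cdots\otimes T_\psi^{n-2}$ is invertible and $c(v)$ is a homogeneous quadratic form that vanishes precisely at eigenvectors of $T_\psi^n$. Because $T_\psi^n\not\sim I$, $c\not\equiv 0$ and has at most two zeros in $\mathbb{P}(\mathbb{C}^2)$. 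Since $M$ is injective, $\dim\mathrm{Im}(L)=\dim\mathrm{span}\{c(v)v^{\otimes(n-2)}:v\in\mathbb{C}^2\}$. Choosing $n-1$ distinct points $v_1,\ldots,v_{n-1}$ with $c(v_i)\neq 0$, the vectors $v_i^{\otimes(n-2)}$ are linearly independent in $\mathrm{Sym}_{n-2}\cong\mathbb{C}^{n-1}$ by a Vandermonde argument (any $m+1$ distinct points of $\mathbb{P}^1$ lift to linearly independent elements of $\mathrm{Sym}_m$), so $c(v_i)v_i^{\otimes(n-2)}$ span all of $\mathrm{Sym}_{n-2}$. Therefore $\mathrm{rank}\,L=n-1$ and $\dim\ker L=2$.

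The principal technical point is the second case: one needs $\mathrm{rank}\,L$ to equal $n-1$ exactly, not merely to be bounded above or below. The clean factorization through the invertible $M$ and the scalar $c(v)$ reduces the rank computation to a statement about a quadratic form on $\mathbb{P}^1$ and the Veronese embedding, so the argument uses only $c\not\equiv 0$ and invertibility of $M$; in particular it handles uniformly both diagonalizable $T_\psi$ and nontrivial Jordan blocks (for which $T_\psi^n\not\sim I$ is automatic) without needing a separate limiting or continuity argument.
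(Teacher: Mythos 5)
Your argument is correct. The setup coincides with the paper's: both reduce the problem to counting symmetric states annihilated by the single extra constraint coming from $|\psi\rangle\langle\psi|_{n,1}$ after conjugating by $T_\psi^{\mathrm{all}}$, and both dispose of the case $T_\psi^n\sim I$ by observing that the constraint degenerates to $\langle\epsilon|_{n,1}$, which kills the whole symmetric subspace. Where you genuinely diverge is in the case $T_\psi^n\not\sim I$. The paper identifies the two-dimensional kernel \emph{explicitly}, splitting into subcases according to whether $T_\psi^n$ is diagonalizable (ground states $|v_1\rangle^{\otimes n}$, $|v_2\rangle^{\otimes n}$) or defective (ground states $|0\rangle^{\otimes n}$ and the W state), and in each subcase it asserts that these are the only symmetric states orthogonal to the relevant two-qubit state $|\phi\rangle$ on every pair. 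You instead compute $\operatorname{rank} L$ directly: the factorization $L(v^{\otimes n})=c(v)\,M(v^{\otimes(n-2)})$ with $c(v)=\det[\,T_\psi^n v\mid v\,]$ a not-identically-zero quadratic form and $M$ invertible, combined with the Veronese/Vandermonde fact that any $n-1$ projectively distinct points give independent vectors in $\mathrm{Sym}_{n-2}$, yields $\operatorname{rank} L=n-1$ and hence $\dim\ker L=2$ uniformly, with no case split and no need to exhibit the kernel. What the paper's route buys is an explicit description of the periodic ground states (which is physically informative, though not used later --- only the dimension count enters the gapless-phase argument); what your route buys is a single clean rank computation that treats the diagonalizable and Jordan-block cases on the same footing and replaces the ``one can easily check'' steps by a standard linear-independence fact. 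Both are complete proofs of the stated dimension count.
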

\begin{proof}
Note that $H^{\circ}_n(\psi)$ has the same rank as
\begin{equation}
T_{\psi}^{\mathrm{all}\dagger}H^{\circ}_n(\psi)T_{\psi}^{\mathrm{all}}=T_{\psi}^{\mathrm{all}\dagger}H_n(\psi)T_{\psi}^{\mathrm{all}}+T_{\psi}^{\mathrm{all}\dagger}|\psi\rangle\langle\psi|_{n,1}T_{\psi}^{\mathrm{all}}.\label{eq:Q_periodic}
\end{equation}
where $T^{\mathrm{all}}_\psi$ is given by Eq.~\eqref{Talldef}.  Both terms on the right-hand side are positive semidefinite and, by Proposition \ref{gsdegen}, the nullspace of the first term is the symmetric subspace. If $T_{\psi}^{n}\sim I$
then the second term in Eq.~\eqref{eq:Q_periodic} can be written
as $|\epsilon\rangle\langle\epsilon|_{n,1}\otimes B_{n}$ where $B_{n}$
is positive and $|\epsilon\rangle=|0,1\rangle-|1,0\rangle$. Since this term annihilates every state in the
symmetric subspace we see that in this case the nullspace of Eq.~\eqref{eq:Q_periodic} is $(n+1)$-dimensional. 

If $T_{\psi}^{n}$ is not proportional to the identity we show that
there are exactly two states in the symmetric subspace which are annihilated
by the second term in Eq.~\eqref{eq:Q_periodic}. We consider
two cases depending on whether or not $T_{\psi}^{n}$ is defective (has only one eigenvector).

First consider the case where $T_{\psi}^{n}$ has two linearly independent
eigenvectors $|v_{1}\rangle,|v_{2}\rangle$. 
Note that the last term in Eq.~(\ref{eq:Q_periodic}) projects
qubits $n,1$ onto 
a state 
\[
|\phi\rangle=(T_{\psi}^{n-1\dagger}\otimes I )|\psi\rangle
\sim  (T_{\psi}^{n\dagger}\otimes I)|\epsilon\rangle=\left(T_{\psi}^{n\dagger}|0\rangle\right)|1\rangle-\left(T_{\psi}^{n\dagger}|1\rangle\right)|0\rangle.
\]
The last equality  makes it clear that $|\phi\rangle$
and $|\epsilon\rangle$ are linearly independent whenever $T_{\psi}^{n}$
is not proportional to the identity. Thus
the nullspace of Eq.~(\ref{eq:Q_periodic}) is spanned by $n$-qubit
symmetric states that are orthogonal to 
$|\phi\rangle$ on any pair of qubits.
One can easily check that 
the only two-qubit symmetric states orthogonal to  $|\phi\rangle$
are $|v_1 \otimes v_1\rangle$ and $|v_2\otimes v_2\rangle$.
Likewise, one can  check that the only 
$n$-qubit symmetric states orthogonal to $|\phi\rangle$ on any pair of qubits 
are linear combinations of $|v_1\rangle^{\otimes n}$ and $|v_2\rangle^{\otimes n}$.
Thus Eq.~(\ref{eq:Q_periodic})  has a two-dimensional nullspace
and therefore the same is true for $H_n^\circ(\psi)$.

Next suppose $T_{\psi}^{n}$ is defective, i.e., has only one eigenvector.
Let us work in a basis where $|0\rangle$ is this eigenvector, so
\[
T_{\psi}^{n}=\left(\begin{array}{cc}
b & a\\
0 & b
\end{array}\right)
\]
 for some $a,b\in\mathbb{C}$ with $a\neq0$. Then 
 the last term in Eq.~(\ref{eq:Q_periodic}) projects onto a state
\[
|\phi\rangle=(T_{\psi}^{n-1\dagger}\otimes I )|\psi\rangle \sim
(T_{\psi}^{n\dagger}\otimes I )|\epsilon\rangle=b^{\ast}|\epsilon\rangle+a^{*}|11\rangle.
\]
Since $a\neq 0$, the states $|\phi\rangle$ and $|\epsilon\rangle$ span
the same subspace as $|11\rangle$ and $|\epsilon\rangle$.
Therefore the nullspace of Eq.~(\ref{eq:Q_periodic}) is spanned by $n$-qubit
symmetric states that are orthogonal to $|11\rangle$
on any pair of qubits. One can easily check that the only such states
are linear combinations of 
$|0\rangle^{\otimes n}$ and the $n$-qubit W-state
\[
|100\ldots0\rangle+|010\ldots0\rangle+\ldots+|00\ldots01\rangle.
\]
Thus Eq.~(\ref{eq:Q_periodic})  has a two-dimensional nullspace
and therefore the same is true for $H_n^\circ(\psi)$.
\end{proof}


\section{Gapless phase}
\label{sec:gapless}

In this section we prove the first part of 
Theorem~\ref{thm:main}, namely, 
\begin{gaplessphasethm}
\label{gaplessthm}
Suppose the eigenvalues of $T_{\psi}$  have the same non-zero absolute value.
Then   $\gamma(\psi,n)\leq 1/(n-1)$ for all $n\ge 2$.
\end{gaplessphasethm}
\noindent
Recall that $\gamma(\psi,n)$ denotes  the smallest non-zero eigenvalue of $H_n(\psi)$. In addition we
write $\gamma^{\circ}(\psi,n)$  for the smallest non-zero eigenvalue of the Hamiltonian $H^{\circ}_n(\psi)$ with periodic boundary conditions, see Eq.~(\ref{Hcirc}).

To prove the gapless phase theorem we use the following lemma, proven by Knabe \cite{K88}, which relates the smallest
non-zero eigenvalues of the chains with  periodic and open boundary conditions. Knabe's result, presented in Section 2 of reference \cite{K88}, applies to more general frustration-free spin chains but here we specialize to the case at hand.
\begin{Lemma}
[Knabe \cite{K88}]For all $m\geq n>2$, 
\begin{equation}
\label{knabe}
\gamma^{\circ}(\psi,m)\geq\frac{n-1}{n-2}\left(\gamma(\psi,n)-\frac{1}{n-1}\right).
\end{equation}
\label{lemma:knabe}
\end{Lemma}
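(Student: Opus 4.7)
The plan is to follow Knabe's classical ``local-to-global'' gap argument, comparing two expansions of a single operator---the sum of the squares of overlapping $n$-site window Hamiltonians. Throughout, write $h_i = |\psi\rangle\langle\psi|_{i,i+1}$ with indices taken mod $m$, so $H_m^\circ = \sum_{i=1}^m h_i$, and for each $k=1,\dots,m$ define the window Hamiltonian
\[
H_n^{(k)} = \sum_{i=k}^{k+n-2} h_i,
\]
which is a translate of $H_n(\psi)$ and hence has the same spectral gap $\gamma(\psi,n)$. The strategy is to sandwich $\sum_k (H_n^{(k)})^2$ from below using the window gap and from above using $(H_m^\circ)^2$ plus a correction, and then convert the combined operator inequality to an eigenvalue bound.

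First, since $H_n^{(k)}$ is positive semidefinite with smallest nonzero eigenvalue $\gamma(\psi,n)$, we have the operator inequality $(H_n^{(k)})^2 \ge \gamma(\psi,n)\, H_n^{(k)}$. Summing over $k$ and noting that each $h_i$ belongs to exactly $n-1$ windows, I obtain
\[
\sum_{k=1}^{m} (H_n^{(k)})^2 \;\ge\; \gamma(\psi,n)\,(n-1)\,H_m^\circ.
\]

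Next, I expand both $\sum_k (H_n^{(k)})^2$ and $(H_m^\circ)^2$ in terms of the products $h_i h_j$ and compare coefficients. The diagonal terms $h_i^2=h_i$ appear with multiplicity $n-1$ in the window sum and multiplicity $1$ in $(H_m^\circ)^2$. For off-diagonal terms, a pair $(i,j)$ with circular distance $d\ge 1$ appears in $\max(n-1-d,0)$ windows, and in exactly one copy in $(H_m^\circ)^2$. The crucial observation is that when $d\ge 2$, the qubit supports of $h_i$ and $h_j$ are disjoint, so $h_i$ and $h_j$ commute and $h_ih_j=h_jh_i$ is a product of commuting projectors, hence positive semidefinite. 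A term-by-term comparison of coefficients then yields
\[
(n-2)\,(H_m^\circ)^2 + H_m^\circ \;-\; \sum_{k=1}^{m}(H_n^{(k)})^2 \;=\; \sum_{i\ne j,\,d(i,j)\ge 2}(d(i,j)-1)_+\, h_i h_j \;\ge\; 0,
\]
where $(d-1)_+$ denotes the nonnegative coefficient $\min(d-1,\,n-2)$ that emerges from the count (and is nonnegative precisely because the distance-$1$ coefficients cancel exactly between the two sides, leaving only the commuting pairs).

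Combining the two inequalities gives $(n-2)(H_m^\circ)^2 + H_m^\circ \ge (n-1)\gamma(\psi,n)\,H_m^\circ$, i.e.\ $(n-2)(H_m^\circ)^2 \ge [(n-1)\gamma(\psi,n)-1]\,H_m^\circ$. Evaluating this operator inequality on an eigenvector of $H_m^\circ$ with smallest nonzero eigenvalue $\gamma^\circ(\psi,m)$ and dividing by $\gamma^\circ(\psi,m)>0$ yields
\[
(n-2)\gamma^\circ(\psi,m) \;\ge\; (n-1)\gamma(\psi,n)-1,
\]
which is equivalent to the claimed bound. The only delicate step is the combinatorial comparison of coefficients, which relies essentially on the fact that the distance-$1$ (non-commuting) cross terms, which could destroy positivity, appear with \emph{exactly} the same coefficient $n-2$ on both sides and therefore cancel---leaving only products of projectors with disjoint supports, which are manifestly non-negative. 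The hypothesis $n>2$ is used only to divide by $n-2$, and $m\ge n$ is used to ensure each window fits inside the chain.
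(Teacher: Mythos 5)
The paper itself does not prove this lemma---it is imported verbatim from Knabe \cite{K88}---so there is no internal proof to compare against; your reconstruction follows precisely Knabe's original sum-of-squares argument (lower-bound $\sum_k (H_n^{(k)})^2$ by the window gap, upper-bound it by $(n-2)(H_m^\circ)^2 + H_m^\circ$ via a coefficient comparison in which the distance-$1$ cross terms cancel, then evaluate the resulting operator inequality on the minimal nonzero eigenvector), and the skeleton is sound.

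There is, however, one concrete slip in the combinatorics. On a ring, the multiplicity with which a pair of bonds at circular distance $d$ occurs in $\sum_k (H_n^{(k)})^2$ is not $\max(n-1-d,0)$: a window of $n-1$ consecutive bonds can contain both bonds via \emph{either} arc, so the correct count is $c(d)=\max(n-1-d,\,0)+\max(n-1-(m-d),\,0)$. The wrap-around term is nonzero whenever $m-d\leq n-2$, which occurs throughout the range $n\leq m\leq 2n-4$ that the lemma explicitly covers; for example at $m=n$ every pair of bonds has multiplicity exactly $n-2$, so your claimed exact identity with leftover coefficients $\min(d-1,\,n-2)$ is false there. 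The argument survives because only an inequality is needed: with the corrected count, the $d=1$ (non-commuting) pairs still satisfy $c(1)=(n-2)+\max(n-m,0)=n-2$ exactly for all $m\geq n$, so they still cancel; and for $d\geq 2$ one has $c(d)=2n-2-m\leq n-2$ when both arc terms contribute, or $c(d)=n-1-d\leq n-3$ when only one does, so every leftover coefficient $(n-2)-c(d)$ remains nonnegative and multiplies a product $h_ih_j\geq 0$ of commuting, support-disjoint projectors. With that one count repaired, the rest of your proof---including the passage from $(n-2)(H_m^\circ)^2\geq\left[(n-1)\gamma(\psi,n)-1\right]H_m^\circ$ to the eigenvalue bound---goes through for all $m\geq n>2$.
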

This lemma was originally proposed as a technique for proving that
the periodic chain  is gapped in the
thermodynamic limit. This follows from the lemma if 
one can show that 
there exists a finite $n$ for which the open chain  has a gap strictly larger than $1/(n-1)$. Here
we apply the lemma in the opposite direction. We use the following strategy which works for some (but not all) $\psi$ satisfying the conditions of the gapless phase theorem. First we  apply the argument sketched in  Section~\ref{sketch} to show that 
 $\gamma^{\circ}(\psi,m)$  can take arbitrarily small values for large enough $m$. Then we apply Knabe's lemma
to infer that $\gamma(\psi,n)\le 1/(n-1)$ for any $n>2$ since otherwise Eq.~(\ref{knabe}) would provide a constant lower bound on $\gamma^{\circ}(\psi,m)$
for all $m\ge n$, leading to a contradiction. Note also that $\gamma(\psi,2)=1$ since $H_2(\psi)=|\psi\rangle\langle \psi|$.

For some states $\psi$ we are not able to use the above strategy directly; however in these cases we choose a state $\phi$ which can be taken arbitrarily close to $\psi$ for which the strategy can be applied. The result for $\psi$ then follows by continuity. In order to handle these cases (and for other portions of the proof) we will need the following straightforward bound on how much the eigenvalues of $H_n(\psi)$ (or $H^{\circ}_n(\psi))$ can change as $\psi$ varies. Write
\[
e_1(\psi,n)\leq e_2(\psi,n)\leq \ldots \leq e_{2^n}(\psi,n) \quad \text{and} \quad e^{\circ}_1(\psi,n)\leq e^{\circ}_2(\psi,n)\leq \ldots \leq e^{\circ}_{2^n}(\psi,n) 
\]
for the eigenvalues of $H_n(\psi)$ and $H_n^\circ (\psi)$ respectively. 

\begin{prop}
\label{Weyl}Let $\psi$ and $\phi$ satisfy $\|\psi\|=\|\phi\|=1$.
Then
\[
\left|e_{j}(\psi,n)-e_{j}(\phi,n)\right|\leq2n\left\Vert \psi-\phi \right\Vert \quad\text{and}\quad\left|e_{j}^{\circ}(\psi,n)-e_{j}^{\circ}(\phi,n)\right|\leq2n \left\Vert \psi-\phi \right\Vert
\]
 for each $j=1,\ldots,2^{n}.$\end{prop}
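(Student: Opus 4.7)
My plan is to reduce the problem to bounding the operator norm of the difference $H_n(\psi)-H_n(\phi)$ (and the periodic analogue) and then invoke Weyl's perturbation inequality, which states that for Hermitian operators $A,B$ with eigenvalues listed in increasing order, $|\lambda_j(A)-\lambda_j(B)|\le \|A-B\|$. Applied to $A=H_n(\psi)$ and $B=H_n(\phi)$, this reduces everything to a purely operator-norm estimate on the sum of local perturbations.

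First I would write
\[
H_n(\psi)-H_n(\phi)=\sum_{i=1}^{n-1}\bigl(|\psi\rangle\langle\psi|-|\phi\rangle\langle\phi|\bigr)_{i,i+1},
\]
and use the triangle inequality to bound the operator norm by $(n-1)$ times the two-qubit norm $\||\psi\rangle\langle\psi|-|\phi\rangle\langle\phi|\|$. The next step is the standard estimate for the difference of two rank-$1$ projectors onto unit vectors: writing
\[
|\psi\rangle\langle\psi|-|\phi\rangle\langle\phi|
= |\psi\rangle\langle\psi-\phi| + |\psi-\phi\rangle\langle\phi|,
\]
and using $\|\psi\|=\|\phi\|=1$ together with the triangle inequality yields $\||\psi\rangle\langle\psi|-|\phi\rangle\langle\phi|\|\le 2\|\psi-\phi\|$. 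Combining the two steps gives $\|H_n(\psi)-H_n(\phi)\|\le 2(n-1)\|\psi-\phi\|\le 2n\|\psi-\phi\|$, and Weyl's inequality then delivers the claimed eigenvalue bound.

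For the periodic case the argument is identical except that the sum contains $n$ terms instead of $n-1$, so the triangle inequality directly produces the factor $2n$, matching the statement. There is no real obstacle here; the only mildly non-trivial input is the projector-difference bound, which is elementary. I would present the proof in a few lines without separating the two cases beyond noting the different number of summands.
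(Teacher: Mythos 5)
Your proof is correct and follows essentially the same route as the paper: Weyl's perturbation inequality reduces the claim to bounding $\|H_n(\psi)-H_n(\phi)\|$, which both you and the authors handle by the triangle inequality over the $n-1$ (resp.\ $n$) local terms together with an elementary $2\|\psi-\phi\|$ bound on the difference of two rank-one projectors. The only cosmetic difference is the algebraic identity used for that projector difference (your asymmetric splitting $|\psi\rangle\langle\psi-\phi|+|\psi-\phi\rangle\langle\phi|$ versus the paper's symmetrized one), and both yield the same constant.
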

\begin{proof}
The proof of the two inequalities is almost identical so here we prove
only the first one. We use the Weyl inequality for perturbed eigenvalues
(see for example Corollary III.2.6 of reference \cite{BH97}) which
in this case says $\left|e_{j}(\psi,n)-e_{j}(\phi,n)\right|\leq\left\Vert H_n(\psi)-H_n(\phi)\right\Vert .$
To complete the proof we bound 
\begin{align*}
\left\Vert H_n(\psi)-H_n(\phi)\right\Vert  & \leq\sum_{i=1}^{n-1}\left\Vert |\psi\rangle\langle\psi|_{i,i+1}-|\phi\rangle\langle\phi|_{i,i+1}\right\Vert \\
 & =\frac{(n-1)}{2}\left\Vert \left(|\psi\rangle-|\phi\rangle\right)\left(\langle\psi|+\langle\phi|\right)+\left(|\psi\rangle+|\phi\rangle\right)\left(\langle\psi|-\langle\phi|\right)\right\Vert \\
 & \leq(n-1)\left\Vert \left(|\psi\rangle-|\phi\rangle\right)\left(\langle\psi|+\langle\phi|\right)\right\Vert \\
 & \leq2(n-1)\left\Vert |\psi\rangle-|\phi\rangle\right\Vert 
\end{align*}
where in the last line we used the fact that $\left\Vert |\psi\rangle+|\phi\rangle\right\Vert \leq2$
(since $|\psi\rangle$ and $|\phi\rangle$ are normalized). 
\end{proof}

We now proceed to the proof of the gapless phase theorem.
\begin{proof}
First we claim that the eigenvalues of $H_n(\psi)$ and 
absolute values  of the eigenvalues of 
$T_\psi$ are invariant
under a transformation $\psi \to (U\otimes U)\psi$ 
where $U$ is an arbitrary single-qubit unitary operator.  
 Indeed, let $\psi'=(U\otimes U)\psi$.
Then Eq.~(\ref{H}) implies  
$H_n(\psi')=U^{\otimes n} H_n(\psi) (U^\dag)^{\otimes n}$
and Eq.~(\ref{Teps}) implies $T_{\psi'}=(\det{U})^{-1} \cdot U T_\psi U^\dag$.
Thus the eigenvalues of $H_n(\psi')$ and magnitudes of the eigenvalues of $T_{\psi'}$ do not depend on $U$.
We shall use the freedom in choosing $U$ to bring $\psi$ into a certain canonical form
as defined below.
\begin{prop}
\label{prop:canonical}
For any $|\psi\rangle \in \CC^2\otimes \CC^2$ there exists 
a single-qubit unitary $U$ 
such that 
\begin{equation}
\label{psi_canonical}
(U\otimes U)|\psi\rangle=(\alpha+i\beta)|0,1\rangle+(\alpha+i\gamma)|1,0\rangle+\delta|1,1\rangle
\end{equation}
for some real coefficients $\alpha,\beta,\gamma,\delta$.
\end{prop}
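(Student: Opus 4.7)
The goal is to produce $U\in U(2)$ so that $(U\otimes U)|\psi\rangle$ satisfies three structural conditions on its expansion in the computational basis: the $|0,0\rangle$ amplitude vanishes, the $|0,1\rangle$ and $|1,0\rangle$ amplitudes share a common real part, and the $|1,1\rangle$ amplitude is real. Taken together these impose four real equations, matching the four real parameters of $U(2)$, so I expect such a $U$ to exist. My plan is to build $U$ as a composition $U=DV$, in which $V\in U(2)$ first kills the $|0,0\rangle$ coefficient and a diagonal phase matrix $D=\mathrm{diag}(e^{i\mu},e^{i\nu})$ subsequently adjusts phases to secure the remaining two conditions.

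For the first step, a direct expansion shows that the $|0,0\rangle$ coefficient of $(V\otimes V)|\psi\rangle$ is
\[
V_{00}^{2}\,\langle 0,0|\psi\rangle + V_{00}V_{01}\bigl(\langle 0,1|\psi\rangle+\langle 1,0|\psi\rangle\bigr) + V_{01}^{2}\,\langle 1,1|\psi\rangle,
\]
a homogeneous quadratic in the entries of the first row of $V$. The fundamental theorem of algebra furnishes a nonzero pair $(V_{00},V_{01})$ at which this quadratic vanishes, and after rescaling it can be taken to be a unit vector in $\mathbb{C}^{2}$; in the trivial case where every coefficient already vanishes, any unit vector works. Extending this row to a full orthonormal basis gives the desired $V$, and henceforth I may assume $\langle 0,0|\psi\rangle=0$.

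For the second step, write the resulting state as $b_{1}|0,1\rangle+b_{2}|1,0\rangle+b_{3}|1,1\rangle$ and apply $D\otimes D$. Since $D$ is diagonal, the vanishing $|0,0\rangle$ coefficient is preserved, while the remaining coefficients transform to $b_{1}e^{i(\mu+\nu)}$, $b_{2}e^{i(\mu+\nu)}$, and $b_{3}e^{2i\nu}$. I first pick $\nu$ so that $b_{3}e^{2i\nu}\in\mathbb{R}$, which is always possible (trivially so when $b_{3}=0$). With $\nu$ fixed, I then pick $\mu$ so that $(b_{1}-b_{2})e^{i(\mu+\nu)}$ is purely imaginary, forcing the $|0,1\rangle$ and $|1,0\rangle$ coefficients to agree on their real part; in the degenerate case $b_{1}=b_{2}$ any $\mu$ already works. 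Setting $U=DV$ completes the construction.

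The proof presents no real obstacle; the only care needed is bookkeeping the handful of degenerate sub-cases (an identically vanishing quadratic in step one, and the coincidences $b_{3}=0$ or $b_{1}=b_{2}$ in step two), each of which admits the trivial resolution noted above. No input from the structural results of earlier sections is required.
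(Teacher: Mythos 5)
Your proof is correct and follows essentially the same route as the paper's: the paper first chooses $U$ so that $|0\rangle$ is an eigenvector of $UT_\psi U^\dag$, which is exactly the condition that the $|0,0\rangle$ amplitude of $(U\otimes U)|\psi\rangle$ vanishes (your quadratic-root step), and then applies a global phase followed by a relative diagonal phase, which together amount precisely to your $\mathrm{diag}(e^{i\mu},e^{i\nu})$ adjustment. The only difference is presentational: you work directly with the amplitudes of $\psi$ rather than with the transformation law of $T_\psi$, and both arguments are sound.
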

Since the proof is rather straightforward, we shall postpone it until the end of this section.
From now on we can assume that $\psi$ has the canonical form as in the
right-hand side of Eq.~(\ref{psi_canonical}). Substituting this canonical form into Eq.~(\ref{Tpsi}) one gets
\begin{equation}
\label{transfer}
T_{\psi}=\left(\begin{array}{cc}
\alpha-i\beta & \delta\\
0 & -(\alpha-i\gamma)
\end{array}\right).
\end{equation}
The eigenvalues of this matrix are $\alpha-i\beta$ and $-(\alpha-i\gamma)$, with magnitudes $\sqrt{\alpha^2+\beta^2}$ and
$\sqrt{\alpha^2+\gamma^2}$. By assumption of the theorem,
the eigenvalues have the same magnitude and thus $\gamma=\pm\beta$.
We consider the two cases $\gamma=\pm\beta$ separately and we show that $\gamma(\psi,n)\leq \frac{1}{n-1}$ in each case.

\noindent \begin{flushleft}
\textbf{Case 1: $\gamma=\beta$}
\par\end{flushleft}

Fix $n$ and let $m\geq n$ be even. Setting $\beta=\gamma$ in Eq.~\eqref{transfer} and taking the square we see that $T_\psi^2\sim I$. Since $m$ is even we get $T_{\psi}^{m}\sim I$ and therefore
$e_{3}^{\circ}(\psi,m)=0$ by Proposition \ref{gs_counting}.
Let $|\phi_{m}\rangle$ be a normalized state which satisfies 
\[
\left\Vert \phi_{m}-\psi\right\Vert \leq\frac{1}{m^{2}}
\]
 and such that the eigenvalues of $T_{\phi_m}$ have different magnitudes and are both non-zero. This guarantees that $T_{\phi_{m}}^{m}$ is not proportional to the identity and $\det(T_{\phi_m})\neq 0$. Then, by Propositions \ref{gs_counting} and \ref{Weyl}, 
\[
\gamma^{\circ}(\phi_{m},m)=e_{3}^{\circ}(\phi_{m},m)=e_{3}^{\circ}(\phi_{m},m)-e_{3}^{\circ}(\psi,m)\leq\frac{2}{m}.
\]
Applying Lemma \ref{lemma:knabe} gives
\[
e_{n+2}(\phi_{m},n)=\gamma(\phi_{m},n)\leq\frac{1}{n-1}+\frac{n-2}{n-1}\gamma^{\circ}(\phi_{m},m)\leq\frac{1}{n-1}+\frac{2}{m}\left(\frac{n-2}{n-1}\right)
\]
for all even $m\geq n$, and using Propositions \ref{gs_counting} and \ref{Weyl}
again we have 
\begin{align*}
\gamma(\psi,n) & =e_{n+2}(\phi_{m},n)+\left(e_{n+2}(\psi,n)-e_{n+2}(\phi_{m},n)\right)\\
 & \leq e_{n+2}(\phi_{m},n)+\frac{2n}{m^{2}}\\
 & \leq\frac{1}{n-1}+\frac{2}{m}\left(\frac{n-2}{n-1}\right)+\frac{2n}{m^{2}}.
\end{align*}
The result follows since $m\geq n$ can be taken arbitrarily large.

\noindent \begin{flushleft}
\textbf{Case 2: $\gamma=-\beta$}
\par\end{flushleft}

Let $\alpha+i\beta=re^{-i\pi\left(\theta+\frac{1}{2}\right)}$ where $\theta$ and $r$ are positive real numbers. Here $r<1$ since $\psi$ is normalized and $r>0$ since the eigenvalues of $T_\psi$  are assumed to be non-zero. We first consider the case where
$\theta$ is irrational. In this case the convergents of the continued
fraction expansion of $\theta$ give sequences of positive integers $\{p_{j}\}$
and $\{q_{j}\}$ with 
\begin{equation}
\left|\frac{p_{j}}{q_{j}}-\theta\right|\leq\frac{1}{q_{j}^{2}}\label{eq:continued_fracs},
\end{equation}
 $\mathrm{gcd}(p_j,q_j)=1$, and where $\{q_{j}\}$ diverges. Here we shall omit the first two convergents obtained by the standard continued fraction expansion, in order to guarantee that the sequence $\{q_j\}$ is strictly increasing and $q_j\geq 2$ for all $j$. 

 Define $\theta_{j}=\frac{p_{j}}{q_{j}}$, and let 
\[
|\Psi_{j}\rangle=re^{-i\pi\left(\theta_{j}+\frac{\pi}{2}\right)}|0,1\rangle+re^{i\pi\left(\theta_{j}+\frac{\pi}{2}\right)}|1,0\rangle+\delta|1,1\rangle.
\]
 Then 
\begin{equation}
\left\Vert \Psi_{j}-\psi\right\Vert  =r\sqrt{2}\left|e^{i\pi\left(\theta_{j}-\theta\right)}-1\right| \leq r\sqrt{2}\left|\pi\left(\theta_{j}-\theta\right)\right|  \leq\frac{\sqrt{2}\pi}{q_{j}^{2}}\label{eq:theta_1}
\end{equation}
 where we used the inequality $\left|e^{ix}-1\right|\leq\left|x\right|$,
Eq.~\eqref{eq:continued_fracs}, and the fact that $r<1$. Note
that 
\[
T_{\Psi_{j}}=\left(\begin{array}{cc}
ire^{i\pi\theta_{j}} & \delta \\
0 & ire^{-i\pi\theta_{j}}
\end{array}\right)
\]
has eigenvalues $E_1=ire^{i\pi\theta_{j}}$ and  $E_2=ire^{-i\pi\theta_{j}}$.  We have $E_1\neq E_2$, which follows from the fact that $\theta_j$ is not an integer, since $q_j\geq 2$. Thus $T_{\Psi_j}$ is diagonalizable. Furthermore, $E_1^{q_j}=E_2^{q_j}$, and therefore (using the fact that it is diagonalizable) $T_{\Psi_{j}}^{q_{j}}\sim I$. Hence $e_{3}^{\circ}(\Psi_{j},q_{j})=0$
by Proposition \ref{gs_counting}. On the other hand $T_{\psi}^{q_{j}}$
is not proportional to the identity since $\theta$ is irrational,
hence $\gamma^{\circ}(\psi,q_{j})=e_{3}^{\circ}(\psi,q_{j})$.
Using these facts and Proposition \ref{Weyl} we have 
\begin{align*}
\gamma^{\circ}(\psi,q_{j}) & =\left(e_{3}^{\circ}(\psi,q_{j})-e_{3}^{\circ}(\Psi_{j},q_{j})\right)\leq 2q_{j}\left\Vert \Psi_{j}-\psi\right\Vert \leq\frac{2\sqrt{2}\pi}{q_{j}}.
\end{align*}
 Now for all $j$ such that $q_{j}\geq n$ we get 
\[
\gamma(\psi,n)\leq\frac{1}{n-1}+\frac{n-2}{n-1}\gamma^{\circ}(\psi,q_{j})\leq\frac{1}{n-1}+\left(\frac{n-2}{n-1}\right)\frac{2\sqrt{2}\pi}{q_{j}}
\]
 and hence $\gamma(\psi,n)\leq\frac{1}{n-1}$ since the sequence $\{q_{j}\}$
diverges and the second term can be made arbitrarily small.

It remains to consider the case where $\theta$ is rational. In this
case, for any $\epsilon$ we may choose $\theta'$ to be an irrational
number satisfying $\left|\theta'-\theta\right|\leq\epsilon$. Letting
\[
|\phi\rangle=re^{-i\pi\left(\theta'+\frac{\pi}{2}\right)}|0,1\rangle+re^{i\pi\left(\theta'+\frac{\pi}{2}\right)}|0,1\rangle+\delta|1,1\rangle
\]
we may now apply the above proof to get $\gamma(\phi,n)=e_{n+2}(\phi,n)\leq\frac{1}{n-1}$.
Now using Proposition \ref{Weyl} we get 
\begin{align*}
\gamma(\psi,n) & =e_{n+2}(\phi,n)+\left(e_{n+2}(\psi,n)-e_{n+2}(\phi,n)\right)\leq\frac{1}{n-1}+2n\left\Vert |\phi\rangle-|\psi\rangle\right\Vert \\
 & \leq\frac{1}{n-1}+2nr\sqrt{2}\pi\epsilon
\end{align*}
where in the second line we used the same reasoning as Eq.~\eqref{eq:theta_1}. Since $\epsilon$ can be chosen arbitrarily
small we get $\gamma(\psi,n)\leq\frac{1}{n-1}$.
\end{proof}
Finally, let us prove Proposition~\ref{prop:canonical}.
\begin{proof}
Recall that a transformation $\psi\to (U\otimes U)\psi$
maps $T_\psi$ to 
$(\det{U})^{-1} \cdot U T_\psi U^\dag$. Here $U$ is an arbitrary unitary operator. 
We shall choose a sequence of such transformations that bring $T_\psi$
into the canonical form defined in Eq.~(\ref{transfer}) which is equivalent
to Eq.~(\ref{psi_canonical}). First, choose $U$ such that
$|0\rangle$ is an  eigenvector  of $T_\psi$.
This is always possible since any complex matrix has at least one eigenvector. 
Now we can assume that 
\[
T_{\psi}=\left[ \begin{array}{cc}
\mu_1 & \delta \\
0 & \mu_2 \\
\end{array}\right]
\]
for some complex coefficients $\mu_1,\mu_2,\delta$.
Next choose 
$U=e^{-i\theta/2}I$ where the phase $\theta$ 
satisfies $\mathrm{Re}(e^{i\theta} (\mu_1+\mu_2))=0$.
This maps $T_\psi$ to $e^{i\theta}T_\psi$ and now we can assume that 
$\mathrm{Re}(\mu_1+\mu_2)=0$.
Finally, choosing
$U=\mathrm{diag}(e^{i\theta},e^{-i\theta})$ one can 
map $\delta$ to $e^{2i\theta} \delta$ without changing $\mu_1,\mu_2$.
Thus we can make $\delta$ real. This brings $T_\psi$
into the canonical form defined in Eq.~(\ref{transfer}).
\end{proof}


\section{Monotonicity under the partial trace}
\label{sec:mono}

In this section we establish a relationship between
the ground space projectors describing a chain of $n$  and $n-1$ qubits.
Our result holds in a more general setting than considered elsewhere in this paper since we do not assume translation invariance.

Let  $\psi_1,\psi_2,\ldots,\psi_{m-1}$ be an arbitrary sequence
of normalized two-qubit states. For each $n=2,\ldots,m$
define a Hamiltonian
\begin{equation}
\label{Hgeneral}
H_n (\psi_1,\psi_2,\ldots, \psi_{n-1})=\sum_{j=1}^{n-1} |\psi_j\rangle\langle \psi_j|_{j,j+1}
\end{equation}
which describes a chain of $n$ qubits. This Hamiltonian is frustration-free for any choice of  $\psi_1,\ldots,\psi_{n-1}$ (this follows directly from Proposition \ref{dimincreasing} given below). Let $\calG_n$ be the ground space of $H_n (\psi_1,\psi_2,\ldots, \psi_{n-1})$
and $G_n$ be the projector onto $\calG_n$.
We adopt the convention $H_1=0$ and $\calG_1=\CC^2$.

First, we note that
$\calG_n \subseteq \calG_{n-1}\otimes \CC^2$
due to
the fact that the considered Hamiltonians are frustration-free.
This results in a trivial
monotonicity property  $G_n\le G_{n-1}\otimes I$. 
Below we prove that one also has a different type of
monotonicity, namely $\trace{}_n(G_n)\ge G_{n-1}$,
where $\trace{}_n$ represents the partial trace over the $n$-th qubit.
\begin{Lemma}[\bf Monotonicity]
For each $n=2,\ldots,m$ one has
\label{lemma:mono}
\begin{equation}
\label{mono}
\trace_n(G_n) \ge G_{n-1}.
\end{equation}
\end{Lemma}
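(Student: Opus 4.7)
The natural plan is to proceed by induction on $n$. The base case $n=2$ is immediate: from $G_2 = I - |\psi_1\rangle\langle\psi_1|$ we get $\trace_2(G_2) = 2I - \rho$ with $\rho = \trace_2|\psi_1\rangle\langle\psi_1|$ a single-qubit density matrix, so $\trace_2(G_2)\geq I = G_1$.

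For the inductive step, exploit the decomposition $G_n = (G_{n-1}\otimes I) - R_n$, where $R_n$ is the orthogonal projector onto the ``bad'' subspace $\calV_n := (G_{n-1}\otimes I)\cdot\mathrm{range}(|\psi_{n-1}\rangle\langle\psi_{n-1}|_{n-1,n})$ of $\calG_{n-1}\otimes\CC^2$ killed by the last local projector. Since $\trace_n(G_{n-1}\otimes I) = 2 G_{n-1}$, the monotonicity inequality is equivalent to $\trace_n R_n \leq G_{n-1}$.

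The key step is to introduce the linear map $S:(\CC^2)^{\otimes(n-2)}\to(\CC^2)^{\otimes n}$ defined by $S|\alpha\rangle = (G_{n-1}\otimes I)(|\alpha\rangle\otimes|\psi_{n-1}\rangle_{n-1,n})$, so that $\mathrm{range}(S) = \calV_n$ and $R_n = S(S^\dagger S)^+ S^\dagger$, where $(\cdot)^+$ denotes the Moore--Penrose pseudoinverse. A direct calculation yields the identity
\[
\trace_n(SXS^\dagger) \;=\; G_{n-1}(X\otimes \rho)\,G_{n-1}, \qquad X\in B\bigl((\CC^2)^{\otimes(n-2)}\bigr),
\]
where $\rho = \trace_n|\psi_{n-1}\rangle\langle\psi_{n-1}|$ is the reduced state of $\psi_{n-1}$ on site $n-1$. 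Applying this with $X = (S^\dagger S)^+$ gives $\trace_n R_n = G_{n-1}\bigl((S^\dagger S)^+\otimes\rho\bigr)G_{n-1}$, so the inductive step reduces to the operator inequality
\[
G_{n-1}\bigl((S^\dagger S)^+\otimes\rho\bigr)G_{n-1}\;\leq\; G_{n-1}.
\]

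An upper bound $S^\dagger S\leq G_{n-2}$ follows easily from $G_{n-1}\leq G_{n-2}\otimes I$ and $\rho\leq I$ using the identity $S^\dagger S = \trace_{n-1}(G_{n-1}(I\otimes\rho))$. The hard part will be bounding the pseudoinverse $(S^\dagger S)^+$ from \emph{above}, equivalently lower-bounding the smallest non-zero eigenvalue of $S^\dagger S$. I expect this to be the main obstacle: it is precisely here that the rank-one local projector and two-dimensional site assumption are essential, since $\rho$ has rank at most $2$, and the outer conjugation by $G_{n-1}$ (whose structure is constrained by the inductive hypothesis $\trace_{n-1}G_{n-1}\geq G_{n-2}$) must cancel the pseudoinverse's potential blow-up. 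Closing this eigenvalue estimate is the crux of the argument.
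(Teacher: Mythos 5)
Your setup is sound and in fact reproduces, by a cleaner route, exactly the reduction the paper carries out in its Proposition~\ref{prop:R}: writing $G_n=(G_{n-1}\otimes I)-R_n$ with $R_n=S(S^\dagger S)^+S^\dagger$ is correct, the identity $\trace_n(SXS^\dagger)=G_{n-1}(X\otimes\rho)G_{n-1}$ holds because $G_{n-1}\otimes I$ acts trivially on qubit $n$ and so commutes with $\trace_n$, and if one expands $S^\dagger S$ in orthonormal bases $\{g_\alpha\}$ of $\calG_{n-2}$ and $\{h_i\}$ of $\calG_{n-1}$ using the Schmidt decomposition $|\psi_{n-1}\rangle=\sqrt{p_0}|w_0 v_0\rangle+\sqrt{p_1}|w_1 v_1\rangle$, one finds precisely the paper's matrices $(M_z)_{\alpha i}=\langle g_\alpha\otimes w_z|h_i\rangle$, with $S^\dagger S=p_0M_0M_0^\dagger+p_1M_1M_1^\dagger$ on $\calG_{n-2}$ and $\trace_n R_n=\sum_z p_zM_z^\dagger(S^\dagger S)^+M_z$ in the $\{h_i\}$ basis. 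Up to this point you and the paper are doing the same thing; the paper reaches the same formula by inverting a Gram matrix of a spanning set of $\calG_n^\perp$, which is clunkier than your operator $S$.

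The gap is that the step you defer is the entire content of the lemma, and it is not an eigenvalue estimate on $S^\dagger S$ in any useful sense. The best spectral lower bound available, $S^\dagger S\ge\min(p_0,p_1)\,G_{n-2}$ (which already uses the inductive hypothesis in the form $M_0M_0^\dagger+M_1M_1^\dagger\ge I_r$), only yields $\trace_n R_n\le \big(\max(p_0,p_1)/\min(p_0,p_1)\big)\,G_{n-1}$, which fails whenever $p_0\ne p_1$. What must actually be proved is the matrix inequality $p_0M_0^\dagger(p_0M_0M_0^\dagger+p_1M_1M_1^\dagger)^{-1}M_0+p_1M_1^\dagger(p_0M_0M_0^\dagger+p_1M_1M_1^\dagger)^{-1}M_1\le I_s$, and the paper's Proposition~\ref{prop:tricky} derives it from three inputs you have not brought to bear together: the exact identity $M_0^\dagger M_0+M_1^\dagger M_1=I_s$ coming from the trivial monotonicity $\calG_{n-1}\subseteq\calG_{n-2}\otimes\CC^2$, the inductive hypothesis $M_0M_0^\dagger+M_1M_1^\dagger\ge I_r$, and the strict dimension increase $s=\dim\calG_{n-1}>r=\dim\calG_{n-2}$ (Proposition~\ref{dimincreasing}), which forces each $M_z$ to have at least $s-r$ unit singular values. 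The proof then normalizes $M_1$ by a singular value decomposition and applies operator monotonicity of $y\mapsto-1/y$ twice; none of this is routine. As it stands, your proposal is an elegant reformulation of the problem rather than a proof.
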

\noindent
\noindent
Given the simplicity and generality of Eq.~(\ref{mono}),
one may be tempted to ask whether it holds for some trivial reason
unrelated to the structure of the considered Hamiltonians.
We have observed numerically that  Eq.~(\ref{mono})
can be false if $G_{n-1}$ and $G_n$ are chosen as projectors onto random  linear subspaces
$\calG_{n-1}\subseteq (\CC^2)^{\otimes (n-1)}$ and $\calG_n\subseteq \calG_{n-1}\otimes \CC^2$,
even if  the dimensions of $\calG_{n-1}$ and $\calG_n$ match those of the ground subspaces of $H_{n-1}(\psi_1,\ldots,\psi_{n-2})$ and $H_n(\psi_1,\ldots,\psi_{n-1})$. Thus any proof of the monotonicity property must exploit
 the special structure 
of the projectors $G_n$. In the absence of  an explicit formula for $G_n$, 
one has to rely on some indirect arguments in order to derive Eq.~(\ref{mono}).
This partially explains why the proof of the lemma given below is rather cumbersome. 

\begin{proof}[\bf Proof of Lemma~\ref{lemma:mono}]
We use induction in $n$. The base of the induction is $n=2$.
In this case $G_2=I-|\psi_1\rangle\langle \psi_1|$ and thus
$\trace_2(G_2)=2I-\trace_2( |\psi_1\rangle\langle \psi_1|) \ge I=G_1$.
Here we used the fact that the partial trace of any two-qubit state is a 
density matrix which has eigenvalues at most one. 
We now prove the induction step.  For brevity denote $\psi\equiv \psi_{n-1}$ such that the last
term in $H_n(\psi_1,\ldots,\psi_{n-1})$ is $|\psi\rangle\langle\psi|_{n-1,n}$.

First consider the case where $\psi$ is unentangled, that is, $\psi=\alpha\otimes\beta$ for some single-qubit states $\alpha,\beta$. In this case the result follows trivially without using the inductive hypothesis, since $\calG_{n-1}\otimes\beta^\perp\subseteq \calG_n$ which implies $G_n\geq G_{n-1}\otimes |\beta^\perp\rangle\langle\beta^\perp|$ and thus $\trace{(G_n)}\geq G_{n-1}$.

In the remainder of the proof we consider the case where $\psi$ is entangled (i.e., not a product state). Write the Schmidt decomposition of $\psi$ as
\begin{equation}
\label{psiSchmidt}
|\psi\rangle=\sqrt{p_0}|w_0\rangle|v_0\rangle+\sqrt{p_1}|w_1\rangle|v_1\rangle
\end{equation}
where $\langle w_i|w_j\rangle=\langle v_i|v_j\rangle=\delta_{ij}$ and $p_0,p_1 >0$ with $p_0+p_1=1$.

Let $G_n^\perp=I-G_n$. Obviously, $\trace_n (G_n)=2I-\trace_n(G_n^\perp)$.
Furthermore, the trivial monotonicity $\calG_n \subseteq \calG_{n-1}\otimes \CC^2$
implies that 
$\trace_n(G_n)$ has all of its support on $\calG_{n-1}$,
that is, $\trace_n(G_n) =\trace_n(G_n)G_{n-1}=G_{n-1}\trace_n(G_n)$.
Define an operator
\begin{equation}
\label{R}
R_n\equiv G_{n-1} \trace_n(G_n^\perp) G_{n-1}.
\end{equation}
The above implies that 
\begin{equation}
\label{R1}
\trace_n(G_n)=G_{n-1} \trace_n(G_n) G_{n-1} = 2G_{n-1} - R_n \ge (2-\|R_n\|) G_{n-1}.
\end{equation}
Thus it suffices to prove that $\|R_n\|\le 1$.

Choose an arbitrary orthonormal basis 
\begin{equation}
\label{basis1}
g_1,g_2,\ldots,g_{r} \in \calG_{n-2}, \quad \quad \langle g_\alpha|g_\beta\rangle=\delta_{\alpha,\beta}.
\end{equation}
Also choose an arbitrary orthonormal basis 
\begin{equation}
\label{basis2}
h_1,h_2,\ldots,h_s \in \calG_{n-1}, \quad \quad \langle h_i|h_j\rangle=\delta_{i,j}.
\end{equation}
In general the dimensions $r$ and $s$ of the spaces $\calG_{n-2}$ and $\calG_{n-1}$ will depend on the states $\psi_1,\ldots,\psi_{n-2}$ but we will not need an explicit expression for them. We will however need to use the fact that $s>r$, which we now establish. The following proposition is a special case of the result presented in \cite{unfrustrated}; we include a proof here for completeness.
\begin{prop}
\label{dimincreasing}
Let $D_n$ be the dimension of $\calG_{n}$. Then $D_n>D_{n-1}$ for all $2\leq n\leq m$.
\end{prop}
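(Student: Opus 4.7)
The plan is to prove $D_n > D_{n-1}$ by induction on $n$, using a rank-nullity argument applied to a natural linear map from $\mathcal{G}_{n-1}\otimes\CC^2$ into $\mathcal{G}_{n-2}$.

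First I would define the map $\Phi: \mathcal{G}_{n-1}\otimes\CC^2 \to (\CC^2)^{\otimes (n-2)}$ by $\Phi(|g\rangle) = \langle\psi_{n-1}|_{n-1,n}|g\rangle$, where the inner product is taken on the last two qubits. The key observation is that $\ker(\Phi) = \mathcal{G}_n$ exactly: a vector $|g\rangle \in \mathcal{G}_{n-1}\otimes\CC^2$ automatically satisfies $|\psi_j\rangle\langle\psi_j|_{j,j+1}|g\rangle = 0$ for every $j \le n-2$, so being annihilated by the additional projector $|\psi_{n-1}\rangle\langle\psi_{n-1}|_{n-1,n}$ is equivalent to $\Phi(|g\rangle)=0$. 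By rank-nullity this gives $D_n = 2 D_{n-1} - \mathrm{rank}(\Phi)$, so it suffices to show $\mathrm{rank}(\Phi) < D_{n-1}$.

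The next step, which is the crux of the argument, is to show $\mathrm{image}(\Phi) \subseteq \mathcal{G}_{n-2}$, giving $\mathrm{rank}(\Phi) \le D_{n-2}$. For any $j \le n-3$, the projector $|\psi_j\rangle\langle\psi_j|_{j,j+1}$ commutes with $\langle\psi_{n-1}|_{n-1,n}$ since they act on disjoint sets of qubits, and it annihilates $|g\rangle$ by hypothesis; therefore it annihilates $\Phi(|g\rangle)$. Hence $\Phi(|g\rangle)$ lies in the ground space of $H_{n-2}(\psi_1,\ldots,\psi_{n-3})$, which is $\mathcal{G}_{n-2}$. Combining this with the previous step yields the recursion
\[
D_n \ge 2 D_{n-1} - D_{n-2}.
\]

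Finally I would close the induction. The base case needs $D_2 > D_1$: by convention $D_1 = 2$, and $D_2 = 4 - 1 = 3$ since $H_2 = |\psi_1\rangle\langle\psi_1|$ is a rank-$1$ projector, so $D_2 = 3 > 2 = D_1$. For the inductive step, assuming $D_{n-1} > D_{n-2}$, the recursion immediately gives $D_n \ge 2D_{n-1} - D_{n-2} > 2 D_{n-1} - D_{n-1} = D_{n-1}$. The only subtlety to double-check is the identification $\ker(\Phi) = \mathcal{G}_n$, which uses that $|\psi_{n-1}\rangle\langle\psi_{n-1}|_{n-1,n}$ is a rank-$1$ projector so that annihilation by the projector is equivalent to annihilation by $\langle\psi_{n-1}|_{n-1,n}$; beyond that, the argument is essentially bookkeeping with dimensions.
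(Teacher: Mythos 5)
Your proof is correct and takes essentially the same route as the paper: both arguments reduce to the key inequality $D_n \geq 2D_{n-1}-D_{n-2}$ obtained by a dimension count (the paper imposes $D_{n-2}$ linear constraints $\langle \kappa_j\otimes\psi_{n-1}|\phi\rangle=0$ on $\calG_{n-1}\otimes\CC^2$, while you phrase the identical count as rank--nullity for the partial inner product map $\Phi$ whose image you show lies in $\calG_{n-2}$), and both then conclude from the base case $D_2-D_1=1$.
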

\begin{proof}
 Recall our convention that $\calG_1=\CC^2$, so $D_1=2$. On the other hand $H_2(\psi_1)=|\psi_1\rangle\langle \psi_1|_{1,2}$ and $D_2=3$, which confirms $D_2>D_1$. We now establish that $D_n-D_{n-1}\geq D_{n-1}-D_{n-2}$ for all $n\geq 3$. This is sufficient to complete the proof since it implies $D_n-D_{n-1}\geq D_2-D_1=1$. 

Let $\phi$ be a general state in $\calG_n$, with $n\geq 3$. Let $\gamma_1,\ldots \gamma_{D_{n-1}}$ be an orthonormal basis for $\calG_{n-1}$ and let $\kappa_1,\ldots,\kappa_{D_{n-2}}$ be an orthonormal basis for $\calG_{n-2}$. We can write
\[
|\phi\rangle=\sum_{i=1}^{D_{n-1}} f_{i,0}|\gamma_i\rangle|0\rangle+ f_{i,1}|\gamma_i\rangle|1\rangle
\]
for some complex coefficients $\{f_{i,z}\}$. The fact that the Hamiltonian $H_n(\psi_1,\ldots,\psi_{n-1})$ is frustration-free implies
\[
\calG_n=\left(\calG_{n-1}\otimes \CC^2\right)\cap\left(\calG_{n-2}\otimes\psi_{n-1}^\perp\right)
\]
and thus the dimension of $\calG_n$ is the number of linearly independent solutions to the equations
\begin{equation}
\langle\kappa_j \otimes \psi_{n-1} |\phi\rangle=0 \quad \text{for all}  \quad j=1,\ldots,D_{n-2}.
\end{equation}
This is a set of $D_{n-2}$ linear equations for the $2D_{n-1}$ variables $\{f_{i,z}\}$. The number of linearly independent solutions satisfies $D_n\geq 2D_{n-1}-D_{n-2}$, or equivalently $D_n-D_{n-1}\geq D_{n-1}-D_{n-2}$.
\end{proof}

Define $r\times s$  matrices 
\begin{equation}
\label{gram}
(M_0)_{\alpha,i}=\langle g_\alpha \otimes w_0|h_i\rangle \quad \mbox{and} \quad
(M_1)_{\alpha,i}=\langle g_\alpha \otimes w_1|h_i\rangle.
\end{equation}
where $w_0,w_1$ are defined in Eq.~(\ref{psiSchmidt}). The trivial monotonicity $\calG_{n-1}\subseteq \calG_{n-2}\otimes \CC^2$ implies
\begin{equation}
\label{norm1}
M_0^\dag M_0 + M_1^\dag M_1 = I_s.
\end{equation}
Here and below $I_q$ denotes the identity matrix of dimension $q$. 
Furthermore, expressing $G_{n-1}=\sum_{i=1}^s |h_i\rangle\langle h_i|$ 
one gets 
\[
\langle g_\alpha| \trace_{n-1}(G_{n-1}) |g_\beta\rangle
= \langle g_\alpha \otimes w_0 | G_{n-1} |g_\beta \otimes w_0\rangle + 
\langle g_\alpha \otimes w_1 | G_{n-1} |g_\beta \otimes w_1\rangle
=(M_0 M_0^\dag + M_1 M_1^\dag)_{\alpha,\beta}.
\]
Since $\trace_{n-1}(G_{n-1})\ge G_{n-2}$ by  the induction hypothesis,
we infer that 
\begin{equation}
\label{norm2}
M_0 M_0^\dag + M_1 M_1^\dag \ge I_{r}.
\end{equation}
The usefulness of the matrices $M_0,M_1$ comes from the following facts. Recall that we defined $R_n=G_{n-1} \trace{}_n(G_n^\perp)G_{n-1}$.
\begin{prop}
\label{prop:R}
Suppose $\psi$ is entangled. Then the matrix of the operator $R_n$ in the chosen basis $\{h_1,\ldots,h_s\}$ of $\calG_{n-1}$ can be written as
\begin{equation}
\label{Rsimple1}
R_n=p_0 M_0^\dag ( p_0 M_0 M_0^\dag + p_1 M_1 M_1^\dag )^{-1} M_0
+ p_1 M_1^\dag ( p_0 M_0 M_0^\dag + p_1 M_1 M_1^\dag )^{-1} M_1,
\end{equation}
where $p_0,p_1>0$ are defined by Eq.~(\ref{psiSchmidt}).
\end{prop}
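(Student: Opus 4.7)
My plan is to compute $G_n$ explicitly in an orthonormal basis of $\calG_{n-1}\otimes\CC^2$ and then take the partial trace over qubit $n$ directly. Because $\psi$ is entangled I can use its Schmidt decomposition, and I would parametrize a generic vector in $\calG_{n-1}\otimes\CC^2$ as
\[
|\phi\rangle=\sum_{i=1}^{s}|h_i\rangle\bigl(\alpha_i|v_0\rangle+\beta_i|v_1\rangle\bigr),
\]
with coordinates $(\vec\alpha,\vec\beta)\in\CC^{s}\oplus\CC^{s}$ with respect to the orthonormal basis $\{|h_i\rangle|v_z\rangle\}_{i,z}$.

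The first step is to translate the condition $|\phi\rangle\in\calG_n$ into a linear constraint on $(\vec\alpha,\vec\beta)$. Frustration-freeness of $H_n$ gives $\calG_n=(\calG_{n-1}\otimes\CC^2)\cap\ker(|\psi\rangle\langle\psi|_{n-1,n})$, so I would contract $|\phi\rangle$ with $\langle\psi|_{n-1,n}$ using Eq.~(\ref{psiSchmidt}). The result lies in $\calG_{n-2}$ because $\calG_{n-1}\subseteq\calG_{n-2}\otimes\CC^2$, and expanding it in the basis $\{|g_\alpha\rangle\}$ and using the definition~(\ref{gram}) converts $\langle\psi|_{n-1,n}|\phi\rangle=0$ into the matrix equation $\sqrt{p_0}M_0\vec\alpha+\sqrt{p_1}M_1\vec\beta=0$. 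Thus $\calG_n$ corresponds to $\ker A$, where $A\equiv[\sqrt{p_0}M_0\ \ \sqrt{p_1}M_1]$ is an $r\times 2s$ matrix.

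Set $K\equiv AA^\dagger=p_0M_0M_0^\dagger+p_1M_1M_1^\dagger$. Assuming $K$ is invertible, the projector onto $\ker A$ equals $I_{2s}-A^\dagger K^{-1}A$. Writing this in $2\times 2$ block form with respect to $\CC^{2s}=\CC^{s}\oplus\CC^{s}$ gives the matrix of $G_n$ in the basis $\{|h_i\rangle|v_z\rangle\}$. The partial trace over qubit $n$ then amounts to summing the two diagonal blocks, yielding $\trace_n(G_n)=2I_s-p_0M_0^\dagger K^{-1}M_0-p_1M_1^\dagger K^{-1}M_1$ in the basis $\{|h_i\rangle\}$. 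Since $\calG_n\subseteq\calG_{n-1}\otimes\CC^2$ the operator $\trace_n(G_n)$ is supported on $\calG_{n-1}$, and therefore
\[
R_n=2G_{n-1}-\trace_n(G_n)=p_0M_0^\dagger K^{-1}M_0+p_1M_1^\dagger K^{-1}M_1,
\]
which is exactly Eq.~(\ref{Rsimple1}).

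The only non-routine step is justifying that $K$ is invertible, so that $I_{2s}-A^\dagger K^{-1}A$ is a valid expression for the projector onto $\ker A$. This is where both hypotheses enter essentially: entanglement of $\psi$ gives $p_0,p_1>0$, while the inductive monotonicity hypothesis, through Eq.~(\ref{norm2}), gives $M_0M_0^\dagger+M_1M_1^\dagger\geq I_r$. Combining these yields $K\geq\min(p_0,p_1)\,I_r>0$, so $K$ is strictly positive definite and the remainder is routine linear algebra in the chosen basis.
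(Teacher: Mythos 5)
Your proof is correct, and it takes a genuinely different route from the paper's. The paper works with the orthogonal complement: it decomposes $\calG_n^\perp=\calG_{n-2}\otimes\psi+\calG_{n-1}^\perp\otimes\CC^2$, builds a non-orthogonal spanning set $\{\hat e_p\}$ for it, inverts the resulting $3\times 3$ block Gram matrix $\Gamma$ in Eq.~(\ref{invGamma}), writes $G_n^\perp=\sum_{p,q}(\Gamma^{-1})_{p,q}|\hat e_p\rangle\langle\hat e_q|$, and finally extracts $R_n=G_{n-1}\trace_n(G_n^\perp)G_{n-1}$ by observing that only the vectors $g_\alpha\otimes\psi$ survive the sandwiching by $G_{n-1}$. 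You instead describe $\calG_n$ itself as the kernel of the single $r\times 2s$ matrix $A=\left[\sqrt{p_0}M_0\;\;\sqrt{p_1}M_1\right]$ in the orthonormal basis $\{|h_i\rangle|v_z\rangle\}$ of $\calG_{n-1}\otimes\CC^2$ (your translation of $\langle\psi|_{n-1,n}|\phi\rangle=0$ into $\sqrt{p_0}M_0\vec\alpha+\sqrt{p_1}M_1\vec\beta=0$ is valid, since the contracted vector lies in $\calG_{n-2}$ and so vanishes iff its $r$ components along the $g_\alpha$ do), apply the standard formula $P_{\ker A}=I-A^\dagger(AA^\dagger)^{-1}A$, and read off $\trace_n(G_n)$ as the sum of the two diagonal blocks. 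The two arguments hinge on exactly the same nontrivial fact: invertibility of $K=p_0M_0M_0^\dagger+p_1M_1M_1^\dagger$, obtained identically from $p_0,p_1>0$ and the inductive bound $M_0M_0^\dagger+M_1M_1^\dagger\ge I_r$ of Eq.~(\ref{norm2}). What your version buys is economy: the orthonormal basis eliminates the Gram-matrix bookkeeping and the block inverse, and as a byproduct it shows that $A$ has full row rank, hence $\dim\calG_n=2\dim\calG_{n-1}-\dim\calG_{n-2}$ holds with equality when $\psi_{n-1}$ is entangled, sharpening Proposition~\ref{dimincreasing} in that case. What the paper's version buys is the explicit description of $\calG_n^\perp$ as $\calA+\calB$, though that description is not reused elsewhere.
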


\begin{prop}
\label{prop:tricky}
Let $r,s$ be arbitrary positive integers with $s\geq r$. Let $M_0,M_1$ be arbitrary matrices of size $r\times s$ satisfying $M_0^\dag M_0 + M_1^\dag M_1=I_s$ and $M_0M_0^\dag + M_1 M_1^\dag \ge I_r$. Let $p_0,p_1> 0$ be any real positive numbers. Then the operator $R_n$ defined by Eq.~(\ref{Rsimple1}) satisfies
$\|R_n\|\le 1$.
\end{prop}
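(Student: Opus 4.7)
The plan is to reformulate $R_n \le I_s$ as an operator inequality and exploit both conditions simultaneously. First, using condition $M_0^\dagger M_0 + M_1^\dagger M_1 = I_s$, I would rewrite
\[
I_s - R_n = M_0^\dagger (I_r - p_0 C^{-1}) M_0 + M_1^\dagger (I_r - p_1 C^{-1}) M_1.
\]
Factoring $I_r - p_i C^{-1} = C^{-1/2}(C - p_i I_r) C^{-1/2}$ and writing $K_i = C^{-1/2} M_i$, the task reduces to showing that $K_0^\dagger (C - p_0 I_r) K_0 + K_1^\dagger (C - p_1 I_r) K_1 \ge 0$.

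Next, I would use the second condition, $A + B \ge I_r$ where $A = M_0 M_0^\dagger$ and $B = M_1 M_1^\dagger$. Setting $\Delta := A + B - I_r \ge 0$ and using $C = p_0 A + p_1 B$, a brief computation gives the decomposition
\[
C - p_0 I_r = p_0 \Delta + (p_1 - p_0) B, \qquad C - p_1 I_r = p_1 \Delta + (p_0 - p_1) A.
\]
Substituting, and introducing $Y = M_0^\dagger C^{-1/2} M_1$ so that $K_0^\dagger B K_0 = Y Y^\dagger$ and $K_1^\dagger A K_1 = Y^\dagger Y$, one arrives at the clean identity
\[
I_s - R_n = \bigl(p_0 K_0^\dagger \Delta K_0 + p_1 K_1^\dagger \Delta K_1\bigr) + (p_1 - p_0)\bigl(Y Y^\dagger - Y^\dagger Y\bigr).
\]
The parenthesized $\Delta$-contribution is manifestly positive semidefinite, while the commutator $[Y, Y^\dagger]$ has zero trace and can be indefinite. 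A complementary reformulation via Schur complement turns $R_n \le I_s$ into the statement that the $2r \times 2r$ block matrix
\[
F = \begin{pmatrix} p_1 B & -\sqrt{p_0 p_1}\, M_0 M_1^\dagger \\ -\sqrt{p_0 p_1}\, M_1 M_0^\dagger & p_0 A \end{pmatrix}
\]
is positive semidefinite; I would try to exhibit $F$ as a manifest sum of outer products built from $M_0, M_1$.

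The main obstacle, and the crux of the argument, is showing that the positive $\Delta$-contribution dominates the negative part of $(p_1 - p_0)[Y, Y^\dagger]$. Neither condition alone suffices: simple counterexamples show that requiring only $p_0 K_0 K_0^\dagger + p_1 K_1 K_1^\dagger = I_r$ permits $\|R_n\| > 1$. The argument must exploit the fine algebraic structure forced by condition (a): namely, the block matrix $\Pi = \bigl(\begin{smallmatrix} A & X \\ X^\dagger & B\end{smallmatrix}\bigr)$ (with $X = M_0 M_1^\dagger$) is an orthogonal projection of rank $s$ on $\mathbb{C}^{2r}$. This yields the auxiliary identities $XX^\dagger = A(I-A)$, $X^\dagger X = B(I-B)$, and the intertwining $AX = X(I-B)$, which in turn give $X B^{-1} X^\dagger = A$. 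I expect the proof to combine these with $\Delta \ge 0$ to relate the spectral structure of $Y$ to $\Delta$-weighted quantities of $K_0, K_1$, thereby controlling the commutator term. Whether this is best done via a direct outer-product decomposition of $F$ or by a spectral bound on $[Y, Y^\dagger]$ in terms of $\|K_0^\dagger \Delta K_0\|$ and $\|K_1^\dagger \Delta K_1\|$ is, in my estimation, the delicate step.
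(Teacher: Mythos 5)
Your algebra is correct as far as it goes: the identity
\[
I_s-R_n=p_0K_0^\dag \Delta K_0+p_1K_1^\dag \Delta K_1+(p_1-p_0)\bigl(YY^\dag-Y^\dag Y\bigr)
\]
does follow from the two hypotheses, and the congruence argument showing that $\|R_n\|\le 1$ is equivalent to $F\ge 0$ (with $F=\mathrm{diag}(C,C)-NN^\dag$ for the stacked matrix $N$) is also valid. But both of these are \emph{reformulations} of the proposition, not proofs of it. The entire content of the statement is the step you defer: showing that the $\Delta$-contribution dominates the negative part of $(p_1-p_0)[Y,Y^\dag]$, or equivalently that $F\ge 0$. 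You explicitly leave this open (``I expect the proof to combine these\ldots'', ``the delicate step''), so the proposal has a genuine gap at its crux. Note also that the hoped-for ``manifest outer-product decomposition'' of $F$ does not exist in the obvious way: for $W=\bigl(\begin{smallmatrix}\sqrt{p_1}M_1\\ -\sqrt{p_0}M_0\end{smallmatrix}\bigr)$ the product $WW^\dag$ has off-diagonal blocks $-\sqrt{p_0p_1}\,M_1M_0^\dag$ and $-\sqrt{p_0p_1}\,M_0M_1^\dag$ in the positions transposed relative to those of $F$, so $F\ne WW^\dag$ in general and some nontrivial use of hypothesis (b) is unavoidable.

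For contrast, the paper's proof sidesteps the commutator difficulty entirely and never symmetrizes in $p_0,p_1$. It sets $x=p_1/p_0\ge 1$ without loss of generality, uses hypothesis (b) in the form $M_0M_0^\dag+xM_1M_1^\dag\ge I_r+(x-1)M_1M_1^\dag$ together with operator monotonicity of $y\mapsto -1/y$ to majorize $R_n$ by $S_0+S_1$ with the simpler middle factor $(I_r+(x-1)M_1M_1^\dag)^{-1}$, and then exploits the rigid structure forced by hypothesis (a): each $M_i$ has at least $s-r$ unit singular values, so one may take $M_1=[D\;\;0]$ diagonal and $M_0=[0\;\;U(I_r-\hat D^2)^{1/2}]$ via polar decomposition. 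In these coordinates hypothesis (b) becomes $U\hat D^2U^\dag\le D^2$, and a second application of operator monotonicity produces the exact cancellation $\tilde S_0+\tilde S_1\le I$. If you wish to salvage your route you must actually prove $F\ge 0$ (e.g.\ via a Schur-complement test on the block $p_0A$, combined with the projection identities $XX^\dag=A(I-A)$, $AX=X(I-B)$ and with $A+B\ge I_r$); until that is done the argument is a plan, not a proof.
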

Combining Propositions~\ref{prop:R},\ref{prop:tricky}, the inequality $s>r$ proved in Proposition~\ref{dimincreasing}, and Eqs.~(\ref{norm1},\ref{norm2})
one gets  $\|R_n\| \le 1$. The lemma follows from   Eq.~(\ref{R1}).
\end{proof}
In the rest of this section we prove the above propositions. 
\begin{proof}[\bf Proof of Proposition~\ref{prop:tricky}]
 Denoting $x=p_1/p_0$ one can rewrite $R_n$ as
\begin{equation}
\label{eq0}
R_n=M_0^\dag ( M_0 M_0^\dag + x M_1 M_1^\dag)^{-1}  M_0 + x M_1^\dag ( M_0 M_0^\dag + x M_1 M_1^\dag)^{-1}  M_1.
\end{equation}
By symmetry we can assume that $x\ge 1$. Then Eq.~(\ref{norm2}) implies
\[
M_0M_0^\dag + xM_1 M_1^\dag = M_0M_0^\dag + M_1M_1^\dag + (x-1) M_1 M_1^\dag \ge I_{r} + 
(x-1)M_1M_1^\dag.
\]
Since the function $f(y)=-1/y$ is operator monotone, we arrive at
$R_n\le S_0+S_1$, where
\begin{equation}
\label{eq1}
S_0=M_0^\dag(I_{r} + (x-1) M_1 M_1^\dag)^{-1} M_0 \quad \mbox{and} \quad
S_1=xM_1^\dag (I_{r} + (x-1)M_1 M_1^\dag)^{-1} M_1.
\end{equation}
Hence it suffices to prove that $\|S_0+S_1\|\le 1$.
From Eq.~(\ref{norm1}) one infers  that $\|M_0\|\le 1$ and $\|M_1\|\le 1$.
Since $M_0$ has $(s-r)$ fewer rows than columns, it must have at least this many
linearly independent vectors in its nullspace. From Eq.~(\ref{norm1}) one infers that for any $\phi\in \mathbb{C}^s$ with $\|\phi\|=1$ and $M_0\phi=0$ we have $\|M_1 \phi\|=1$. Thus $M_1$ has at least $(s-r)$ singular values equal to $1$. Likewise, $M_0$ has at least $(s-r)$ singular values equal to $1$. Note that this implies that $(s-r)\le r$ since $M_0 M_0^\dagger$ is an $r\times r$ matrix with at least $(s-r)$ eigenvalues equal to $1$. Furthermore, conditions Eqs.~(\ref{norm1},\ref{norm2}) and the norm of $S_0+S_1$ are invariant under a transformation
$M_{0,1}\to WM_{0,1} V$, where $W$ and $V$ are arbitrary unitary matrices.
We can always choose $W$ and $V$ to bring $M_1$ into a diagonal form such that the diagonal matrix elements of $M_1$ are non-negative and non-increasing. Thus we can assume without loss of generality that 
\begin{equation}
\label{M1}
M_1=\left[ \ba{cc} D & 0_{r\times (s-r)}  \\
\ea \right]
\end{equation}
where
\begin{equation}
D=\mathrm{diag}(d_1,d_2,\ldots,d_{r}),
\quad 1= d_1=\ldots=d_{(s-r)}\ge d_{(s-r+1)} \ge \ldots \ge d_{r} \ge 0.
\end{equation}
(If $s=r$ the above equation should read $1\geq d_1\geq d_2\ldots \geq d_r\geq 0$.). Here and below $0_{t\times q}$ denotes an all-zeros matrix of size $t\times q$. It follows that $M_1M_1^\dag = D^2$ and thus 
\begin{equation}
\label{eq2}
S_1=\left[ \ba{cc} I_{(s-r)}\ & 0_{(s-r)\times r} \\ 0_{r\times (s-r)} & \tilde{S}_1 \\ \ea \right],
\quad \mbox{where} \quad \tilde{S}_1 = x\hat{D}^2 (I_{r}+(x-1)\hat{D}^2)^{-1},
\end{equation}
and
\begin{equation}
\label{hatD}
\hat{D}=\mathrm{diag}(d_{(s-r+1)},d_{(s-r+2)},\ldots,d_{r},0_{1\times (s-r)}).
\end{equation}
The above arguments also show that $\phi \in \CC^s$ is in the nuillspace of $M_0$ 
iff $\phi$ has support on basis vectors $i$ with $d_i=1$. 
Thus we can assume wlog that 
\begin{equation}
\label{M0}
M_0=\left[ \ba{cc} 0_{r\times (s-r)} & M \\ \ea \right],
\end{equation}
where $M$ is some matrix of size $r\times r$. Substituting Eqs.~(\ref{M1},\ref{M0}) into Eq.~(\ref{norm1},\ref{norm2}) yields
\begin{equation}
\label{MdagM}
M^\dag M =I_{r}-\hat{D}^2
\end{equation}
and
\begin{equation}
\label{MMdag}
MM^\dag \ge I_{r}-D^2.
\end{equation}
Using the polar decomposition of $M$ and Eq.~(\ref{MdagM}) we obtain the parameterization
$M=U(I_{r}-\hat{D}^2)^{1/2}$,
where $U$ is unitary. Then Eq.~(\ref{MMdag}) is equivalent
to $U(I_{r}-\hat{D}^2)U^\dag \ge I_{r} -D^2$, or 
\begin{equation}
\label{MMdag1}
U\hat{D}^2 U^\dag \le D^2.
\end{equation}
Using the definition of $S_0$ one gets
\begin{equation}
\label{eq3}
S_0=\left[ \ba{cc} 0_{(s-r)\times (s-r)} & 0_{(s-r)\times r} \\ 0_{r\times (s-r)} & \tilde{S}_0 \\ \ea \right],
\quad \mbox{where} \quad \tilde{S}_0 = M^\dag (I_{r} + (x-1) D^2)^{-1} M.
\end{equation}
Combining Eqs.~(\ref{eq2},\ref{eq3}), it suffices to show that $\|\tilde{S}_0+\tilde{S}_1\|\le 1$.
Using the chosen parameterization of $M$ one gets
\begin{equation}
\label{eq4}
\tilde{S}_0 =
 (I_{r} - \hat{D}^2)^{1/2} (I_{r}+(x-1) U^\dag D^2 U )^{-1}  (I_{r} - \hat{D}^2)^{1/2}.
\end{equation}
Now Eq.~(\ref{MMdag1}) implies $U^\dag D^2 U\ge \hat{D}^2$. 
Since $f(y)=-1/y$ is an operator monotone function, it follows that 
\begin{equation}
\label{eq5}
(I_{r}+(x-1) U^\dag D^2 U )^{-1}   \le (I_{r}+(x-1) \hat{D}^2)^{-1},
\end{equation}
that is,
\begin{equation}
\label{eq6}
\tilde{S}_0+\tilde{S}_1 \le (I_{r} - \hat{D}^2) (I_{r}+ (x-1) \hat{D}^2)^{-1} + 
 x\hat{D}^2 (I_{r}+(x-1)\hat{D}^2)^{-1}  = I.
\end{equation}
This proves that $\|\tilde{S}_0+\tilde{S}_1\|\le 1$
which implies $\|S_0+S_1\|=1$ and thus $\|R_n\|\le 1$.
\end{proof}

\begin{proof}[\bf Proof of Proposition~\ref{prop:R}]
We first show that
\begin{equation}
\label{ground1}
\calG_n^\perp =\calA + \calB, \quad \mbox{where} \quad \calA=\calG_{n-2} \otimes \psi 
\quad \mbox{and} \quad \calB=\calG_{n-1}^\perp \otimes \CC^2.
\end{equation}
Note that the two tensor products  in $\calA$ and in $\calB$ refer to two different partitions of the chain. We use the following two general properties of the orthogonal complement (here $\mathcal{W},\mathcal{V}$ are subspaces of a Hilbert space)
\begin{align}
(\mathcal{W}+\mathcal{V})^\perp&=\mathcal{W}^\perp \cap \mathcal{V}^{\perp}\label{comp1}\\
(\mathcal{W}\otimes \mathcal{V})^\perp&=\mathcal{W}^\perp \otimes \mathcal{V}+\mathcal{W} \otimes \mathcal{V}^\perp+\mathcal{W}^\perp \otimes \mathcal{V}^\perp.\label{comp2}
\end{align}
We have
\[
\calG_n=(\calG_{n-1}\otimes \CC^2)\cap (\calG_{n-2}\otimes \psi^\perp)
\]
Applying Eqs.~(\ref{comp1},\ref{comp2}) gives
\begin{align*}
\calG_n^\perp&=(\calG_{n-1}\otimes \CC^2)^\perp+ (\calG_{n-2}\otimes \psi^\perp)^\perp\\
&=\calG_{n-1}^\perp\otimes \CC^2+\calG_{n-2}^\perp \otimes \psi^\perp+\calG_{n-2}\otimes \psi+ \calG_{n-2}^\perp\otimes \psi\\
&=\calA+\calB
\end{align*}
where to get the last line we absorbed the second and fourth terms into the first, using the fact that $(G_{n-2}\otimes \CC^2)^\perp \subseteq G_{n-1}^\perp$. 

Choose an arbitrary orthonormal basis 
\begin{equation}
\label{basis3}
e_1,e_2,\ldots,e_q \in \calG_{n-1}^\perp, \quad \quad q=\dim{(\calG_{n-1}^\perp)}
\end{equation}
From Eq.~(\ref{ground1}) one infers that $\calG_n^\perp$ is spanned by
\[
(\hat{e}_1,\ldots,\hat{e}_M)=
(g_1\otimes \psi,\ldots,g_{r} \otimes \psi) \cup (e_1\otimes v_0, \ldots, e_q\otimes v_0) \cup (e_1\otimes v_1,\ldots, e_q \otimes v_1).
\]
where $v_0,v_1$ are the Schmidt vectors of $\psi$ as defined in Eq.~(\ref{psiSchmidt}).

We now show that the Gram matrix $\Gamma$ defined by
\[
\Gamma_{p,q}=\langle \hat{e}_p|\hat{e}_q\rangle
\]
is invertible, which implies that $\hat{e}_1,\ldots,\hat{e}_M$ are linearly independent. We note that $\Gamma$ has the following simple form
\begin{equation}
\label{Gamma}
\Gamma=\left[ 
\ba{ccc}
I_r & B_0 & B_1 \\
B_0^\dag & I_q & 0 \\
B_1^\dag & 0 & I_q \\
\ea
\right]
\end{equation}
where
\begin{equation}
\label{Bdef}
(B_z)_{\alpha,i} = \langle g_\alpha \otimes \psi| e_i\otimes v_z\rangle= \sqrt{p_z} \langle g_\alpha \otimes w_z| e_i\rangle \qquad z=0,1.
\end{equation}
Define $B=\left[\ba{cc} B_0 & B_1\ea \right]$, $X=BB^\dagger$, and $Y=B^\dagger B$. Note that $X$ and $Y$ have the same non-zero eigenvalues. Also note that $\Gamma$ is invertible if none of these eigenvalues are equal to $1$, since in this case 
\begin{equation}
\label{invGamma}
\Gamma^{-1}=\left[ 
\ba{cc}
(I_r-X)^{-1} &  -B(I_{2q}-Y)^{-1}\\
-B^\dagger(I_r-X)^{-1} & (I_{2q}-Y)^{-1}\\
\ea
\right].
\end{equation}
To show that $\Gamma$ is invertible it therefore suffices to show that $I_r-X$ is invertible. Using Eqs.~(\ref{gram},\ref{Bdef}) 
and the identity $I=G_{n-1}+G_{n-1}^\perp$
we get
\[
\frac{1}{p_0}B_0B_0^\dagger+M_0M_0^\dagger=I_r \quad \text{and} \quad 
\frac{1}{p_1}B_1B_1^\dagger+M_1M_1^\dagger=I_r.
\]
So 
\begin{equation}
\label{IminusX}
I_r-X=I_r-B_0B_0^\dagger-B_1B_1^\dagger=p_0M_0M_0^\dagger+p_1M_1M_1^\dagger.
\end{equation}
To prove that $I_r-X$ is invertible we show that this operator is positive definite:
\[
p_0M_0M_0^\dagger+p_1M_1M_1^\dagger\geq \min{(p_0,p_1)}(M_0M_0^\dagger+M_1M_1^\dagger)\geq \min{(p_0,p_1)}I >0
\]
where we used Eq.~(\ref{norm2}) and the fact that $p_0,p_1$ are both positive. This completes the proof that $\Gamma$ is invertible and establishes that $\hat{e}_1,\ldots \hat{e}_M$ are linearly independent.

Since we have shown that $\hat{e}_1,\ldots \hat{e}_M$ are a basis for $\calG_n^\perp$, we have
\begin{equation}
\label{ground3}
G_n^\perp=\sum_{p,q=1}^M (\Gamma^{-1})_{p,q} |\hat{e}_p\rangle\langle \hat{e}_q|.
\end{equation}

Substituting Eqs.~(\ref{ground3},\ref{invGamma}) into 
$R_n\equiv G_{n-1} \trace_n(G_n^\perp) G_{n-1}$
and noting that $\hat{e}_\alpha=g_\alpha\otimes \psi$ with $\alpha=1,\ldots,r$ 
are the only basis vectors of $\calG_n^\perp$ which are not orthogonal
to $\calG_{n-1} \otimes \CC^2$, we arrive at
\[
R_n=\sum_{\alpha,\beta=1}^{r} (I_r-X)^{-1}_{\alpha,\beta} \, G_{n-1} \left(|g_\alpha\rangle\langle g_\beta| \otimes \trace_2|\psi\rangle\langle \psi| )\right)G_{n-1}.
\]
Substituting $G_{n-1}=\sum_{i=1}^s |h_i\rangle\langle h_i|$ 
and $\trace_2|\psi\rangle\langle \psi|=p_0|w_0\rangle\langle w_0|+p_1 |w_1\rangle\langle w_1|$ 
into the above equation
yields
\[
\langle h_i|R_n|h_j\rangle = \sum_{\alpha,\beta=1}^{r}  (I_r-X)^{-1}_{\alpha,\beta}
\left(p_0\langle h_i|g_\alpha \otimes w_0\rangle \cdot \langle g_\beta \otimes w_0|h_j\rangle+p_1\langle h_i|g_\alpha \otimes w_1\rangle \cdot \langle g_\beta \otimes w_1|h_j\rangle\right).
\]
Replacing the last two factors by matrix elements of  $M_0,M_1$ defined in Eq.~(\ref{gram}) and $I_r-X$ by Eq.~(\ref{IminusX}) 
 one arrives at Eq.~(\ref{Rsimple1}). 
\end{proof}


\section{Gapped phase}
\label{sec:gapped}

In this section we prove the second part of Theorem~\ref{thm:main}, namely
\begin{gappedphasethm}
Suppose  the eigenvalues of $T_\psi$ have different magnitudes or 
both eigenvalues are equal to zero. Then
the spectral gap of $H_n(\psi)$ is lower bounded by a positive constant independent of $n$. 
\end{gappedphasethm}

Let us first consider the simple case when
both eigenvalues of $T_\psi$ are equal to zero. 
Using the canonical form of $\psi$ established  in Proposition~\ref{prop:canonical} 
and Eq.~(\ref{transfer}) one can check that this is possible only if 
$|\psi\rangle=(U\otimes U)|1,1\rangle$ for some single-qubit unitary operator $U$.
Thus the Hamiltonian $H_n(\psi)$ is a sum of pairwise commuting projectors
and $\gamma(\psi,n)\ge 1$ for all $n$ which proves the desired lower bound.

In the rest of this section we assume that  the eigenvalues of $T_\psi$ have distinct magnitudes.
In this case the eigenvectors of $T_\psi$ must be  linearly independent. Let us first introduce some notation. Suppose $S\subseteq [n]$ is a consecutive block of qubits. We shall write $G_S$ for the projector onto the ground space of the truncated  Hamiltonian 
\[
\sum_{\{i,i+1\} \subseteq S} |\psi\rangle\langle \psi|_{i,i+1}
\]
obtained from $H_n(\psi)$ by retaining only 
the terms fully contained in $S$. The projector $G_S$ acts trivially on all qubits in the complement of $S$.
Note that $G_n=G_S$ in the case where $S$ is the entire chain. 

Our starting point  is a general lower bound on the gap of 1D frustration-free Hamiltonians  due to Nachtergaele~\cite{Nachtergaele1996}.
Specializing Theorem~3 of Ref.~\cite{Nachtergaele1996} to our case one gets the following lemma.
\begin{Lemma}[Nachtergaele~\cite{Nachtergaele1996}]
\label{lemma:N}
Suppose there exists an integer $r\ge 1$ and a real number $\epsilon<(r+1)^{-1/2}$
such that for all large enough $n$ and for the partition $[n]=ABC$ with $|B|= r$ and $|C|=1$ one has
 $\|G_{ABC}- G_{AB} G_{BC}\|\le \epsilon$.
  Then 
\begin{equation}
\label{Delta_n}
\gamma(\psi,n) \ge \frac{\gamma(\psi,r+1)}{r+1}\left(1-\epsilon (r+1)^{1/2}\right)^2
\end{equation}
for all large enough $n$.
\end{Lemma}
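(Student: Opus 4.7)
The plan is to prove Nachtergaele's lemma by the standard martingale/telescoping method: combine a purely local gap estimate on overlapping $(r+1)$-qubit blocks with the approximate factorization hypothesis, taking care that the $\epsilon$ errors are controlled by a single Cauchy--Schwarz across $r+1$ residue classes rather than accumulated across the whole chain.

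First I would establish a local-to-block lower bound on $H_n$. For $k=1,\dots,n-r$ let $B_k=[k,k+r]$ be a block of $r+1$ consecutive qubits, let $H_{B_k}=\sum_{j=k}^{k+r-1}|\psi\rangle\langle\psi|_{j,j+1}$ be the restriction of $H_n$ to $B_k$, and set $P_k=I-G_{B_k}$. By the very definition of $\gamma(\psi,r+1)$ applied to the chain supported on $B_k$, one has $H_{B_k}\ge \gamma(\psi,r+1)P_k$. Summing over $k$ and observing that each interaction $|\psi\rangle\langle\psi|_{j,j+1}$ appears in at most $r+1$ of the block Hamiltonians (the standard overlap count) gives the operator inequality
\[
H_n\;\ge\;\frac{\gamma(\psi,r+1)}{r+1}\sum_{k}P_k.
\]

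Next I would convert the hypothesis into the comparison $\sum_k P_k\ge (1-\epsilon\sqrt{r+1})^{2}(I-G_n)$ in operator order. Set $E_k=G_{[1,k]}$ and $F_k=G_{[k-r,k]}$. Frustration-freeness forces $E_k\le E_{k-1}\otimes I$, hence $E_kE_{k-1}=E_k$, so the differences $D_k:=E_{k-1}-E_k$ are pairwise orthogonal projectors whose telescoping sum reproduces $I-G_n$ up to a finite-rank boundary term $I-E_{k_0}$ that is irrelevant for the thermodynamic gap. Applying the hypothesis to the sub-chain $[1,k]$ (for all $k\ge k_0$, using translation-invariance so the constant $\epsilon$ persists) gives $\|E_k-E_{k-1}F_k\|\le\epsilon$, and substituting into $D_k=E_{k-1}(I-E_k)$ yields
\[
\|D_k-E_{k-1}(I-F_k)\|\le\epsilon.
\]
For any $\phi$ with $G_n\phi=0$ and $\|\phi\|=1$, orthogonality of the $D_k$ gives $\sum_k\|D_k\phi\|^2=1$. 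The decisive step is to bound $\|D_k\phi\|$ by $\|(I-F_k)\phi\|+\epsilon$ and then \emph{group the indices $k$ into the $r+1$ residue classes modulo $r+1$}, so that within each class the supports of the operators $F_k$ are pairwise disjoint and the corresponding sub-sum of $\|(I-F_k)\phi\|^2$ is dominated by $\sum_k\langle\phi|P_k|\phi\rangle$. Applying Cauchy--Schwarz only across these $r+1$ classes (rather than across all $k$) produces
\[
1\;\le\;\sqrt{r+1}\,\Bigl(\sum_{k}\langle\phi|P_k|\phi\rangle\Bigr)^{1/2}+\epsilon\sqrt{r+1},
\]
which rearranges to $\sum_k\langle\phi|P_k|\phi\rangle\ge (1-\epsilon\sqrt{r+1})^{2}$. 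Substituting into the first step gives $\langle\phi|H_n|\phi\rangle\ge \gamma(\psi,r+1)(r+1)^{-1}(1-\epsilon\sqrt{r+1})^{2}$, which is exactly the claim.

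The central obstacle is the Cauchy--Schwarz bookkeeping in the second step. A naive estimate $\|D_k\phi\|^2\le 2\|(I-F_k)\phi\|^2+2\epsilon^{2}$ summed over $k$ would produce an error of order $n\epsilon^{2}$ and destroy the bound as $n\to\infty$, so one must exploit both the mutual orthogonality of the $D_k$ and the short spatial support of $F_k$ to cap the error accumulation at a factor $r+1$ independent of $n$. The residue-class partitioning is the device that achieves this; working out the overlap/orthogonality bookkeeping carefully (and handling the finite-rank boundary term $I-E_{k_0}$) is where all of the care in the proof sits.
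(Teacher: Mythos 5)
The paper offers no proof of this lemma at all --- it is imported verbatim from Theorem~3 of \cite{Nachtergaele1996} --- so the benchmark is Nachtergaele's martingale-method argument. Your setup coincides with it and is correct as far as it goes: the local-gap/overlap bound $H_n\ge\frac{\gamma(\psi,r+1)}{r+1}\sum_k P_k$, the mutually orthogonal differences $D_k=E_{k-1}-E_k$, and the consequence $\|D_k-E_{k-1}(I-F_k)\|\le\epsilon$ of the hypothesis (with $AB=[1,k-1]$, $BC=[k-r,k]$) are all right, and you correctly diagnose that accumulating the $\epsilon$'s term by term is fatal. The genuine gap is exactly at the step you call decisive. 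The dangerous quantity is $\epsilon\sum_k\|D_k\phi\|$ (equivalently $n\epsilon^2$ after squaring), and it involves only the $D_k$'s; the operators $F_k$ do not occur in it, so the pairwise disjointness of the supports of the $F_k$ within a residue class mod $r+1$ has nothing to act on. Carried out honestly, the steps you list yield $1\le\bigl(\sum_k\|(I-F_k)\phi\|^2\bigr)^{1/2}+\epsilon\sqrt{N}$ with $N\sim n$ terms, which is vacuous in the thermodynamic limit --- precisely the failure mode you set out to evade, and no reading of the residue-class device repairs it. Two smaller defects: even granting your displayed inequality, it rearranges to $\sum_k\langle\phi|P_k|\phi\rangle\ge(r+1)^{-1}(1-\epsilon\sqrt{r+1})^2$, not $(1-\epsilon\sqrt{r+1})^2$, so your final bound would be $\frac{\gamma(\psi,r+1)}{(r+1)^2}(1-\epsilon\sqrt{r+1})^2$, off by a factor $r+1$ from the statement; and the boundary term $I-E_{k_0}=I-G_{[1,k_0]}$ is not finite-rank, though it folds in harmlessly as one extra $D$-projector whose block is all of $[1,k_0]$.

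What is actually needed are two structural relations between $F_k$ and the \emph{whole family} $\{D_m\}$, not a grouping of the index $k$: (i) frustration-freeness gives $E_m\le F_k$ whenever $[k-r,k]\subseteq[1,m]$, hence $F_kD_m=D_m$ for all $m\ge k+1$; and (ii) $F_k$ commutes with $D_m$ for all $m\le k-r-1$, since $D_m$ acts nontrivially only on $[1,m]$, which is disjoint from $[k-r,k]$. With these, write $1=\sum_k\langle\phi|D_k|\phi\rangle=\sum_k\langle(I-F_k)\phi|D_k\phi\rangle+\sum_k\langle\phi|F_kD_k|\phi\rangle$. The first sum is $\le\bigl(\sum_k\langle\phi|P_k|\phi\rangle\bigr)^{1/2}$ by a single Cauchy--Schwarz, using $\sum_k\|D_k\phi\|^2=1$ and the reindexing $I-F_k=P_{k-r}$. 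In the second sum, expand $\phi=\sum_m D_m\phi$: relations (i)--(ii) together with $D_mD_k=0$ annihilate every $m\notin\{k-r,\dots,k\}$, while the adjoint of the hypothesis gives $\|F_kD_k\|=\|F_kE_{k-1}-E_k\|\le\epsilon$ (using $F_kE_k=E_k$); hence $|\langle\phi|F_kD_k|\phi\rangle|\le\epsilon\,\|D_k\phi\|\bigl(\sum_{m=k-r}^{k}\|D_m\phi\|^2\bigr)^{1/2}$, and a final Cauchy--Schwarz over $k$ caps the total at $\epsilon\sqrt{r+1}$. This produces $1\le\bigl(\sum_k\langle\phi|P_k|\phi\rangle\bigr)^{1/2}+\epsilon\sqrt{r+1}$ --- note, with no $\sqrt{r+1}$ on the first term --- which combined with your first step gives exactly the constant in the lemma. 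Since your proposal neither states nor uses (i) or (ii), and no Cauchy--Schwarz bookkeeping over residue classes can substitute for them, the proof as written does not go through.
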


Note that the right-hand side of Eq.~(\ref{Delta_n}) is a positive constant independent of $n$.
Thus Lemma~\ref{lemma:N} 
reduces the problem of lower bounding the spectral gap of $H_n(\psi)$ to that
of upper bounding the  quantity  $\|G_{ABC}- G_{AB} G_{BC}\|$.
Our main technical result is an upper bound on this quantity 
that decays exponentially with the size of $B$. 
\begin{theorem}
\label{thm:Nbound}
Let $\mu_1,\mu_2$ be the eigenvalues of $T_\psi$ such that $|\mu_1|<|\mu_2|$.
Define $\lambda=\mu_2/\mu_1$.
Let $c$ be the inner product between the normalized eigenvectors of $T_\psi$.
Consider any partition $[n]=ABC$ such that $|B|=r$. Then
\begin{equation}
\label{Nbound}
\| G_{ABC}-G_{AB}G_{BC}\|\le O\left( r^{1/2} |\lambda|^{-r/8}\right) + O\left(|c|^{r/8}\right),
\end{equation}
where the constant coefficients in $O(\cdot)$ depend only on the forbidden state $\psi$.
If $\mu_1=0$ then \eqref{Nbound} holds with a formal replacement $\lambda=\infty$ which sets the first term to zero.
\end{theorem}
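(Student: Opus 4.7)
\textbf{Proof plan for Theorem~\ref{thm:Nbound}.}
The plan is to combine three ingredients: an explicit parametrization of $G_n$ via $T_\psi^{\mathrm{all}}$, a decay-of-correlations estimate for the maximally-mixed ground state, and the monotonicity inequality of Lemma~\ref{lemma:mono}. First, since $|\mu_1|\ne|\mu_2|$, $T_\psi$ is diagonalizable; let $|v_1\rangle,|v_2\rangle$ be its normalized eigenvectors, with overlap $c=\langle v_1|v_2\rangle$. Working in this non-orthogonal eigenbasis, Proposition~\ref{gsdegen} identifies $\calG_n$ with $T_\psi^{\mathrm{all}}$ applied to the symmetric subspace. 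A convenient basis of the latter consists of the symmetrized states $|s_k\rangle$ proportional to the sum of all length-$n$ tensor products containing exactly $k$ copies of $|v_2\rangle$. Applying $T_\psi^{\mathrm{all}}=I\otimes T_\psi\otimes T_\psi^2\otimes\cdots\otimes T_\psi^{n-1}$ to $|s_k\rangle$ and factoring out a global $\mu_1^{n(n-1)/2}$, each subset $S\subseteq[n]$ of size $k$ contributes the product $\otimes_j v_{[j\in S]+1}$ with weight $\lambda^{\sum_{j\in S}(j-1)}$, where $\lambda=\mu_2/\mu_1$. Since $|\lambda|>1$, this weight concentrates on configurations where the $v_2$'s sit at the right end of the chain, giving a ``domain-wall'' structure whose position has a geometric tail of rate $|\lambda|^{-2}$.

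Next I would establish a decay-of-correlations estimate for the completely-mixed state $\rho_n=G_n/\mathrm{Tr}\,G_n$ (the content of Lemma~\ref{lemma:decay}): for observables $X_A$ and $Y_C$ supported on regions separated by a block of size $r$,
\[
\bigl|\langle X_A\otimes Y_C\rangle_{\rho_n}-\langle X_A\rangle_{\rho_n}\langle Y_C\rangle_{\rho_n}\bigr|
\;\le\; O\!\left(r^{1/2}|\lambda|^{-\Omega(r)}\right)+O\!\left(|c|^{\Omega(r)}\right)\|X_A\|\|Y_C\|.
\]
The $|\lambda|^{-\Omega(r)}$ term arises because in each sector $k$ the probability that the domain wall falls inside the separating block is exponentially small in $r$; conditional on the wall lying entirely in $A$ or entirely in $C$, the $A$- and $C$-pieces factorize. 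The $|c|^{\Omega(r)}$ contribution arises from the residual non-orthogonality of $v_1$ and $v_2$: overlaps between ``all $v_1$'' and ``all $v_2$'' blocks of length $\ell$ scale as $c^{\ell}$. Both estimates follow from direct inspection of the Gram matrix of the basis $\{T_\psi^{\mathrm{all}}|s_k\rangle\}$ in the non-orthogonal basis $\{v_1,v_2\}$, together with a union/Cauchy--Schwarz bound over the $n+1$ sectors (responsible for the $r^{1/2}$ factor).

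To upgrade this scalar estimate to the operator-norm bound on $\|G_{ABC}-G_{AB}G_{BC}\|$, I would invoke the Region Exclusion strategy outlined in Section~\ref{subs:regionexclusion}. Partition $B$ into several consecutive sub-blocks and use monotonicity (Lemma~\ref{lemma:mono}) to compare local ground-space projectors after tracing out individual sub-blocks; decay of correlations controls the discrepancy between the ``global'' projector $G_{ABC}$ and the ``decoupled'' approximation $G_{AB}G_{BC}$. Iterating the Region Exclusion step a constant number of times distributes the $\Omega(r)$ in the exponent across the sub-blocks, yielding the stated rate $r/8$. The case $\mu_1=0$ is handled separately: then $T_\psi^{\mathrm{all}}$ acts on the symmetric subspace as a nilpotent in one direction and the first term of the bound is vacuous.

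The main obstacle is that there is no closed form for $G_n$, and because $v_1,v_2$ are non-orthogonal the ground-space basis $\{T_\psi^{\mathrm{all}}|s_k\rangle\}_{k=0}^n$ has a Gram matrix whose inverse could a priori carry small eigenvalues and destroy the exponential decay. This is exactly where Lemma~\ref{lemma:mono} is crucial: the inequality $\mathrm{Tr}_n(G_n)\ge G_{n-1}$ forces every nonzero eigenvalue of a reduced ground-space projector to be $\ge 1$, which prevents the Gram denominators from blowing up and lets one convert the scalar correlation-decay estimate into the operator-norm bound. Getting the $r^{1/2}$ prefactor clean, rather than polynomial of higher degree, is the most delicate point and will require using monotonicity in combination with Nachtergaele's frustration-free framework rather than a naive expansion in the non-orthogonal basis.
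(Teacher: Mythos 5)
Your plan follows the same skeleton as the paper (structure of $\calG_n$ via $T_\psi^{\mathrm{all}}$, a correlation estimate, monotonicity, region exclusion, Nachtergaele), but the central technical claim in your second paragraph is false, and the proof cannot go through as written. You assert a general clustering bound $|\langle X_A\otimes Y_C\rangle_{\rho_n}-\langle X_A\rangle_{\rho_n}\langle Y_C\rangle_{\rho_n}|\le (O(r^{1/2}|\lambda|^{-\Omega(r)})+O(|c|^{\Omega(r)}))\|X_A\|\|Y_C\|$ for the maximally mixed ground state $\rho_n=G_n/\trace G_n$. This state does \emph{not} have exponential decay of connected correlations: as your own first paragraph observes, the ground space is (approximately) spanned by domain-wall states $|0^{k}\,v^\perp\, v^{\,n-k-1}\rangle$, so $\rho_n$ is essentially a uniform classical mixture over the wall position, and knowing a spin near site $i$ reveals information about the wall that correlates with spins arbitrarily far away. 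Concretely, for $\psi=|1,0\rangle$ (so $v=|1\rangle$, $v^\perp=|0\rangle$, $c=0$, $\mu_1=0$) one has $\rho_n=\frac{1}{n+1}\sum_{k=0}^{n}|0^k1^{n-k}\rangle\langle 0^k1^{n-k}|$, and with $Z_i=|0\rangle\langle0|_i-|1\rangle\langle1|_i$ at $i=n/4$, $j=3n/4$ the connected correlator $\langle Z_iZ_j\rangle-\langle Z_i\rangle\langle Z_j\rangle$ tends to $1/4$, not to $0$; your claimed bound would force it to vanish. The quantity the paper actually controls, $\tau(i,j,n)=\trace(G_n\,|1\rangle\langle1|_i\otimes|v^\perp\rangle\langle v^\perp|_j)$ for $i<j$, is not a connected correlator at all: it is small because the two projectors select \emph{incompatible} domain-wall configurations (the wall to the left of $i$ and simultaneously at or to the right of $j$), and the ordering $i<j$ together with the specific choice of $|1\rangle\langle1|$ on the left and $|v^\perp\rangle\langle v^\perp|$ on the right is essential. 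A generic $(X_A,Y_C)$ statement is simply not available.

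This matters downstream: the Region Exclusion lemmas are not a black box that consumes ``decay of correlations''; their statements and proofs are built precisely around which projector ($|0\rangle\langle0|_A$ on the left, $|v\rangle\langle v|_C$ or $|v\rangle\langle v|_C^\perp$ on the right) can be inserted or removed next to a ground-space projector at small cost, and each such step reduces to bounding $\tau(i,j,n)$ or the one-point function $\tau(n)=\trace(G_n|v^\perp\rangle\langle v^\perp|_n)$ (whose convergence as $n\to\infty$ is where $\trace_n(G_n)\ge G_{n-1}$ enters, via monotonicity of $\tau(n)$ in $n$ — not, as you suggest, to control Gram-matrix denominators in a basis expansion). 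Your third paragraph leaves the actual decomposition of $G_{ABC}-G_{AB}G_{BC}$ unspecified, so even granting a correct correlation estimate the argument is not yet a proof. To repair the plan you would need to (i) replace the clustering claim by the ordered, non-connected estimate on $\tau(i,j,n)$, proved from the representation $|\phi\rangle=T_\psi^{\mathrm{all}}|\chi\rangle$ and the relation ${}_{i,j}\langle\psi^{j-i}|\phi\rangle=0$ with $|\psi^r\rangle=|v^\perp 1\rangle-(\lambda^*)^r|1v^\perp\rangle$, and (ii) spell out the specific region-exclusion identities (e.g.\ $(G_{ABC}-G_{AB}\otimes I_C)|v\rangle\langle v|_{BC}\approx 0$ and $(G_{ABCD}-|0\rangle\langle0|_A\otimes G_{BCD})|v\rangle\langle v|_C^\perp\approx 0$) and chain them to produce $G_{AB}G_{BC}-G_{ABC}$.
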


Note that $|c|<1$ since the eigenvectors of $T_\psi$ are
linearly independent. Furthermore, since $|\lambda|>1$, the right-hand side of Eq.~(\ref{Nbound})
is an exponentially decaying function of $r$.  
Therefore we can choose a constant $r$ depending only on the forbidden
state $\psi$ such that 
\[
\|G_{ABC}-G_{AB}G_{BC}\| \le \epsilon\equiv  \frac1{2(r+1)^{1/2}}
\]
for all $n>r$. Substituting this into Lemma~\ref{lemma:N} one gets
\[
\gamma(\psi,n)\ge \frac{\gamma(\psi,r+1)}{4(r+1)}
\]
for all large enough $n$ which proves the gapped phase theorem. 

In the rest of this section we prove Theorem~\ref{thm:Nbound}. 
We shall first consider the case where $\psi$ is an entangled state ($\mu_1\ne 0$).
The main technical difficulty we had to overcome is 
a lack of  explicit formulas for the projectors
$G_{ABC}, G_{AB}$, and $G_{BC}$.
At a high level, our approach is to develop a set of identities 
relating the global ground space projector such as $G_{ABC}$
and the local ones such as $G_A$ or $G_{AB}$. 
These identities hold with a small error  controlled by the size of the regions. 
Our proof of the theorem uses three identities of this type 
which are stated as ``Region Exclusion" lemmas in Section~\ref{subs:regionexclusion}.
We use the Region Exclusion lemmas to decompose the operator $G_{ABC}-G_{AB}G_{BC}$
in Eq.~(\ref{Nbound}) into a sum of several terms and to show that the norm of each term
is exponentially small in $r$, see Section~\ref{subs:finalgappedproof}.
The proof of the Region Exclusion lemmas combines two  ingredients: monotonicity of the ground space projectors under the partial trace (established in Section~\ref{sec:mono}) and the fact that certain correlation functions in the ground space decay exponentially (established in Section~\ref{subs:corrfunc}).

In Section~\ref{subs:simple} we consider the case where $\psi$ is a product state ($\mu_1=0$). In this case the orthonormal basis for the ground space constructed in Section~\ref{subs:open}
gives an explicit formula for the ground space projector. We use this formula to establish the Region Exclusion lemmas (for the $\mu_1=0$ case) in a more direct way. The corresponding special case of the theorem then follows from the Region Exclusion identities.

Before proceeding, let us establish some notation and conventions. 
Recall that a local unitary transformation $\psi\to (U\otimes U)\psi$
preserves  eigenvalues of $H_n(\psi)$ and maps $T_\psi$
to $(\det{U})^{-1} \cdot U T_\psi U^\dag$, see Section~\ref{sec:gapless}.
We shall choose the unitary $U$ to fix one of the eigenvectors
of $T_\psi$. Specifically, in the rest of this section we shall assume that  
\begin{equation}
\label{Tevecs}
T_\psi |0\rangle=\mu_1 |0\rangle \quad \text{and} \quad T_\psi |v\rangle=\mu_2 |v\rangle
\end{equation}
for some  state 
\[
|v\rangle=c|0\rangle + s|1\rangle, \quad \mbox{where $|c|^2+s^2=1$ and $s>0$}.
\]
Note that $c$ is the inner product between the eigenvectors of $T_\psi$, as defined
in the statement of Theorem~\ref{thm:Nbound}.
It will also be convenient to define a state
\[
|v^\perp\rangle=s|0\rangle-c^*|1\rangle.
\]
Given a set of qubits $S$ and a single-qubit state $|\theta\rangle$ we shall write $|\theta \rangle_S$ for the
 product state $|\theta \rangle^{\otimes |S|}$. We shall write $|\theta\rangle\langle\theta|_S$ for the projector onto this state and $|\theta\rangle\langle\theta|_S^\perp=I_S-|\theta\rangle\langle\theta|_S$.

\subsection{Correlation functions}
\label{subs:corrfunc}
In this section we show that certain ground space correlation functions decay exponentially. Specifically, define
\begin{equation}
\label{tau2point}
\tau(i,j,n)=\trace{\left(G_n |1\rangle\langle 1|_i\otimes |v^{\perp}\rangle\langle v^{\perp}|_j\right)},
\end{equation}
and
\begin{equation}
\label{tau1point}
\tau(n)=\trace{\left( G_n  |v^\perp\rangle\langle v^\perp|_n\right)}
\end{equation}
where $1\leq i<j\leq n$. For notational convenience we have suppressed the dependence of these functions on the forbidden state $\psi$. 
Our main result in this section is as follows.
\begin{Lemma}\label{lemma:decay}
The sequence $\{\tau(n)\}_{n\ge 2}$ is monotonically increasing
and has a finite limit  $\tau(\infty)$ such that
\begin{equation}
\label{limit}
0\le \tau(\infty)-\tau(n) \le O\left(n|\lambda|^{-2n}\right) \quad \quad \mbox{for all $n\ge 2$}.
\end{equation} 
Furthermore,
\begin{equation}
\label{decay}
\tau(i,j,n)\le O\left((j-i) \cdot |\lambda|^{-2(j-i)}\right) \quad \quad \mbox{for all $1\leq i<j\leq n$}.
\end{equation}
Here the constant coefficients in $O(\cdot)$ depend only on the forbidden state $\psi$.
\end{Lemma}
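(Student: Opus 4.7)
The argument combines two ingredients: the explicit description of the ground space from Proposition~\ref{gsdegen}, and the partial-trace monotonicity of Lemma~\ref{lemma:mono}.

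\emph{Monotonicity of $\tau(n)$.} Lemma~\ref{lemma:mono} does not use translation invariance, so it applies equally to the chain obtained by reversing qubit order (i.e.\ replacing $\psi$ with $P\psi$, where $P$ is the swap). Conjugating by the qubit-reversal unitary yields the ``left-end'' version $\mathrm{Tr}_1(G_{n+1})\geq G_n^{(2,\ldots,n+1)}$, where the right-hand side is the ground-space projector of the $n$-site chain acting on qubits $2,\ldots,n+1$. Multiplying by $|v^\perp\rangle\langle v^\perp|_{n+1}$ and tracing gives $\tau(n+1)\geq\tau(n)$.

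\emph{Explicit basis.} By Proposition~\ref{gsdegen}, working in the eigenbasis \eqref{Tevecs} of $T_\psi$, a convenient spanning set for the ground space is $\{|\Psi_k^n\rangle\}_{k=0}^n$, where $|\Psi_k^n\rangle$ is the coefficient of $z^k$ in $|\phi_z\rangle=\bigotimes_{i=1}^n(|0\rangle+z\lambda^{i-1}|v\rangle)$, i.e.\
\[
|\Psi_k^n\rangle=\sum_{\substack{S\subseteq[n]\\|S|=k}}\lambda^{\sum_{i\in S}(i-1)}\,|v\rangle_S\otimes|0\rangle_{\bar S}.
\]
Writing $G_n=\sum_{j,k}|\Psi_j^n\rangle(A_n^{-1})_{j,k}\langle\Psi_k^n|$ with Gram matrix $(A_n)_{j,k}=\langle\Psi_j^n|\Psi_k^n\rangle$ and using $\langle v^\perp|0\rangle=s$, $\langle v^\perp|v\rangle=0$, $\langle 1|0\rangle=0$, $\langle 1|v\rangle=s$, the measurement projectors factorize nicely on basis states: for instance $\langle v^\perp|_n|\Psi_k^n\rangle=s|\Psi_k^{n-1}\rangle$ for $k<n$ and zero for $k=n$. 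A short calculation then gives $\tau(n)=s^2\,\mathrm{Tr}\!\left[(A_n^{-1})_{n\times n}A_{n-1}\right]$, where $(A_n^{-1})_{n\times n}$ denotes the top-left $n\times n$ block.

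\emph{Clean case $c=0$.} When $|v\rangle=|1\rangle$ the basis states $\Psi_k^n$ lie in disjoint magnetization sectors, so $A_n$ is diagonal with entries $\alpha_k(n)=e_k(1,|\lambda|^2,\ldots,|\lambda|^{2(n-1)})$ and the formula collapses to $\tau(n)=\sum_{k=0}^{n-1}\alpha_k(n-1)/\alpha_k(n)$. The $q$-binomial identity ($q=|\lambda|^2$) yields the telescoping simplification
\[
\frac{\alpha_k(n-1)}{\alpha_k(n)}=q^{-k}\,\frac{1-q^{-(n-k)}}{1-q^{-n}},\qquad \tau(n)=\frac{1}{1-q^{-1}}-\frac{n\,q^{-n}}{1-q^{-n}},
\]
from which one reads off $\tau(\infty)=1/(1-q^{-1})$ and the sharp bound $\tau(\infty)-\tau(n)=n/(q^n-1)=O(n|\lambda|^{-2n})$. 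The two-point function is handled analogously: $|1\rangle\langle 1|_i\otimes|v^\perp\rangle\langle v^\perp|_j$ restricts the sum over $S$ to subsets containing $i$ but not $j$, producing ratios of elementary symmetric polynomials in $1,|\lambda|^2,\ldots,|\lambda|^{2(n-1)}$ with two variables removed; removal of the larger variable $|\lambda|^{2(j-1)}$ generates the $|\lambda|^{-2(j-i)}$ suppression, while the sum over $k$ supplies the $(j-i)$ prefactor.

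\emph{General $c$ and main obstacle.} For $c\neq 0$ the basis $\{\Psi_k^n\}$ is no longer orthogonal: the Gram matrix picks up off-diagonal entries proportional to positive powers of $c,c^*$, which are strictly subleading relative to the diagonal in the $|\lambda|$-hierarchy. A Schur-complement expansion of the relevant block of $A_n^{-1}$ reduces the general case to the orthogonal one modulo correction terms. The main technical challenge is to track the $|\lambda|$-weights of the off-diagonal entries carefully enough to verify that these corrections do not inflate the linear-in-$n$ polynomial prefactor in the final bound.
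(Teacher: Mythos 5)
Your monotonicity argument for $\tau(n)$ is correct and is exactly the paper's: apply the Monotonicity Lemma to the left--right flipped chain to get $\trace_1(G_{n+1})\ge G_n$ and trace against $|v^\perp\rangle\langle v^\perp|$. The decay bounds, however, are where the real content of the lemma lies, and there your proof has a genuine gap: you only complete the argument in the special case $c=0$, where the spanning set $\{\Psi_k^n\}$ is orthogonal and the Gram matrix is diagonal (your $q$-binomial computation there is correct and in fact sharper than the paper's bound). For general $c\neq 0$ you state that a Schur-complement expansion of $A_n^{-1}$ "reduces to the orthogonal case modulo corrections" and then explicitly flag controlling those corrections as an unresolved "main technical challenge." That is not a proof; it is precisely the hard step. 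Inverting a Gram matrix whose off-diagonal entries carry powers of $c$ but whose dimension grows with $n$ is delicate (the number of off-diagonal entries grows, and nothing you have written rules out the inverse picking up contributions that degrade the $|\lambda|^{-2(j-i)}$ decay or the linear prefactor), and the $1/(1-|c|)$-type constants in the lemma have to come from somewhere.

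The paper sidesteps the Gram-matrix inversion entirely. It first bounds the \emph{worst-case} single-ground-state correlator $\sigma(i,j,n)=\max_{\phi\in\calG_n}\langle\phi|\,|1\rangle\langle 1|_i\otimes|v^\perp\rangle\langle v^\perp|_j\,|\phi\rangle$: writing $\phi=I\otimes T_\psi\otimes\cdots\otimes T_\psi^{n-1}\chi$ with $\chi$ symmetric forces the exact linear relation ${}_{i,j}\langle\psi^{j-i}|\phi\rangle=0$ with $|\psi^r\rangle=|v^\perp 1\rangle-(\lambda^*)^r|1v^\perp\rangle$, which converts the $\langle v^\perp|$-amplitude at site $j$ into a $\lambda^{-(j-i)}$-suppressed $\langle 1|$-amplitude at site $i$; a Cauchy--Schwarz step produces the $1/(1-|c|)$ constant. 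Then $\tau(i,j,n)$ is bounded by $(\dim\calG_n)\cdot\sigma=(n+1)\sigma$, and the various cases of $(i,j)$ are reduced to $\tau(1,n,n)$ by recursions built from the two monotonicity facts $G_n\le I\otimes G_{n-1}$, $G_n\le G_{n-1}\otimes I$ together with $\trace_n(G_n)\ge G_{n-1}$. If you want to salvage your approach you would need to actually prove a quantitative bound on $\|(A_n^{-1})_{jk}\|$ uniform in $n$; otherwise the cleaner route is the paper's, which never needs $A_n^{-1}$ at all.
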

\begin{proof}
Let us define yet another correlation function 
\begin{equation}
\sigma(i,j,n)=\max_{\phi\in \calG_n} \; \langle\phi|\left(|1\rangle\langle1|_{i}\otimes|v^{\perp}\rangle\langle v^{\perp}|_{j}\right)|\phi\rangle,
\label{Kdef}
\end{equation}
where the maximum is over normalized ground states, that is, $\|\phi\|=1$.
\begin{prop}\label{prop:easyfact}
\begin{equation}
\sigma(i,j,n) \leq|\lambda|^{-2(j-i)}\frac{s^{2}}{1-|c|}
\end{equation}
for $1\leq i<j\leq n$.
\end{prop}
\begin{proof}
Define
\[
|\psi^{r}\rangle=|v^{\perp}1\rangle- (\lambda^*)^r |1v^{\perp}\rangle
\]
Using Eq.~\eqref{Tevecs} we see that
$\langle 1| T_\psi =\mu_2 \langle 1|$ and $\langle v^\perp|T_\psi =\mu_1 \langle v^\perp|$.
Therefore 
\begin{equation}
\label{psir}
\langle \psi^r |(I\otimes T_\psi^r)\sim \mu_2^r \langle v^\perp 1| - (\lambda \mu_1)^r \langle 1v^\perp|
\sim \langle v^\perp 1| -  \langle 1v^\perp| \sim \langle \epsilon|,
\end{equation}
where $|\epsilon\rangle =|0,1\rangle-|1,0\rangle$
(recall that $\sim$ means proportional to). Comparing Eq.~(\ref{Teps}) and Eq.~\eqref{psir} 
one infers that $\psi^1$ is  the forbidden state, that is, $\psi \sim \psi^1$.

Let $\phi\in \mathcal{G}_n$ be a normalized state (i.e., $\left\Vert \phi\right\Vert=1$) for which the maximum in
 Eq.~\eqref{Kdef} is achieved, so 
\[
\sigma(i,j,n)= \langle\phi|\left(|1\rangle\langle1|_{i}\otimes|v^{\perp}\rangle\langle v^{\perp}|_{j}\right)|\phi\rangle.
\]
Since $\phi\in \mathcal{G}_n$, by Proposition \ref{gsdegen} it can be written $|\phi\rangle=I\otimes T_\psi \otimes T^{2}_\psi \ldots\otimes T^{n-1}_\psi|\chi\rangle$ where $|\chi\rangle$ belongs to
the symmetric subspace. Using this fact and Eq.~\eqref{psir} we see that
\begin{equation}
\label{ijpsi}
_{i,j}\langle\psi^{j-i}|\phi\rangle=0
\end{equation}
for all integers $1\le i<j\le n$.
Writing
\[
|\phi\rangle=|0\rangle_{i}|\phi_{0}^{i}\rangle_{[n]\setminus{i}}+|v\rangle_{i}|\phi_{1}^{i}\rangle_{[n]\setminus{i}}
\]
and substituting this into Eq.~(\ref{ijpsi}) one gets
\[
s\left(_{j}\langle1|\phi_{0}^{i}\rangle-\lambda^{j-i}{}_{j}\langle v^{\perp}|\phi_{1}^{i}\rangle\right)=0
\]
which implies
\begin{equation}
\langle\phi_{1}^{i}|\left(|v^{\perp}\rangle\langle v^{\perp}|_{j}\right)|\phi_{1}^{i}\rangle=\frac{1}{|\lambda|^{2(j-i)}}\langle\phi_{0}^{i}|\left(|1\rangle\langle1|_{j}\right)|\phi_{0}^{i}\rangle.\label{eq:implication}
\end{equation}
Using the fact that $|\phi\rangle$ is normalized we have
\begin{equation}
1=\langle\phi|\phi\rangle=\langle\phi_{0}^{i}|\phi_{0}^{i}\rangle+\langle\phi_{1}^{i}|\phi_{1}^{i}\rangle+2\mathrm{Re}\left(c\langle\phi_{0}^{i}|\phi_{1}^{i}\rangle\right).\label{eq:normalized}
\end{equation}
We upper bound the magnitude of the third term using the Cauchy-Schwarz and the arithmetic/geometric mean inequality:
\[
\left|c\langle\phi_{0}^{i}|\phi_{1}^{i}\rangle\right|\leq|c|\sqrt{\langle\phi_{0}^{i}|\phi_{0}^{i}\rangle\langle\phi_{1}^{i}|\phi_{1}^{i}\rangle}\leq\frac{|c|}{2}\left(\langle\phi_{0}^{i}|\phi_{0}^{i}\rangle+\langle\phi_{1}^{i}|\phi_{1}^{i}\rangle\right).
\]
Substituting this into Eq.~(\ref{eq:normalized}) yields
\[
1\geq\left(1-|c|\right)\left(\langle\phi_{0}^{i}|\phi_{0}^{i}\rangle+\langle\phi_{1}^{i}|\phi_{1}^{i}\rangle\right)\geq\left(1-|c|\right)\langle\phi_{0}^{i}|\phi_{0}^{i}\rangle
\]
 and hence $\langle\phi_{0}^{i}|\phi_{0}^{i}\rangle\leq\frac{1}{1-|c|}$.
Using this fact and Eq.~(\ref{eq:implication}) we obtain
\[
\langle\phi_{1}^{i}|\left(|v^{\perp}\rangle\langle v^{\perp}|_{j}\right)|\phi_{1}^{i}\rangle\leq\frac{|\lambda|^{-2(j-i)}}{1-|c|}
\]
and thus 
\[
\sigma(i,j,n)=\langle\phi|\left(|1\rangle\langle1|_{i}\otimes|v^{\perp}\rangle\langle v^{\perp}|_{j}\right)|\phi\rangle=s^{2}\langle\phi_{1}^{i}|\left(|v^{\perp}\rangle\langle v^{\perp}|_{j}\right)|\phi_{1}^{i}\rangle\leq|\lambda|^{-2(j-i)}\frac{s^{2}}{1-|c|}.
\]
\end{proof}
\noindent
Now we are ready to prove Eq.~(\ref{limit}).
First, applying the Monotonicity Lemma (Lemma \ref{lemma:mono}) to the
left-right flipped chain  yields 
$\trace{}_1(G_n)\ge G_{n-1}$. Therefore 
\[
\tau(n)=\trace{\left( \trace{}_1{(G_n)}  |v^\perp\rangle\langle v^\perp|_{n-1} \right)} 
\ge \trace{\left( G_{n-1}  |v^\perp\rangle\langle v^\perp|_{n-1}\right)} =\tau(n-1),
\]
that is, $\tau(n)$ is monotonically increasing.

Inserting the identity decomposition $I=|0\rangle\langle 0|+|1\rangle\langle 1|$
on the first qubit in Eq.~(\ref{tau1point}) one gets
\begin{equation}
\label{tau01}
\tau(n)=\trace{(G_n |1\rangle\langle 1|_1 \otimes  |v^\perp\rangle\langle v^\perp|_n)}
+ \trace{(G_n |0\rangle\langle 0|_1 \otimes  |v^\perp\rangle\langle v^\perp|_n)}.
\end{equation}
The first term in Eq.~(\ref{tau01}) is upper bounded by
$(n+1)\sigma(1,n,n)$ since we can decompose 
$G_n=\sum_{a=0}^n |g_a\rangle\langle g_a|$ using some 
orthonormal basis $\{g_a\}$  of $\calG_n$ and use the fact that
\[
\langle g_a|(|1\rangle\langle 1|_1 \otimes  |v^\perp\rangle\langle v^\perp|_n)|g_a\rangle
\le \sigma(1,n,n)
\]
for each individual state $g_a$. 
The second term in Eq.~(\ref{tau01}) is upper bounded by 
$\tau(n-1)$, which follows from $G_n\le I \otimes  G_{n-1}$.
Thus 
\[
\tau(n)
\le (n+1) \sigma(1,n,n) + \tau(n-1).
\] 
Proposition~\ref{prop:easyfact} implies
$\sigma(1,n,n)=O(|\lambda|^{-2n})$, that is,
\[
0\le \tau(n)-\tau(n-1)\le O(n|\lambda|^{-2n}).
\]
This shows that $\tau(n)$ has a finite limit $\tau(\infty)$ at $n\to \infty$. 
Summing up the series produces the desired bound Eq.~(\ref{limit}).

The proof of  Eq.~(\ref{decay}) follows a similar strategy. First consider the case $i=1,j=n$. The same argument used above shows that 
\begin{equation}
\label{11n}
\tau(1,n,n)\le (n+1)\sigma(1,n,n)=O(n|\lambda|^{-2n}).
\end{equation}

Next suppose $i\geq 2$ and $j=n$. Inserting the identity decomposition $I=|0\rangle\langle0|+|1\rangle\langle 1|$
on the first qubit, using the fact that $G_n\le I\otimes G_{n-1}$, and noting that $|1\rangle\langle 1|_i\le I$, one gets
\begin{eqnarray}
\tau(i,n,n) & \le &  \tau(i-1,n-1,n-1) + 
 \trace{(G_n |1\rangle\langle 1|_1 \otimes  |1\rangle\langle 1|_i \otimes 
|v^\perp\rangle\langle v^\perp|_n )} \nonumber \\
&\le& \tau(i-1,n-1,n-1) + \tau(1,n,n).
\end{eqnarray}
This shows that 
\begin{equation}
\label{inn}
\tau(i,n,n)\le \sum_{k=n-i+1}^n \tau(1,k,k).
\end{equation}
Substituting Eq.~(\ref{11n}) into this bound and summing up the series, we get
\begin{equation}
\tau(i,n,n)\le O\left( (n-i)|\lambda|^{-2(n-i)}\right) \quad \text{for all} \quad i=1,\ldots,n-1.
\end{equation}
(Here we included the case $i=1$ which was handled in  Eq.~(\ref{11n})).

Finally, consider the case $j\leq n-1$.  Inserting the identity decomposition $I=|v\rangle\langle v|+|v^\perp\rangle\langle v^\perp|$ on the $n$-th qubit, using the fact that $G_n\le G_{n-1}\otimes I$, and noting that $|v^\perp\rangle\langle v^\perp|_j\le I$,  one gets
\[
\tau(i,j,n)\le \tau(i,j,n-1) + \trace{(G_n |1\rangle\langle 1|_i \otimes 
|v^\perp\rangle\langle v^\perp|_j \otimes |v^\perp\rangle\langle v^\perp|_n )}
\le \tau(i,j,n-1)+ \tau(i,n,n).
\]
This shows that 
\begin{equation}
\label{ijn1}
\tau(i,j,n) \le \sum_{m=j}^n \tau(i,m,m).
\end{equation}
Combining this with Eq.~(\ref{inn}) leads to the desired bound Eq.~(\ref{decay}).
\end{proof}

\subsection{Region exclusion lemmas}
\label{subs:regionexclusion}
To perform manipulations with ground space projectors
that involve several regions we now prove three region exclusion lemmas. These lemmas
allow one to exclude one of the regions from certain operators built from ground space projectors.

The first region exclusion lemma 
states that $G_{ABC} |v\rangle\langle v|_{BC}\approx G_{AB}\otimes I_C |v\rangle\langle v|_{BC}$.

\setcounter{regionexclusion}{4}
\begin{regionexclusion}
\label{lemma:C1}
Let $[n]=ABC$ with $|B|=j$. Then
\[
\left\Vert \left(G_{ABC}-G_{AB}\otimes I_C\right)|v\rangle\langle v|_{BC}\right\Vert^2 \le O(|c|^j)+O(j|\lambda|^{-j})
\]
Here the constant coefficients in $O(\cdot)$ depend only on the forbidden state $\psi$.
\end{regionexclusion}
\begin{proof}
Define  $P\equiv G_{ABC}$ and $Q\equiv G_{AB}\otimes I_C$. Using the fact that $PQ=QP=P$ one can write the quantity we wish to bound as
\begin{equation}
\left\Vert \left(P-Q\right)|v\rangle\langle v|_{BC}\right\Vert^2 
\le  \trace{}_A \langle v_{BC} |(P-Q)^2 |v_{BC}\rangle 
=\trace{(Q |v\rangle\langle v|_{BC})} - \trace{(P|v\rangle\langle v|_{BC})}.
\label{X1eq2}
\end{equation}
Define
\begin{equation}
\label{X1eq3}
\theta(n,r)=\trace{( G_n \cdot I_{n-r} \otimes |v\rangle\langle v|^{\otimes r})}.
\end{equation}
One can rewrite Eq.~(\ref{X1eq2})  as
\begin{equation}
\label{X1eq4}
\left\Vert \left(P-Q\right)|v\rangle\langle v|_{BC}\right\Vert^2 \le \theta(i+j,j)-\theta(i+j+k,j+k),
\end{equation}
where $i=|A|$, $j=|B|$, and $k=|C|$.
Representing $|v\rangle\langle v|=I-|v^\perp\rangle\langle v^\perp|$ on the last qubit
in Eq.~(\ref{X1eq3})
and using the monotonicity property $\trace{}_n(G_n)\ge G_{n-1}$ from Lemma \ref{lemma:mono} one gets
\begin{equation}
\label{X1eq8}
\theta(n,r)\ge \theta(n-1,r-1) -\xi(n,r-1),
\end{equation}
where
\begin{equation}
\label{X1eq9}
 \xi(n,r)\equiv \trace{( G_n \cdot I_{n-r-1} \otimes |v\rangle\langle v|^{\otimes r}\otimes |v^\perp\rangle \langle 
v^\perp|)}.
\end{equation}
From Eq.~\eqref{X1eq8} we get
\[
\theta(i+j,j)\leq \theta(i+j+k,j+k)+\sum_{r=j}^{j+k-1}  \xi(i+r+1,r)
\]
and plugging into Eq.~\eqref{X1eq4} gives
\begin{equation}
\left\Vert \left(P-Q\right)|v\rangle\langle v|_{BC}\right\Vert^2\leq \sum_{r=j}^{j+k-1}  \xi(i+r+1,r).
\label{eps2bnd}
\end{equation}

To complete the proof we now show that $\xi(n,r)$ 
has an upper bound which is exponentially small in $r$
and does not depend on $n$. Partition the chain as $[n]=A'B'B''C'$, where
$|C'|=1$, $|B'|=\lfloor r/2\rfloor$, $|B''|=\lceil r/2\rceil$, and $|A'|=n-1-r$. Using the fact that $|v\rangle\langle v|_{B''}\leq I$ we get
\begin{equation}
\label{X1eq10}
\xi(n,r)\le \langle v_{B'} | L_{B'} |v_{B'}\rangle, \quad  \mbox{where}
\quad L_{B'} \equiv \trace{}_{A'B''C'}( G_n |v^\perp\rangle\langle v^\perp|_{C'}).
\end{equation}
Using the second part of Lemma~\ref{lemma:decay} 
 we have
\begin{equation}
\label{X1eq11}
\trace{(L_{B'} |1\rangle\langle 1|_m )}=\tau(m,n,n) = O( (n-m) |\lambda|^{-2(n-m)})
\quad \mbox{for any $m\in B'$}.
\end{equation}
Note that $n-m\ge r/2$ for any $m\in B'$.
Let $|0\rangle\langle 0|_{B'}^\perp = I -|0\rangle\langle 0|_{B'}$. 
It follows that 
\begin{equation}
\label{X1eq12}
\trace{(L_{B'} |0\rangle\langle 0|_{B'}^\perp)} \le \sum_{m\in B'} \trace{(L_{B'} |1\rangle\langle 1|_m )} 
\le O(1) \sum_{m=n-r}^{n-r/2} (n-m)|\lambda|^{-2|n-m|}
= O(r |\lambda|^{-r}).
\end{equation}
Thus $L_{B'}$ has almost all its weight on the basis vector $|0_{B'}\rangle$
and an exponentially small weight $O(r |\lambda|^{-r})$ on the space orthogonal to $|0_{B'}\rangle$.
Furthermore, the first part of Lemma~\ref{lemma:decay}
implies that $\trace{(L_{B'})}=\tau(n)=\tau(\infty)-O(n|\lambda|^{-2n})$.
Combining this fact and Eq.~(\ref{X1eq12}) results in
\begin{equation}
\label{X1eq13}
L_{B'} = \tau(\infty) |0\rangle\langle 0|_{B'} + E \quad \text{where} \quad \|E\|\leq O(r |\lambda|^{-r}).
\end{equation}
Therefore
\begin{equation}
\label{X1eq14}
\xi(n,r)\le \langle v_{B'}|L_{B'}|v_{B'}\rangle =\tau(\infty) |\langle 0|v\rangle|^{2|B'|}+O(r |\lambda|^{-r}) \leq \tau(\infty) |c|^{(r-2)}
+O(r |\lambda|^{-r}).
\end{equation}
Finally, substituting this into Eq.~\eqref{eps2bnd} gives
\begin{equation}
\label{X1eq15}
\left\Vert \left(P-Q\right)|v\rangle\langle v|_{BC}\right\Vert^2 \le \sum_{r=j}^{j+k-1} \xi(i+r+1,r) \le \sum_{r=j}^\infty \xi(i+r+1,r) \le
 O(|c|^j) + O(j |\lambda|^{-j}).
\end{equation}
\end{proof}

The second region exclusion lemma concerns the operator $G_{ABCD}|v\rangle\langle v|_C^\perp$ (recall that $|v\rangle\langle v|_C^\perp=I-|v\rangle\langle v|_C$).

\begin{regionexclusion}
\label{lemma:C2}
Consider any partition $[n]=ABCD$ with $|B|=j$. Then
\[
\| \left(G_{ABCD} - |0\rangle\langle 0|_A \otimes G_{BCD} \right)|v\rangle\langle v|_C^\perp \|^2 \le O(j |\lambda|^{-2j}).
\]
Here the constant coefficient in $O(\cdot)$ depends only on the forbidden state $\psi$.
\end{regionexclusion}
\begin{proof}
For brevity 
denote $P\equiv G_{ABCD}$ and $Q\equiv |0\rangle\langle 0|_A \otimes G_{BCD}$. Then
\[
\| (P - Q) |v\rangle\langle v|_C^\perp  \|^2 \le 
 \trace{\left( (P-Q)^2 |v\rangle\langle v|_C^\perp \right)}.
\]
 Taking into account that $G_{ABCD}\, G_{BCD}=G_{ABCD}$ gives
\[
\| (P - Q) |v\rangle\langle v|_C^\perp  \|^2  \le \trace{\left( P |v\rangle \langle v|_C^\perp \right)}
 +   \trace{\left( Q |v\rangle \langle v|_C^\perp \right)} \nonumber\\
- 2\trace{\left( P |0\rangle\langle0|_A \otimes |v\rangle \langle v|_C^\perp\right)}.
\]
Substituting $|0\rangle\langle0|_A=I- |0\rangle\langle0|_A^\perp$ in the last term results in 
\[
\| (P - Q) |v\rangle\langle v|_C^\perp  \|^2  \le 
 \trace{\left( Q |v\rangle \langle v|_C^\perp \right)}
-\trace{\left( P |v\rangle \langle v|_C^\perp \right)} 
+2\trace{\left( P |0\rangle\langle0|_A^\perp \otimes  |v\rangle \langle v|_C^\perp\right)}.
\]
Applying the Monotonicity Lemma (Lemma~\ref{lemma:mono}) one gets
$\trace{}_A{(G_{ABCD})}\ge G_{BCD}$.  This shows that 
\[
 \trace{\left( Q |v\rangle \langle v|_C^\perp \right)}
-\trace{\left( P |v\rangle \langle v|_C^\perp \right)} = \trace{(G_{BCD}  |v\rangle \langle v|_C^\perp )}
- \trace{(G_{ABCD}  |v\rangle \langle v|_C^\perp )} \le 0
\]
and therefore
\[
\| (P - Q) |v\rangle\langle v|_C^\perp  \|^2  \le 
2\trace{\left( P |0\rangle\langle0|_A^\perp \otimes  |v\rangle \langle v|_C^\perp\right)}
\le 2\sum_{m\in A} \sum_{m'\in C} \trace{\left( G_{ABCD}
|1\rangle\langle 1|_m \otimes |v^\perp\rangle\langle v^\perp|_{m'} \right)}.
\]
One can recognize the last term as the correlation function $\tau(m,m',n)$ defined
in Section~\ref{subs:corrfunc}. Using the second part of Lemma~\ref{lemma:decay}
one gets
\[
\| (P - Q) |v\rangle\langle v|_C^\perp  \|^2  \le 
\sum_{m\in A} \sum_{m'\in C} O\left((m'-m)|\lambda|^{-2(m'-m)} \right)
\le O(1)\cdot \sum_{r=j}^\infty r(r-j) |\lambda|^{-2r}
=O(j |\lambda|^{-2j}).
\]
Here we denoted $r=m'-m$ so that $r\ge |B|=j$. We also used the fact that the number
of pairs $(m,m')$ with $m\in A$ and $m'\in C$ such that $m'-m=r$
is at most $r-|B|=r-j$.
\end{proof}

The third and final region exclusion lemma involves operators built from the ground space projectors as follows. Given any bipartition $[n]=AB$, where $A$ and $B$ are consecutive blocks of qubits, define
\[
G_{A>B}\equiv G_A \otimes |v\rangle\langle v|_B - G_{AB}
\quad \mbox{and} \quad G_{A<B} \equiv |0\rangle\langle 0|_A \otimes G_B - G_{AB}.
\]
\begin{regionexclusion}
\label{lemma:X}
Consider any partition $[n]=ABC$ with $|B|=j$. Then 
\begin{equation}
\label{X1}
\|  G_{AB>C}  - |0\rangle\langle 0|_A  \otimes G_{B>C} \| \le O\left(j^{1/2} |\lambda|^{-j/4}\right) + 
O\left( |c|^{j/4}\right).
\end{equation}
and
\begin{equation}
\label{X2}
\|  G_{A<BC}  - G_{A<B} \otimes |v\rangle\langle v|_C \| \le  O\left(j^{1/2} |\lambda|^{-j/4}\right) + 
O\left( |c|^{j/4}\right).
\end{equation}
Here the constant coefficients in $O(\cdot)$ depend only on the forbidden state $\psi$.
\end{regionexclusion}
\begin{proof}
We first show that the bound  Eq.~(\ref{X2}) follows from  Eq.~(\ref{X1}) and thus it suffices to prove the latter. To see this, consider horizontally flipping the chain so that the vertices previously labeled $1,2,\ldots,n$ are now $n,n-1,\ldots,1$. The new forbidden state is $\psi'= \text{SWAP}\psi$ where $\text{SWAP}$ is the unitary transformation which interchanges the two qubits. The new  matrix $T_{\psi'}$ is proportional to $T_\psi^{-1}$. From this we see that $|0'\rangle =|v\rangle$ and $|v'
\rangle= |0\rangle$, and that $\lambda'=\lambda$. Using these facts we can see that Eq.~(\ref{X2}) is just  Eq.~(\ref{X1}) applied to the left-right flipped chain.

It remains to prove  Eq.~(\ref{X1}). Let $\phi$ be any normalized state of $ABC$ such that 
\begin{equation}
\label{Xstep1}
\|  G_{AB>C}  - |0\rangle\langle 0|_A  \otimes G_{B>C} \| = \| (G_{AB>C}  - |0\rangle\langle 0|_A  \otimes G_{B>C})\phi\|.
\end{equation}
Partition the region $B$ as $B=B'B''$, where $|B'|=\lfloor{j/2}\rfloor$ and $|B''|=\lceil{j/2}\rceil$.
Define states
\[
\phi^{-}= |v\rangle\langle v|_C^\perp \cdot \phi, \quad 
\quad
\phi^{-+}=  |v\rangle\langle v|_{B''}^\perp \otimes |v\rangle\langle v|_C \cdot \phi,
\quad 
\quad
\phi^{++}=|v\rangle\langle v|_{B''} \otimes |v\rangle\langle v|_C \cdot \phi.
\]
One can easily check that the above states are pairwise orthogonal, 
\[
\phi=\phi^- +\phi^{-+} + \phi^{++} \quad \mbox{and} \quad 1=\|\phi\|^2=\| \phi^-\|^2 + \| \phi^{-+}\|^2 + \|\phi^{++}\|^2.
\]
We shall bound the contributions to the right-hand side of Eq.~(\ref{Xstep1}) coming from 
$\phi^-,\phi^{-+}$, and $\phi^{++}$ separately.

Let us start with $\phi^-$. Using the definitions of $G_{AB>C}$ and $G_{B>C}$ one gets
\[
G_{AB>C} \cdot \phi^-=-G_{ABC} \cdot \phi^- \quad \mbox{and} \quad |0\rangle\langle 0|_A \otimes G_{B>C} \cdot \phi^-
=-|0\rangle\langle0|_A\otimes G_{BC} \cdot \phi^-.
\]
Since $\phi^-=|v\rangle\langle v|_C^\perp\cdot \phi^-$ and $\|\phi^{-}\|\le 1$, this results in
\begin{equation}
\label{Xstep2}
 \| (G_{AB>C}  - |0\rangle\langle 0|_A  \otimes G_{B>C})\phi^-\|
 \le \| (G_{ABC} - |0\rangle\langle 0|_A \otimes G_{BC}) |v\rangle\langle v|_C^\perp \|
 \le O(j^{1/2}|\lambda|^{-j}).
\end{equation}
Here the last inequality follows from  Lemma~\ref{lemma:C2}, where we set $D=\emptyset$.

Next  let us consider $\phi^{-+}$. Using the definitions of $G_{AB>C}$ and $G_{B>C}$ one gets
\[
G_{AB>C} \cdot \phi^{-+}=(G_{AB} - G_{ABC}) \cdot \phi^{-+}
\quad \mbox{and} \quad
|0\rangle\langle0|_A \otimes G_{B>C} \cdot \phi^{-+} = |0\rangle\langle0|_A\otimes (G_B-G_{BC})\cdot \phi^{-+}.
\]
It follows that 
\begin{eqnarray}
 \| (G_{AB>C}  - |0\rangle\langle 0|_A  \otimes G_{B>C})\phi^{-+}\| &\le &
 \| (G_{AB'B''} - |0\rangle\langle0|_A \otimes G_{B'B''}) |v\rangle\langle v|_{B''}^\perp \| \nonumber \\
 && + \| (G_{AB'B''C}- |0\rangle\langle0|_A \otimes G_{B'B''C}) |v\rangle\langle v|_{B''}^\perp \|
\end{eqnarray}
We can bound both terms in the right-hand side of the above equations
using Lemma~\ref{lemma:C2}.
One should 
 choose the four regions 
in the statement of Lemma~\ref{lemma:C2} as $(A,B,C,D)=(A,B',B'',\emptyset)$ for the first term
and $(A,B,C,D)=(A,B',B'',C)$ for the second term. This results in
\begin{equation}
\label{Xstep3}
\| (G_{AB>C}  - |0\rangle\langle 0|_A  \otimes G_{B>C})\phi^{-+}\| 
\le O(j^{1/2} |\lambda|^{-j/2}) +  O(j^{1/2} |\lambda|^{-j/2})= O(j^{1/2} |\lambda|^{-j/2}).
\end{equation}

Finally, let us consider $\phi^{++}$. We have
\[
G_{AB>C} \cdot \phi^{++} = (G_{AB} - G_{ABC})\cdot \phi^{++}
\quad \mbox{and} \quad |0\rangle\langle 0|_A\otimes G_{B>C} \cdot \phi^{++}
=|0\rangle\langle 0|_A \otimes (G_B- G_{BC})\cdot \phi^{++}.
\]
It follows that 
\begin{equation}
\label{Xstep4}
\| G_{AB>C} \cdot \phi^{++}\| \le \| (G_{(AB')B''C}- G_{(AB')B''} ) |v\rangle\langle v|_{B''C} \| \le O\left(j^{1/2} |\lambda|^{-j/4}\right)
+ O\left(|c|^{j/4}\right).
\end{equation}
Here we applied  Lemma~\ref{lemma:C1} choosing  the regions in the statement of the lemma
 as $(A,B,C)=(AB',B'',C)$.
Likewise,
\begin{eqnarray}
\| |0\rangle\langle 0|_A\otimes G_{B>C} \cdot  \phi^{++} \|
&\le &  \| (G_{BC}- G_B)\cdot \phi^{++}\| \le  \| (G_{B'B''C}- G_{B'B''} ) |v\rangle\langle v|_{B''C} \| \nonumber \\
&\le & 
O\left(j^{1/2} |\lambda|^{-j/4}\right)
+ O\left(|c|^{j/4}\right). \label{Xstep5}
\end{eqnarray}
Here we applied  Lemma~\ref{lemma:C1} choosing the regions in 
the statement of the lemma
as $(A,B,C)=(B',B'',C)$.
Substituting Eqs.~(\ref{Xstep2},\ref{Xstep3},\ref{Xstep5}) into Eq.~(\ref{Xstep1})
and using the triangle inequality one arrives at the desired bound Eq.~(\ref{X1}).
\end{proof}

\subsection{Proof of the gapped phase theorem} 
\label{subs:finalgappedproof}
Let us now prove Theorem~\ref{thm:Nbound} for the case where $\psi$ is entangled, that is, $\mu_1\neq 0$. In Section~\ref{subs:simple} we will see how to modify this proof to handle the product state case $\mu_1=0$.

Partition region $B$ as $B=B'B''$, where $|B'|=\lfloor r/2\rfloor, |B''|=\lceil r/2\rceil$. 
First we note that 
\begin{equation}
\label{Tstep1}
G_{AB}G_{BC}-G_{ABC}=(G_{AB}-G_{ABC})G_{BC}
\end{equation}
and
\[
G_{AB}-G_{ABC}=G_{AB} \otimes |v\rangle\langle v|_C + G_{AB} \otimes |v\rangle\langle v|_C^\perp-
G_{ABC}=
G_{AB>C} + G_{AB} \otimes |v\rangle\langle v|_C^\perp.
\]
Here we used the notation from Lemma~\ref{lemma:X}.
Applying Lemma \ref{lemma:X} to exclude region $A$ from $G_{AB>C}$ one gets
\begin{equation}
\label{Tstep2}
G_{AB} -G_{ABC}= |0\rangle\langle 0|_A \otimes G_{B>C} + 
G_{AB} \otimes |v\rangle\langle v|_C^\perp +\epsilon_r
\end{equation}
where $\epsilon_r$ denotes some operator such that 
\[
\| \epsilon_r\| \le O\left(r^{1/2} |\lambda|^{-r/4}\right) + 
O\left( |c|^{r/4}\right).
\]
Substituting the identity
\[
G_{B>C}=G_B\otimes |v\rangle\langle v|_C - G_{BC} = (G_B-G_{BC}) - G_B \otimes |v\rangle\langle v|_C^\perp
\]
 one gets
\begin{align}
G_{AB}-G_{ABC}&=|0\rangle\langle 0|_A \otimes(G_B-G_{BC})+(G_{AB}-|0\rangle\langle 0|_A \otimes G_B ) \otimes |v\rangle\langle v|_C^\perp 
+\epsilon_r \nonumber\\ 
&=|0\rangle\langle 0|_A \otimes(G_B-G_{BC})-G_{A<B'B''} \otimes  |v\rangle\langle v|_C^\perp  + \epsilon_r. \label{Tstep3}
\end{align}
Applying Lemma \ref{lemma:X} to exclude region $B''$ from $G_{A<B'B''}$
one gets
\begin{equation}
\label{Tstep4}
G_{AB}-G_{ABC}=|0\rangle\langle 0|_A \otimes(G_B-G_{BC})-G_{A<B'} \otimes |v\rangle\langle v|_{B''} \otimes  |v\rangle\langle v|_C^\perp
+ \epsilon_{r/2} + \epsilon_r.
\end{equation} 
Using $(G_B-G_{BC})G_{BC}=0$, $\|G_{A<B'}\|\le 2$, and $G_{BC}=G_{B''C} G_{BC}$, one arrives at
\begin{equation}
\label{Tstep5}
\|(G_{AB}-G_{ABC})G_{BC}\| \le 2\| (|v\rangle\langle v|_{B''} \otimes  |v\rangle\langle v|_C^\perp)G_{B''C}\|
+\|\epsilon_{r/2}\| + \|\epsilon_r\|.
\end{equation}
Finally, partition $B''=B''_1B''_2$  with $|B''_1|=\lfloor \; |B''|\; \rfloor$ and $|B''_2|=\lceil \; |B''|\; \rceil$ (so that each part has size $\approx r/4$).
Denote
\[
\delta_{r/4}=|v\rangle\langle v|_C^\perp (G_{B''C}-|0\rangle\langle0|_{B''_1} \otimes G_{B''_2C} ).
\]
Applying  Lemma~\ref{lemma:C2} where the four regions are chosen as
$(A,B,C,D)=(B''_1,B''_2,C,\emptyset)$ one concludes that 
\begin{equation}
\label{Tstep6}
\|\delta_{r/4}\| \le O\left(r^{1/2} |\lambda|^{-r/4}\right).
\end{equation}
Replacing $G_{B''C}$ in Eq.~(\ref{Tstep5}) by $|0\rangle\langle0|_{B''_1} \otimes G_{B''_2C}$
and using Eq.~(\ref{Tstep6})
results in 
\begin{align}
\|(G_{AB}-G_{ABC})G_{BC}\| &\le   2\| \, |v\rangle\langle v|_{B''} \cdot |0\rangle\langle0|_{B''_1}\,  \|
+ 2\|\delta_{r/4}\| + \|\epsilon_{r/2}\| + \|\epsilon_r\| \nonumber \\
&\le  2| \langle v |0\rangle|^{|B''_1|} + O\left( r^{1/2} |\lambda|^{-r/8}\right) + O\left(|c|^{r/8}\right) \nonumber\\
&\le  O\left( r^{1/2} |\lambda|^{-r/8}\right) + O\left(|c|^{r/8}\right).
\end{align}
This completes the proof of Theorem~\ref{thm:Nbound}
for the case when $\psi$ is an entangled state.

\subsection{Specializing to product states}
\label{subs:simple}
Finally consider the case $\mu_1=0$. This implies that  
$\det(T_\psi)=0$, that is, $\psi$ is a product state. Using the notation from Section~\ref{subs:open} write
\[
|\psi\rangle=|1v^\perp\rangle, \quad |v\rangle=c|0\rangle+s|1\rangle, \quad |v^\perp\rangle=s|0\rangle-c^*|1\rangle,
\]
where $|c|^2+s^2=1$. Here $s\neq 0$ which follows from the fact that $\mu_2\neq 0$. It is also easily checked that $|0\rangle$ and $|v\rangle$ are eigenvectors of $T_\psi$ corresponding to eigenvalues $\mu_1=0,\mu_2=-s$ respectively.

We now show that the region exclusion identities presented in Lemmas~\ref{lemma:C1},\ref{lemma:C2},\ref{lemma:X} (for the case of entangled $\psi$) become exact equalities.
Indeed, as was shown in Section~\ref{subs:open},
the ground space of $H_n(\psi)$  has an orthonormal basis
$g_0,\ldots,g_n$, where 
$|g_0\rangle=|v^{\otimes n}\rangle$ and $|g_i\rangle=|0^{i-1} v^\perp v^{n-i}\rangle$
for $i\ge 1$.
Thus 
\begin{equation}
\label{Gexplicit}
G_n=\sum_{i=0}^n |g_i\rangle\langle g_i|,
\end{equation}

We start with Lemma~\ref{lemma:C2}. Let $[n]=ABCD$
be an arbitrary    partition such that $B$ and $C$ are non-empty. 
We have to prove that 
\begin{equation}
\label{C2exact}
(G_{ABCD}- |0\rangle\langle 0|_A \otimes G_{BCD})\cdot  |v\rangle\langle v|_C^\perp =0.
\end{equation}
Note that $|v\rangle\langle v|_j\cdot |g_i\rangle=|g_i\rangle$ for all $i<j\le n$. 
This implies 
\[
|g_i\rangle\langle g_i| \cdot |v\rangle\langle v|_C^\perp = |g_i\rangle\langle g_i|
\cdot |v\rangle\langle v|_C \cdot |v\rangle\langle v|_C^\perp =0 \quad \mbox{for all $i\in AB$}.
\]
Substituting Eq.~(\ref{Gexplicit}) for $G_{ABCD}$ and using the above identity yields 
\[
G_{ABCD} \cdot |v\rangle\langle v|_C^\perp =\sum_{i\in CD} |g_i\rangle\langle g_i| \cdot
|v\rangle\langle v|_C^\perp =
|0\rangle\langle 0|_A \otimes G_{BCD} \cdot |v\rangle\langle v|_C^\perp 
\]
since $|g_i\rangle=|0\rangle_A \otimes |g_{i-|A|}\rangle_{BCD}$ for all $i\in CD$.
This is equivalent to Eq.~(\ref{C2exact}).

Next consider Lemma~\ref{lemma:C1}. 
Let $[n]=ABC$ be any partition such that 
$B$ is non-empty.  We have to prove that 
\begin{equation}
\label{C1exact}
(G_{ABC}-G_{AB}\otimes I_C) \cdot |v\rangle\langle v|_{BC} =0.
\end{equation}
Note that for any $i\in A$ one has
\[
|g_i\rangle\langle g_i|=|g_i\rangle\langle g_i|_A \otimes |v\rangle\langle v|_{BC}
\quad \mbox{and} \quad
|g_i\rangle\langle g_i|_{AB}=|g_i\rangle\langle g_i|_A \otimes |v\rangle\langle v|_B.
\]
On the other hand, $|g_i\rangle\langle g_i|\cdot |v\rangle\langle v|_{BC} =0$
for any $i\in BC$. 
Therefore
\[
G_{ABC}\cdot |v\rangle\langle v|_{BC} =\sum_{i\in A\cup \{0\}} |g_i\rangle\langle g_i|\cdot |v\rangle\langle v|_{BC} 
= \sum_{i\in A\cup \{0\}} |g_i\rangle\langle g_i|_A \otimes |v\rangle\langle v|_{BC} 
\]
Likewise, $|g_i\rangle\langle g_i|_{AB} \cdot |v\rangle\langle v|_{BC}=0$
for any $i\in B$. Therefore
\[
(G_{AB}\otimes I_C)\cdot |v\rangle\langle v|_{BC} = \sum_{i\in A\cup \{0\}} |g_i\rangle\langle g_i|_{AB}
\cdot  |v\rangle\langle v|_{BC} = \sum_{i\in A\cup \{0\}} |g_i\rangle\langle g_i|_A \otimes  |v\rangle\langle v|_{BC}.
\]
Comparing the last two identities one arrives at Eq.~(\ref{C1exact}).

Finally, note that the proof of Lemma~\ref{lemma:X} only uses Lemmas~\ref{lemma:C1},\ref{lemma:C2}, and the fact that Eq.~\eqref{X2} is equivalent to  Eq.~\eqref{X1} applied to the left-right flipped chain. In the proof of Lemma~\ref{lemma:X} we use the fact that $T_\psi$ is invertible to establish this latter fact. In the case at hand $T_\psi$ is not invertible but since $|\psi\rangle=|1v^\perp\rangle$ we immediately see that  Eq.~\eqref{X2} is just the left-right flipped version of  Eq.~\eqref{X1}. So both inequalities in Lemma~\ref{lemma:X} become exact equalities.

The proof of Theorem~\ref{thm:Nbound} from Section~\ref{subs:finalgappedproof} uses the Region Exclusion lemmas to establish the result. Since we have shown that each of these lemmas holds (with exact equality) for product states $\psi$, we see that the proof of Theorem~\ref{thm:Nbound} also applies in this case if one formally sets $\lambda=\infty$ in all error terms. 


\section{Acknowledgments}
DG thanks Edward Farhi, Jeffrey Goldstone, and Sam Gutmann for many discussions about this problem when he was a graduate student.  DG was supported in part by NSERC and by ARO. IQC is supported in part by the Government of Canada and the Province of Ontario.
SB  would like to thank Ramis Movassagh for helpful discussions. 
SB acknowledges NSF Grant CCF-1110941.

\setcounter{secnumdepth}{0}
\appendix
\section{Appendix}
\subsection{Qubit chains with higher rank projectors} 
\label{appendix}
In the main body of the paper we considered qubit chains where $\Pi$ is rank-$1$ and $H_n(\Pi)$, defined in Eq.~\eqref{Hnpi}, is guaranteed to be frustration-free. Here we consider qubit chains where $\Pi$ is rank-$2$ or rank-$3$ and we determine which projectors $\Pi$ correspond to frustration-free chains. For each frustration-free chain we determine if the system is gapped or gapless. We say that $H_n(\Pi)$ is gapped if its spectral gap, denoted $\gamma(\Pi,n)$, is lower bounded by a positive constant independent of $n$ (otherwise it is gapless). We shall write $\mathcal{G}_n$ for the null space of $H_n(\Pi)$ and $G_n$ for the projector onto this space.

The case where $\Pi$ is rank-$3$ is trivial, so we consider it first. In this case there is a unique two-qubit state $|\chi\rangle$ in the null space of $\Pi$.  If $H_n(\Pi)$ is frustration-free then there exists an $n$-qubit state $\psi$ with reduced state on each pair of consecutive qubits $i,i+1$  supported entirely on $\chi$. It follows that the rank-$3$ chain is frustration-free if and only if  $\chi=\theta\otimes \theta$ for some single-qubit state $\theta$. Thus $\Pi=I-|\theta\rangle\langle\theta|^{\otimes 2}$ and $H_n(\Pi)$ is a sum of pairwise commuting projectors.
This shows that  $H_n(\Pi)$ has unique ground state $|\theta\rangle^{\otimes n}$, and its eigenvalue gap is equal to $1$, for all $n\geq 2$.

The rank-$2$ case is slightly more interesting. There is a trivial case where $\Pi$ is a $1$-local projector, i.e., $\Pi=P\otimes I$ or $\Pi=I\otimes P$; in that case $\gamma(\Pi,n)=1$ for all $n\geq 2$. The following theorem handles all other cases.

\begin{theorem}
\label{thm:rank2}
Suppose $\Pi$ is a two-qubit, rank-2 projector which cannot be written as $I\otimes P$ or $P\otimes I$ for some projector $P$. Then the dimension of the null space of $H_n(\Pi)$ satisfies $\mathrm{dim}(\mathcal{G}_n)=\mathrm{dim}(\mathcal{G}_4)\in \{0,1,2\}$ for all $n\geq 4$. Moreover, exactly one of the following holds:
\begin{enumerate}
\item $\mathcal{G}_4=\mathrm{span} \{|\alpha \alpha \alpha \alpha \rangle\}$ for some single-qubit state $\alpha$.
\item $\mathcal{G}_4=\mathrm{span} \{|\alpha \alpha \alpha \alpha \rangle, |\beta  \beta \beta \beta \rangle\}$ for some linearly independent single-qubit states $\alpha,\beta$.
\item $\mathcal{G}_4=\mathrm{span} \{|\alpha \beta \alpha \beta \rangle, |\beta  \alpha \beta \alpha \rangle\}$ for some linearly independent single-qubit states $\alpha,\beta$.
\item $\mathcal{G}_4=\mathrm{span} \{|\alpha \alpha \alpha \alpha \rangle, |\alpha^\perp  \alpha \alpha \alpha \rangle+f|\alpha \alpha^\perp \alpha \alpha \rangle+f^2|\alpha \alpha \alpha^\perp \alpha \rangle+f^3|\alpha \alpha \alpha \alpha^\perp \rangle\}$ for some orthonormal single-qubit states $\alpha,\alpha^\perp$ and non-zero $f\in \CC$.
\item $\mathcal{G}_4$ is empty.
\end{enumerate}
In cases 1, 2, 3, and 4, the Hamiltonian $H_n(\Pi)$ is frustration-free for all $n\geq 2$, whereas in case 5 it is frustrated for $n\geq 4$. $H_n(\Pi)$ is gapped in cases 1, 2, and 3, and it is gapped in case 4 if $|f|\neq 1$. If $|f|=1$ in case 4 then the Hamiltonian is gapless, with spectral gap upper bounded as $\gamma(\Pi,n)\leq (1-\cos(\pi/n))$. 
\end{theorem}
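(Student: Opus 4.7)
\textit{Paragraph 1 (classification of $V = \ker\Pi$).} I would begin by classifying 2-dimensional subspaces $V \subseteq \mathbb{C}^2\otimes\mathbb{C}^2$ through the product states they contain. Under the identification $\mathbb{C}^2\otimes\mathbb{C}^2 \cong \mathrm{Mat}_2(\mathbb{C})$ via $|ij\rangle \leftrightarrow E_{ij}$, product states correspond to rank-one matrices and the restriction $\det|_V$ is a homogeneous binary quadratic form on $V$. This form vanishes identically iff every element of $V$ has rank $\leq 1$, which by the structure of maximal rank-one pencils happens iff $V = |\alpha\rangle\otimes\mathbb{C}^2$ or $V = \mathbb{C}^2\otimes|\alpha\rangle$---precisely the $1$-local case excluded by hypothesis. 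Otherwise $\det|_V$ factors into two linear forms, yielding either two linearly independent product vectors in $V$ (two distinct zeros) or a unique product vector up to scalar (double zero).

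\textit{Paragraph 2 (case analysis of $\mathcal{G}_4$).} In the two-product case write the products as $|u_1 v_1\rangle, |u_2 v_2\rangle$; the non-$1$-local condition forces $u_1\not\propto u_2$ and $v_1\not\propto v_2$. A chain product ground state $\psi_1\otimes\cdots\otimes\psi_n$ requires each interior site $\psi_i$ to lie simultaneously in $\{u_1,u_2\}$ and $\{v_1,v_2\}$ as projective sets, giving a finite automaton on two states. After using the global $U^{\otimes 2}$ freedom to normalize $u_1 = |\alpha\rangle$, the matching of these sets yields: $u_i\propto v_i$ gives Case 2; $u_1\propto v_2$ and $u_2\propto v_1$ gives Case 3; exactly one proportionality (say $u_1\propto v_1$ only) gives Case 1; all other sub-configurations admit no propagating product chain, and a direct computation of $\mathcal{G}_4 = (\mathcal{G}_3\otimes\mathbb{C}^2)\cap(\mathbb{C}^2\otimes\mathcal{G}_3)$ rules out entangled survivors, giving Case 5. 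In the single-product case, after rotating $|uv\rangle$ to $|\alpha\alpha\rangle$ and writing the second generator as $a|\alpha^\perp\alpha\rangle + b|\alpha\alpha^\perp\rangle + c|\alpha^\perp\alpha^\perp\rangle$ (absorbing any $|\alpha\alpha\rangle$ component), the double-zero condition on $\det|_V$ is equivalent to $c=0$ with $ab\neq 0$, and rescaling produces Case 4 with $f = b/a$.

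\textit{Paragraph 3 (stability and frustration).} In Cases 1--3 the listed product states sit in $\mathcal{G}_n$ for every $n$, and an inductive argument using $\mathcal{G}_n \subseteq \mathcal{G}_{n-1}\otimes\mathbb{C}^2$ together with the rigidity of the $u,v$-automaton excludes additional ground states, so $\dim\mathcal{G}_n = \dim\mathcal{G}_4$. In Case 4, the relation $|\alpha^\perp\alpha^\perp\rangle\in V^\perp$ forbids adjacent excitations while the second generator of $V$ recursively pins single-excitation amplitudes to $c_j = f^{j-1} c_1$; multi-excitation ground states are ruled out by a pair-by-pair check, yielding $\mathcal{G}_n = \mathrm{span}\{|\alpha\rangle^{\otimes n}, \sum_{j=1}^n f^{j-1} |e_j\rangle\}$, where $|e_j\rangle$ has a single $|\alpha^\perp\rangle$ at site $j$. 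In Case 5, $\mathcal{G}_4=\{0\}$ means $H_4(\Pi) > 0$, and since $H_n(\Pi)$ for $n\geq 4$ contains $H_4(\Pi)$ as a PSD summand on any four consecutive sites, $H_n(\Pi) > 0$ as well, giving $\mathcal{G}_n = \{0\}$.

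\textit{Paragraph 4 (gap analysis and main obstacle).} Cases 1--3 can be handled by adapting the region-exclusion/Nachtergaele framework of Section~\ref{sec:gapped}: the product-state ground space makes the relevant estimates effectively exact (paralleling Section~\ref{subs:simple}), and Lemma~\ref{lemma:N} then delivers a constant gap. For Case 4, decompose $\Pi = |\alpha^\perp\alpha^\perp\rangle\langle\alpha^\perp\alpha^\perp| + |\chi\rangle\langle\chi|$ with $|\chi\rangle \propto |\alpha\alpha^\perp\rangle - \bar f |\alpha^\perp\alpha\rangle$; a short computation gives that $T_{|\chi\rangle}$ has eigenvalues of magnitudes $1/\sqrt{1+|f|^2}$ and $|f|/\sqrt{1+|f|^2}$, equal iff $|f|=1$. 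When $|f|=1$, the trial state $\sum_{j=1}^n f^{j-1}\cos\bigl((2j-1)\pi/(2n)\bigr)|e_j\rangle$ is an exact excited eigenstate of $H_n(\Pi)$ with eigenvalue $1-\cos(\pi/n)$ (via the discrete free-boundary Laplacian on a path), yielding the claimed upper bound. When $|f|\neq 1$, Theorem~\ref{thm:main} yields a constant gap for the rank-$1$ chain $H_n(|\chi\rangle)$; combined with a quantitative estimate of the $|\alpha^\perp\alpha^\perp\rangle$ penalty on the rank-$1$ ground space, this should extract a uniform lower bound for $H_n(\Pi)$. The main obstacle is precisely this last step: the diagonal penalty eliminates $n-1$ of the $(n+1)$ rank-$1$ ground states, and one must show that the smallest eigenvalue of $\sum_i |\alpha^\perp\alpha^\perp\rangle\langle\alpha^\perp\alpha^\perp|_{i,i+1}$ restricted to the orthogonal complement (inside the rank-$1$ ground space of Proposition~\ref{gsdegen}) of the $2$-dimensional true ground space remains bounded below uniformly in $n$. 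A natural plan is to split this into the single-excitation sector (where explicit tridiagonal diagonalization gives gap $\approx (1-|f|)^2/(1+|f|^2)$) and the multi-excitation sectors (where any adjacent excitation contributes energy $\geq 1$ from the $|\alpha^\perp\alpha^\perp\rangle$ penalty), and to combine them via a block-decomposition argument.
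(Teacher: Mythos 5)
Your overall architecture is reasonable and several ingredients are correct (the classification of $\ker\Pi$ via the determinant quadratic form is a clean dual of the paper's classification of $\mathrm{range}(\Pi)$; the explicit ground spaces in cases 1--4 are right; the $|f|=1$ upper bound via the path-graph Laplacian in the Hamming-weight-one sector is exactly the paper's argument). But there are two genuine gaps. The first is recurring: at every stage you exclude non-product (or multi-excitation) ground states by assertion --- ``a direct computation \ldots rules out entangled survivors,'' ``the rigidity of the $u,v$-automaton excludes additional ground states,'' ``multi-excitation ground states are ruled out by a pair-by-pair check.'' This is the crux of the classification, not a routine verification, and your product-chain automaton says nothing about entangled elements of $\calG_n$. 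The paper's mechanism is to pick an \emph{entangled} state $\psi$ in $\mathrm{range}(\Pi)$ (which always exists under the hypothesis) and invoke Proposition~\ref{gsdegen}: every state of $\calG_n$ is of the form $T_\psi^{\mathrm{all}}|s\rangle$ with $|s\rangle$ symmetric, so the second constraint from $\phi\in\mathrm{range}(\Pi)$ becomes a small linear-algebra problem on symmetric states, governed by the eigenvector structure of $T_\phi^{-1}T_\psi$. Without this (or an equivalent explicit computation of the intersections $(\calG_{n-1}\otimes\CC^2)\cap(\CC^2\otimes\calG_{n-1})$, including the sub-case where the unique product state in $\ker\Pi$ is $|uv\rangle$ with $u\not\propto v$, which your parametrization silently excludes), the case analysis is incomplete and the stabilization $\dim\calG_n=\dim\calG_4$ is unproven.

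The second gap is the one you flag yourself: the constant lower bound in case 4 with $|f|\neq 1$. Your plan --- apply Theorem~\ref{thm:main} to the rank-one chain $H_n(\chi)$ and then control the diagonal penalty $\sum_i|\alpha^\perp\alpha^\perp\rangle\langle\alpha^\perp\alpha^\perp|_{i,i+1}$ restricted to the $(n+1)$-dimensional rank-one ground space --- runs into a real difficulty in combining the two bounds: the cross terms between the rank-one ground space and its complement involve $\|H^{\mathrm{diag}}\|$, which grows linearly in $n$, so a naive projection-lemma argument does not close, and you would additionally need a uniform-in-$(n,j)$ lower bound on the adjacent-excitation probability in each weight-$j$ sector. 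This machinery is unnecessary: once you have the explicit description $\calG_n=\mathrm{span}\{|\alpha\rangle^{\otimes n},\sum_j f^{j-1}|e_j\rangle\}$ (which you do), you can define truncated projectors $G_n^k$ that replace the geometric tail beyond site $k$ by zero, check $\|G_n^k-G_n\|=O(|f|^k)$ and the exact factorization $G_{AB}^{|A|}G_{BC}^{|B|}=G_{ABC}^{|A|}$, and feed the resulting bound $\|G_{AB}G_{BC}-G_{ABC}\|=O(|f|^{|B|})$ directly into Nachtergaele's criterion (Lemma~\ref{lem:gapcor}). This is how the paper closes Proposition~\ref{prop:Teig2}, and it is strictly easier than the route you propose.
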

To establish lower bounds on spectral gaps we shall use Nachtergaele's criterion~\cite{Nachtergaele1996}. Recall that Lemma \ref{lemma:N} states this criterion for the case where $\Pi$ is rank-$1$. More generally if $H_n(\Pi)$ is frustration-free for $n\geq 2$ then the same bound holds for its spectral gap (with $\gamma(\psi,n)$  and $\gamma(\psi,r+1)$ replaced by $\gamma(\Pi,n)$ and $\gamma(\Pi,r+1)$ in the statement of the Lemma). For our purposes it will be sufficient to use the following weaker version of the bound.
\begin{Lemma}
\label{lem:gapcor}
Suppose $H_n(\Pi)$ is frustration-free for $n\geq 2$. Let  $[n]=ABC$ with $|C|=1$, $|B|=r$, and $|A|=n-r-1$. Suppose there exist constants $0\leq \delta<1$ and $K>0$ such that for all sufficiently large $n$ we have $\|G_{ABC}-G_{AB}G_{BC}\|\leq K\delta^{r}$. Then $H_n(\Pi)$ is gapped.
\end{Lemma}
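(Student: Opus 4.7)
The plan is to invoke the generalized version of Nachtergaele's criterion mentioned in the paragraph immediately preceding the lemma. That criterion states: if $H_n(\Pi)$ is frustration-free for all $n \geq 2$, and if there exist an integer $r \geq 1$ and a real number $\epsilon < (r+1)^{-1/2}$ such that $\|G_{ABC} - G_{AB}G_{BC}\| \leq \epsilon$ for the partition $[n] = ABC$ with $|B| = r$ and $|C| = 1$ (for all sufficiently large $n$), then
\[
\gamma(\Pi, n) \;\geq\; \frac{\gamma(\Pi, r+1)}{r+1}\bigl(1 - \epsilon(r+1)^{1/2}\bigr)^2
\]
for all sufficiently large $n$. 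The task therefore reduces to choosing $r$ so that the hypothesis is satisfied, and then checking that the resulting lower bound is a positive constant independent of $n$.

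First I would use the assumed exponential bound $\|G_{ABC} - G_{AB}G_{BC}\| \leq K\delta^{r}$ with $0 \leq \delta < 1$. Since exponential decay beats any polynomial, there exists an integer $r_\ast$, depending only on $K$ and $\delta$, such that
\[
K\delta^{r_\ast} \;\leq\; \tfrac{1}{2}(r_\ast + 1)^{-1/2}.
\]
Fix this $r_\ast$ and set $\epsilon = K\delta^{r_\ast}$. Then $\epsilon < (r_\ast + 1)^{-1/2}$, so the hypothesis of the generalized Nachtergaele bound is met for our partition.

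Next I would verify $\gamma(\Pi, r_\ast + 1) > 0$. This is automatic: $H_{r_\ast+1}(\Pi)$ is Hermitian, acts on the finite-dimensional space $(\CC^2)^{\otimes(r_\ast+1)}$, and is not identically zero (since $\Pi \neq 0$), so its smallest non-zero eigenvalue is strictly positive. Plugging $\epsilon \leq \tfrac{1}{2}(r_\ast+1)^{-1/2}$ into Nachtergaele's bound yields
\[
\gamma(\Pi, n) \;\geq\; \frac{\gamma(\Pi, r_\ast + 1)}{4(r_\ast + 1)} \;>\; 0
\]
for all sufficiently large $n$, and the right-hand side is a constant depending only on $\Pi$, $K$, and $\delta$, not on $n$.

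Finally I would handle the finitely many remaining values of $n$: for each such $n$ the Hamiltonian $H_n(\Pi)$ acts on a finite-dimensional Hilbert space and, by frustration-freeness plus $\Pi \neq 0$, has a strictly positive smallest non-zero eigenvalue. Taking the minimum of these finitely many positive gaps together with the uniform bound above produces a positive constant lower bound on $\gamma(\Pi, n)$ valid for all $n$, which is exactly the definition of gapped used in the paper. There is no real technical obstacle here — all the heavy lifting lives in Nachtergaele's theorem; the only content is the elementary observation that exponential decay in $r$ eventually dominates the threshold $(r+1)^{-1/2}$.
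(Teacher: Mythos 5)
Your proof is correct and follows essentially the same route as the paper: choose $r$ large enough that $K\delta^{r}\leq\tfrac{1}{2}(r+1)^{-1/2}$ and plug into the generalized Nachtergaele bound to get $\gamma(\Pi,n)\geq\gamma(\Pi,r+1)(4r+4)^{-1}$ for all large $n$. The extra remarks about $\gamma(\Pi,r+1)>0$ and the finitely many small $n$ are fine but are left implicit in the paper's two-line argument.
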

\begin{proof}
We can always choose $r$ so that $K \delta^{r}\leq \frac{1}{2\sqrt{r+1}}$. Plugging this choice into Nachtergaele's bound with $\epsilon=\frac{1}{2\sqrt{r+1}}$ we obtain $\gamma(\Pi,n)\geq \gamma(\Pi,r+1)(4r+4)^{-1}$.
\end{proof}

\begin{proof}[\bf Proof of Theorem~\ref{thm:rank2}]
We first establish that the range of $\Pi$ is spanned by two linearly independent states $\phi,\psi$ which are both entangled. It is easy to check that the only two dimensional subspaces of $\CC^{2} \otimes \CC^{2}$ which contain only product states are of the form $\chi\otimes \CC^2$ or $\CC^2 \otimes \chi$ for some single qubit state $\chi$. By the hypothesis of the theorem $\Pi$ cannot be written as $I\otimes P$ or $P\otimes I$, which implies $\mathrm{range}(\Pi)$ does not have this form; thus it contains at least one entangled state $\phi$. Let $\nu\in \mathrm{range}(\Pi)$ be linearly independent from $\phi$. It is easy to see that we can always choose $\psi=\phi+z\nu$ for some nonzero $z\in \CC$ so that $\psi$ is entangled.

So $\mathrm{range}(\Pi)=\mathrm{span}\{\phi,\psi\}$ where $\psi,\phi$ are both entangled (equivalently, $T_\phi$ and $T_\psi$ are both invertible). Furthermore, an $n$-qubit state $\chi$ is in the null space of $H_n(\Pi)$ if and only if it is in the null space of $|\psi\rangle\langle \psi|_{i,i+1}$ and $|\phi\rangle\langle \phi|_{i,i+1}$ for all $i=1,\ldots,n-1$.

To complete the proof we now suppose that $\mathcal{G}_4$ is nonempty and we consider two cases depending on whether or not $T_\phi^{-1}T_\psi$ has two linearly independent eigenvectors. The theorem follows directly from the following propositions which handle the two cases.
\begin{prop}
Suppose $\Pi$ is a two-qubit projector such that  $\mathrm{range}(\Pi)=\mathrm{span}\{\phi,\psi\}$ where $\psi,\phi$ are both entangled. Suppose that $T_\phi^{-1}T_\psi$ has linearly independent eigenvectors $\{\alpha,\beta\}$ and that $\mathcal{G}_4$ is nonempty.  Then one of the cases 1,2, or 3 from Theorem \ref{thm:rank2} occurs. Moreover, $H_n(\Pi)$ is frustration-free for all $n\geq 2$ and it is gapped. Its ground space dimension satisfies $\mathrm{dim}(\mathcal{G}_n)=\mathrm{dim}(\mathcal{G}_4)$ for all $n\geq 4$.
\label{prop:Teig1}
\end{prop}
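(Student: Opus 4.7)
The plan is to exploit the two eigenvectors of $M := T_\phi^{-1}T_\psi$ to pin down the product states in $\mathcal{N} := \mathrm{ker}(\Pi)$, perform a case analysis based on how they interact with $T_\phi$, and then invoke Lemma~\ref{lem:gapcor}. First I would observe that, by Eq.~\eqref{Teps} applied to both $\psi$ and $\phi$, a product state $|v\rangle\otimes|w\rangle$ lies in $\mathcal{N}$ iff $|w\rangle$ is proportional to both $T_\phi|v\rangle$ and $T_\psi|v\rangle$, which happens iff $|v\rangle$ is an eigenvector of $M$. Thus $\mathcal{N}$ contains exactly two product states (up to scalars), namely $|\alpha\rangle\otimes|\alpha'\rangle$ and $|\beta\rangle\otimes|\beta'\rangle$, where $|\alpha'\rangle := T_\phi|\alpha\rangle$ and $|\beta'\rangle := T_\phi|\beta\rangle$; these are linearly independent since $T_\phi$ is invertible ($\phi$ is entangled). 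I would then split into four cases based on how $\{\alpha',\beta'\}$ relates to the basis $\{\alpha,\beta\}$ of $\mathbb{C}^2$: (i) $\alpha'\sim\alpha$ and $\beta'\sim\beta$ (both are eigenvectors of $T_\phi$); (ii) exactly one of $\alpha,\beta$ is an eigenvector of $T_\phi$; (iii) neither is, but $\alpha'\sim\beta$ and $\beta'\sim\alpha$; (iv) none of the above. A short argument rules out mixed configurations such as $\alpha'\sim\alpha$ together with $\beta'\sim\alpha$, which would force $\alpha\sim\beta$.

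Next I would show via a Schmidt-type decomposition on consecutive pairs that $\mathcal{G}_n$ is rigidly determined in each case. Given $|\chi\rangle\in\mathcal{G}_n$, the reduced density $\rho_{i,i+1}$ is supported in the $2$-dimensional subspace $\mathcal{N}$, so one may write $|\chi\rangle = |\alpha\alpha'\rangle_{i,i+1}|u_1\rangle + |\beta\beta'\rangle_{i,i+1}|u_2\rangle$ for some $|u_1\rangle,|u_2\rangle$ on the remaining qubits. Propagating the same constraint to the overlapping pair $\{i+1,i+2\}$ restricts the allowed single-qubit state at site $i+2$, and iterating along the chain yields: in case (i), $\mathcal{G}_n = \mathrm{span}\{|\alpha\rangle^{\otimes n}, |\beta\rangle^{\otimes n}\}$ (Case 2 of Theorem~\ref{thm:rank2}); in case (ii), with $\alpha$ the eigenvector, $|\beta'\rangle$ is not proportional to $\alpha$ or $\beta$ so no product state in $\mathcal{N}$ can begin with $|\beta'\rangle$, forcing $\mathcal{G}_n = \mathrm{span}\{|\alpha\rangle^{\otimes n}\}$ (Case 1); in case (iii), $T_\phi^2|\alpha\rangle\sim|\alpha\rangle$ and $\mathcal{G}_n = \mathrm{span}\{|\alpha\beta\alpha\beta\cdots\rangle,|\beta\alpha\beta\alpha\cdots\rangle\}$ (Case 3); and in case (iv), a direct calculation of the $\rho_{23}\subseteq\mathcal{N}$ constraints, together with the condition $\det T_\phi\neq 0$ expressed via the coefficients of $\alpha',\beta'$ in the $\{\alpha,\beta\}$ basis, forces $|u_1\rangle=|u_2\rangle=0$, so $\mathcal{G}_4=\emptyset$. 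The hypothesis $\mathcal{G}_4\neq\emptyset$ thus excludes case (iv), leaving Cases 1, 2, or 3. Frustration-freeness for all $n\geq 2$ and the identity $\dim\mathcal{G}_n=\dim\mathcal{G}_4$ for $n\geq 4$ are then immediate from the explicit ground states.

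For the spectral gap I would apply Lemma~\ref{lem:gapcor}. In Case 1, $G_n=(|\alpha\rangle\langle\alpha|)^{\otimes n}$ factorizes and $G_{AB}G_{BC}=G_{ABC}$ exactly, so the hypothesis of the lemma is satisfied with $\epsilon=0$. In Cases 2 and 3 the ground space is two-dimensional, spanned by two product (respectively, alternating-product) unit vectors whose mutual overlap has magnitude $|c|^n$ with $c=\langle\alpha|\beta\rangle$ and $|c|<1$. Using the standard formula for the orthogonal projector onto the span of two non-orthogonal unit vectors, one can expand $G_{AB}G_{BC}$: the diagonal contributions $P_\alpha^{AB}P_\alpha^{BC}$ and $P_\beta^{AB}P_\beta^{BC}$ telescope exactly to $P_\alpha^{ABC}$ and $P_\beta^{ABC}$, while the cross terms $P_\alpha^{AB}P_\beta^{BC}$ reduce to rank-$1$ operators with operator norm equal to the overlap between the $\alpha$- and $\beta$-type ground-state patterns on the shared region $B$, which is $|c|^{|B|}$. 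Combining these with the $O(|c|^{|AB|})$ normalization corrections in $G_n$ yields $\|G_{ABC}-G_{AB}G_{BC}\| = O(|c|^{|B|})$, and Lemma~\ref{lem:gapcor} applies with $\delta=|c|<1$, giving a positive constant lower bound on the spectral gap. The main obstacle I anticipate is case (iv) of the structural analysis: ruling out $\mathcal{G}_4\neq\emptyset$ there requires careful bookkeeping of four coupled linear conditions on $|u_1\rangle,|u_2\rangle$ arising from $\rho_{23}\subseteq\mathcal{N}$, and invoking the invertibility of $T_\phi$ at exactly the right step to prevent accidental solutions.
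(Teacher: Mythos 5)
Your proposal reaches the correct conclusions and is close in spirit to the paper's argument, but it reorganizes the case analysis in a way that shifts where the work lies, and the one place you flag as delicate is indeed where your stated claim is not quite right. The paper first uses Proposition \ref{gsdegen} to write any $\chi\in\mathcal{G}_3$ as $I\otimes T_\psi\otimes T_\psi^2|s\rangle$ with $|s\rangle=a|\alpha\rangle^{\otimes 3}+b|\beta\rangle^{\otimes 3}$, and from the $\phi$-constraint on qubits $2,3$ deduces \emph{before} any case split that $T_\psi|\alpha\rangle$ (assuming without loss of generality $a\neq 0$) must itself be an eigenvector of $T_\phi^{-1}T_\psi$, hence proportional to $\alpha$ or $\beta$; only then does it branch into four cases. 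You instead enumerate all a priori configurations of $\alpha'=T_\phi\alpha$ and $\beta'=T_\phi\beta$ relative to $\{\alpha,\beta\}$ and push the burden onto your case (iv). Your characterization of the product states in $\ker\Pi$ as $\alpha\otimes T_\phi\alpha$ and $\beta\otimes T_\phi\beta$ is correct and equivalent to the paper's description of $\mathcal{G}_2$ (note $T_\psi\alpha\sim T_\phi\alpha$ since $\alpha$ is an eigenvector of $T_\phi^{-1}T_\psi$), and propagating the pair constraints along the chain gives the same ground spaces in your cases (i)--(iii). However, your resolution of case (iv) is imprecise: the constraints on the pair $(2,3)$ force $u_1=u_2=0$ only in the sub-case where neither $\alpha'$ nor $\beta'$ is proportional to $\alpha$ or $\beta$. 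In the sub-case $\alpha'\sim\beta$ with $\beta'$ proportional to neither (which is exactly the paper's case (iv)), one finds $u_2=0$ but $u_1\sim\beta'$ survives, so $\mathcal{G}_3$ is nonempty and one-dimensional, spanned by the product state $\alpha\otimes\beta\otimes\beta'$; the contradiction only appears at $n=4$, where a nonzero state of $\mathcal{G}_4$ would have to equal both $\theta_1\otimes\omega$ and $\omega\otimes\theta_2$ for $\omega$ this product state, forcing $\beta\sim\beta'$. You need to add this gluing step (and the symmetric sub-case with the roles of $\alpha$ and $\beta$ exchanged) to legitimately exclude your case (iv). The gap argument via explicit ground-space projectors, the telescoping of $G_{AB}G_{BC}$, the $O(|\langle\alpha|\beta\rangle|^{|B|})$ bound, and the appeal to Lemma \ref{lem:gapcor} coincide with the paper's treatment, as do the frustration-freeness and dimension claims.
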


\begin{prop}
Suppose $\Pi$ is a two-qubit projector such that  $\mathrm{range}(\Pi)=\mathrm{span}\{\phi,\psi\}$ where $\psi,\phi$ are both entangled. Suppose that $T_\phi^{-1}T_\psi$ has only one linearly independent eigenvector $\{\alpha\}$ and that $\mathcal{G}_4$ is nonempty.  Then case 4 from Theorem \ref{thm:rank2} occurs. Moreover, $H_n(\Pi)$ is frustration-free and has ground space dimension equal to $2$ for all $n\geq 2$. It is gapped if and only if $|f|\neq 1$; if $|f|=1$ then $\gamma(\Pi,n)\leq (1-\cos(\pi/n))$.
\label{prop:Teig2}
\end{prop}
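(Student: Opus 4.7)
My plan is to reduce Proposition~\ref{prop:Teig2} to an explicit computation of $\mathcal{G}_n$ in a canonical basis, from which frustration-freeness, the dimension count, the Case~4 structure, and the gap statements all follow by direct manipulation.

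First I would apply the $SU(2)$ covariance already exploited in the main text: replacing $\phi,\psi$ by $(U\otimes U)\phi,(U\otimes U)\psi$ conjugates $T_\phi$ and $T_\psi$ and preserves the eigenvalue structure of $M=T_\phi^{-1}T_\psi$. Choose $U$ so that the unique eigenvector of $M$ is $\alpha=|0\rangle$; then $M=\left(\begin{smallmatrix}\mu & m\\ 0 & \mu\end{smallmatrix}\right)$ with $\mu,m\ne 0$. By Proposition~\ref{gsdegen}, any $\chi\in\mathcal{G}_n$ has the form $\chi=T_\psi^{\mathrm{all}}\chi_\psi$ with $\chi_\psi$ symmetric, which I expand as $\chi_\psi=\sum_k a_k\sum_{|w|=k}|w\rangle$.

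The key intermediate step is to show that $\mathcal{G}_4\ne 0$ forces $|0\rangle$ to be a common eigenvector of $T_\phi$ and $T_\psi$. Using the identity $\langle\phi|(I\otimes T_\psi)=\det(T_\phi)\langle\epsilon|(I\otimes M)$, the constraint $\langle\phi|_{1,2}\chi=0$ reduces, after contraction on qubits $1,2$ and absorption of the invertible factors $T_\psi^2,T_\psi^3$ on the trailing qubits, to $\mu(c_{01ww'}-c_{10ww'})-m\,c_{11ww'}=0$ for every $w,w'$; by the symmetry of $\chi_\psi$ the first difference vanishes and one is left with $a_k=0$ for all $k\ge 2$. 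The remaining constraint $\langle\phi|_{2,3}\chi=0$ requires the shifted identity
\[
\langle\phi|(T_\psi\otimes T_\psi^2)=\mu\det(T_\phi)\det(T_\psi)\langle\epsilon|-m\det(T_\phi)\,\tau\otimes\tau,
\]
with $\tau=c\langle 0|+d\langle 1|$ encoding the second column of $T_\psi$; plugging in the surviving two-parameter family $\chi_\psi=a_0|0^4\rangle+a_1\sum_{|w|=1}|w\rangle$ and grouping by the fixed pair $(w_1,w_4)$ yields the two equations $m\det(T_\phi)\,c^2 a_1=0$ and $m\det(T_\phi)\,c(ca_0+2d a_1)=0$, which for $c\ne 0$ force $a_0=a_1=0$ and hence $\chi=0$. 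Therefore $c=(T_\psi)_{10}=0$, and since $T_\phi=T_\psi M^{-1}$ this gives $(T_\phi)_{10}=0$ as well, so $|00\rangle\in\mathrm{null}(\Pi)$.

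Assuming $c=0$, both $T_\phi,T_\psi$ are upper triangular with $|0\rangle$ as common eigenvector; write $T_\psi=\left(\begin{smallmatrix}\lambda & b\\ 0 & d\end{smallmatrix}\right)$ and set $f=d/\lambda$, which is nonzero because $\det T_\psi\ne 0$. Using $T_\psi^k|0\rangle=\lambda^k|0\rangle$ and the explicit formula for $T_\psi^k|1\rangle$, a direct computation shows that for every $n\ge 2$, $T_\psi^{\mathrm{all}}|0^n\rangle\propto|\alpha^{\otimes n}\rangle$ and $T_\psi^{\mathrm{all}}\sum_{|w|=1}|w\rangle$ is a linear combination of $|\alpha^{\otimes n}\rangle$ and the spin-wave state $\sum_{j=1}^n f^{j-1}|\alpha^{\otimes(j-1)}\alpha^\perp\alpha^{\otimes(n-j)}\rangle$. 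The ratio $d_\phi/\lambda_\phi$ equals $f$ by the Jordan condition $\lambda_\psi/\lambda_\phi=d_\psi/d_\phi=\mu$, so the same pair of states lies in $\mathcal{G}_n(\phi)$; the argument of the previous paragraph extends verbatim to every $n\ge 4$, giving $\dim\mathcal{G}_n=2$ for all $n\ge 2$ and matching Case~4 with the stated $\alpha,\alpha^\perp,f$. For the spectral gap, when $|f|\ne 1$ the spin-wave coefficient $f^{j-1}$ decays or grows geometrically across the chain, and since $\mathcal{G}_S$ is two-dimensional with this explicit structure a direct estimate yields $\|G_{ABC}-G_{AB}G_{BC}\|\le K\min(|f|,|f|^{-1})^{|B|}$; Lemma~\ref{lem:gapcor} then produces a constant positive gap. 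When $|f|=1$, I would use the trial state $|\Upsilon\rangle=\sum_{j=1}^n \sin\!\bigl(\tfrac{\pi j}{n+1}\bigr)f^{j-1}|\alpha^{\otimes(j-1)}\alpha^\perp\alpha^{\otimes(n-j)}\rangle$, modelled on the XXZ ansatz of~\cite{KN}, check orthogonality to both Case-4 ground states, and verify that $\Pi_{i,i+1}$ acts on the single-excitation sector as an open-boundary nearest-neighbor hopping whose Rayleigh quotient on $|\Upsilon\rangle$ evaluates to $1-\cos(\pi/n)$. The hardest step I expect is the constraint analysis of the third paragraph, where the defectiveness of $M$ must be combined with the symmetry of $\chi_\psi$ and the non-commuting powers of $T_\psi$ to rule out every nontrivial ground state unless $c=0$.
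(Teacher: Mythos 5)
Your overall architecture matches the paper's: pass to a basis in which the unique eigenvector of $T_\phi^{-1}T_\psi$ is $|0\rangle$, show that $|0\rangle$ is then forced to be a common eigenvector of $T_\phi$ and $T_\psi$, read off the two-dimensional ground space explicitly, and handle the gap via Lemma~\ref{lem:gapcor} when $|f|\neq 1$ and via the single-excitation sector when $|f|=1$. Your route to the key intermediate claim (the common eigenvector) is genuinely different and, as far as I can check, correct: you extract it directly from the $\langle\phi|_{1,2}$ and $\langle\phi|_{2,3}$ constraints on a symmetric preimage of a putative state in $\calG_4$, using $\langle\phi|(I\otimes T_\psi)=\det(T_\phi)\left(\mu\langle\epsilon|-m\langle 11|\right)$ and its shifted version; the two resulting equations $c^2a_1=0$ and $c(ca_0+2da_1)=0$ do force $(T_\psi)_{10}=0$. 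The paper instead first argues that $\calG_4\neq 0$ forces $1\otimes T_\psi\otimes T_\psi^2|\hat 0\hat 0\hat 0\rangle\in\calG_3$ (otherwise the unique state in $\calG_3$ is entangled, so no four-qubit state can have both its first three and last three qubits supported on $\calG_3$), and then gets $T_\psi|\hat 0\rangle\sim|\hat 0\rangle$ from one contraction. Your version is more computational but yields the upper bound $\dim\calG_n\le 2$ for all $n$ in one stroke; the $|f|\neq 1$ analysis (truncated spin-wave approximants plus Lemma~\ref{lem:gapcor}) is essentially identical in both.

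The gapless step is where your proposal breaks. In the orthonormal basis $|e_j\rangle=f^{j-1}|\hat 0^{j-1}\hat 1\hat 0^{n-j}\rangle$ (orthonormal since $|f|=1$), the Hamming-weight-one block of $H_n(\Pi)$ is $\tfrac12$ times the Laplacian of the path graph, and the spin-wave ground state is the all-ones vector $\sum_j|e_j\rangle$. Your trial state corresponds to the vector $u_j=\sin\left(\pi j/(n+1)\right)$, which has strictly positive overlap with the all-ones vector because $\sum_{j=1}^n\sin\left(\pi j/(n+1)\right)>0$; so it fails the very orthogonality check you propose to perform, and its Rayleigh quotient does not upper bound the spectral gap. (It also does not evaluate to $1-\cos(\pi/n)$: the sine vector diagonalizes the path adjacency matrix, i.e.\ the Dirichlet problem, not the free Laplacian.) The fix is either to use the correct first excited eigenvector $u_j=\cos\left(\pi(j-\tfrac12)/n\right)$, which does sum to zero and has eigenvalue $1-\cos(\pi/n)$ for $\tfrac12 L$, or to do what the paper does: observe that $H_n(\Pi)$ commutes with total Hamming weight, identify the weight-one block as $\tfrac12$ times the path Laplacian, and quote its known smallest nonzero eigenvalue. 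Since the ``only if'' half of the gap statement rests entirely on this step, the error must be repaired for the proof to be complete.
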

\end{proof}
In the remainder of this section we prove Propositions \ref{prop:Teig1} and \ref{prop:Teig2}.

\vspace{.25cm}
\noindent
\textbf{Proof of Proposition \ref{prop:Teig1}}
We first establish that $\mathcal{G}_2$ is spanned by $1\otimes T_\psi |\alpha\rangle|\alpha\rangle$ and $1\otimes T_\psi |\beta\rangle|\beta\rangle$. These states are linearly independent (since $\alpha,\beta$ are). Since $\mathcal{G}_2$ is $2$-dimensional it suffices to establish that it contains both of these states. Clearly each is orthogonal to $|\psi\rangle \langle \psi|$ since $|\psi\rangle\sim \langle \epsilon|I\otimes T_\psi^{-1}$ (recall $|\epsilon\rangle=|0,1\rangle-|1,0\rangle$). To check that they are orthogonal to $|\phi\rangle\langle\phi|$ use the fact that $\langle \phi| \sim \langle \epsilon|I\otimes T_\phi^{-1}$ and that $\alpha,\beta$ are eigenvectors of $T_\phi^{-1}T_\psi$.

Now consider $n=3$.  Since $\mathcal{G}_4$ is nonempty, there exists a state $|\chi\rangle \in \mathcal{G}_3$, which by Proposition \ref{gsdegen} can be written as $|\chi\rangle=1\otimes T_\psi\otimes T_\psi^2 |s\rangle$ for some state $|s\rangle$ in the $3$-qubit symmetric subspace. Since the chain is frustration free, the first two qubits of $\chi$ have all of their support in $\mathcal{G}_2$. This implies $|s\rangle=a|\alpha\rangle^{\otimes 3}+b|\beta\rangle^{\otimes 3}$ where $a,b$ are not both zero. By symmetry we assume without loss of generality that $a\neq 0$. Next, imposing orthogonality to $|\phi\rangle\langle \phi|_{2,3}$ gives
\[
a |\alpha\rangle \left(\langle \epsilon|T_\psi \otimes T_\phi^{-1}T_\psi^2|\alpha,\alpha\rangle\right)+b|\beta\rangle \left(\langle \epsilon|T_\psi \otimes T_\phi^{-1}T_\psi^2|\beta,\beta\rangle\right)=0.
\]
Since $\alpha,\beta$ are linearly independent, both terms must be zero. Using the fact that any two-qubit state orthogonal to $\epsilon$ is symmetric, and that $a\neq 0$, we get:
\begin{itemize}
\item $T_\psi |\alpha\rangle$ is an eigenvector of $T_\phi^{-1}T_\psi$, and
\item If $b\neq 0$ then $T_\psi |\beta\rangle$ is an eigenvector of $T_\phi^{-1}T_\psi$.
\end{itemize}
Now recall that $\alpha,\beta$ are linearly independent eigenvectors of $T_\phi^{-1}T_\psi$. Note that $T_\phi^{-1}T_\psi$ is not proportional to the identity (since this would imply that $\phi$ is proportional to $\psi$) and therefore $\alpha,\beta$  are the only eigenvectors of $T_\phi^{-1}T_\psi$. Hence $T_\psi |\alpha\rangle$ is proportional to one of the states $\alpha, \beta$, and if $b\neq 0$ then the same holds for $T_\psi|\beta\rangle$.  However, since $T_\psi$ is invertible, it cannot be the case that $T_\psi |\alpha\rangle \sim T_\psi |\beta\rangle$. Putting this together we see there are 4 subcases to consider (below we show that the first three correspond to cases 1., 2., 3., from the statement of the theorem and that the fourth does not occur):
\medskip
\newline
\noindent
\textbf{Case (i): $T_\psi |\alpha\rangle\sim |\alpha\rangle$ and $T_\psi |\beta\rangle =c|\alpha\rangle +d|\beta\rangle$ where $c,d$ are both non-zero.}\newline
In this case $T_\psi |\beta\rangle$ is not an eigenvector of $T_\phi^{-1}T_\psi$, which implies (by the facts established above) that $b=0$ and thus $|\chi\rangle\sim |\alpha\rangle^{\otimes 3}$ is the only state in $\mathcal{G}_3$. Likewise, $|\alpha\rangle^{\otimes n}$ is the only state in $\mathcal{G}_n$ for $n\geq 3$ (since it is the unique $n$-qubit state such that any three consecutive qubits $i,i+1,i+2$ have all of their support on $\mathcal{G}_3$.) This establishes that we are in case 1 from Theorem \ref{thm:rank2}. Note that for any partition $[n]=ABC$ with $|B|\geq 1$ we have $G_{ABC}=|\alpha\rangle\langle \alpha|^{\otimes n}=G_{AB}G_{BC}$ and so the conditions of Lemma \ref{lem:gapcor} are satisfied (with $\delta=0$) and $H_n(\Pi)$ is gapped.
\medskip
\newline
\noindent
\textbf{Case (ii): $T_\psi |\alpha\rangle\sim |\alpha\rangle$ and $T_\psi |\beta\rangle\sim |\beta\rangle$ }
\newline
In this case (by the facts established above) $\mathcal{G}_2$ is spanned by $|\alpha\rangle|\alpha\rangle$ and $|\beta\rangle|\beta\rangle$. This implies that $\mathcal{G}_n$ is spanned by $|\alpha\rangle^{\otimes n}$ and $|\beta\rangle^{\otimes n}$ for all $n\geq 2$, so we are in case 2 of Theorem \ref{thm:rank2}. One can easily construct an orthonormal basis and confirm that $\|G_n-|\alpha\rangle \langle \alpha|^{\otimes n}-|\beta\rangle \langle \beta|^{\otimes n}\|=O(|\langle \alpha|\beta\rangle|^n)$. Using this expression three times and the triangle inequality we get $\|G_{AB}G_{BC}-G_{ABC}\|\leq K|\langle \alpha|\beta\rangle|^{|B|}$ where $K$ is a constant, for any partition $[n]=ABC$. Hence the conditions of Lemma \ref{lem:gapcor} are satisfied (with $\delta=|\langle \alpha|\beta\rangle|$) and $H_n(\Pi)$ is gapped.
\medskip
\newline
\noindent
\textbf{Case (iii): $T_\psi |\alpha\rangle\sim |\beta\rangle$ and $T_\psi |\beta\rangle\sim |\alpha\rangle$}\newline
In this case $\mathcal{G}_2=\mathrm{span}\{|\alpha\rangle |\beta\rangle,|\beta\rangle |\alpha\rangle\}$, which implies that for all $n\geq 2$, $\mathcal{G}_n$ is spanned by product states $|\nu_1\rangle=|\alpha\rangle |\beta\rangle |\alpha\rangle\ldots |\beta\rangle$ and  $|\nu_2\rangle=|\beta\rangle |\alpha\rangle |\beta\rangle\ldots |\alpha\rangle$ (the last tensor product factors are instead $|\alpha\rangle, |\beta\rangle$ respectively if $n$ is odd). This shows that we are in case 3 of Theorem \ref{thm:rank2}. Now constructing an orthonormal basis we see that $\|G_n-|\nu_1\rangle\langle \nu_1|-|\nu_2\rangle\langle \nu_2|\|=O(|\langle \alpha|\beta\rangle|^{n}).$ Letting $[n]=ABC$ and using this expression we get $\|G_{AB}G_{BC}-G_{ABC}\|\leq K|\langle \alpha|\beta\rangle|^{|B|}$ for some constant $K$. Hence the conditions of Lemma \ref{lem:gapcor} are satisfied (with $\delta=|\langle \alpha|\beta\rangle|$) and $H_n(\Pi)$ is gapped.
\medskip
\newline
\noindent
\textbf{Case (iv): $T_\psi |\alpha\rangle\sim |\beta\rangle$ and $T_\psi |\beta\rangle =c|\alpha\rangle +d|\beta\rangle$ where $c,d$ are both non-zero.}\newline
In this final case $b=0$ and the only state in $\mathcal{G}_3$ is $|\chi\rangle=1\otimes T_\psi\otimes T_\psi^2|\alpha\rangle |\alpha\rangle| \alpha\rangle \sim |\alpha\rangle|\beta\rangle T_\psi |\beta\rangle$. Any non-zero state $\kappa \in \mathcal{G}_4$ must have its first three and last three qubits in the state $\chi$. This implies in particular that $\kappa$ is a product state. Furthermore, the second qubit of $\kappa$ must be in the state $\beta$ (look at the first three qubits) and also in the state $\alpha$ (look at the last three qubits), which is impossible. Therefore $\mathcal{G}_4$ is empty, which is a contradiction. So case (iv) does not occur.
\qed

\vspace{0.25cm}
\textbf{Proof of Proposition \ref{prop:Teig2}}
Define an orthonormal basis $\{|\hat{0}\rangle=|\alpha\rangle, |\hat{1}\rangle=|\alpha^\perp\rangle\}$. Since $|\hat{0}\rangle$ is the only eigenvector of $T_\phi^{-1}T_\psi$, in this basis we have
\begin{equation}
T_\phi^{-1} T_\psi=\left(\begin{array}{rr}
c & d\\
0& c
\end{array}\right)
\label{Tphipsi}
\end{equation}
for some $c,d\in \mathbb{C}$ with $c\neq 0$ (since $\det\left(T_\phi^{-1}T_\psi\right)\neq 0$).  

First consider $n=2$. We claim that $1\otimes T_\psi|\hat{0}\rangle|\hat{0}\rangle$ and $1\otimes T_\psi\left(|\hat{0}\rangle|\hat{1}\rangle+|\hat{1}\rangle|\hat{0}\rangle\right)$ span $\mathcal{G}_2$. To see this note that these states are linearly independent and orthogonal to $|\psi\rangle\langle \psi|$. To show that they are also orthogonal to $|\phi\rangle\langle \phi|$, use the fact that $\langle \phi| \sim \langle \epsilon|1\otimes T_\phi^{-1}$ and Eq.~\eqref{Tphipsi}.

We now show that our assumption that $\mathcal{G}_4$ (and therefore also $\mathcal{G}_3$) is nonempty implies $1\otimes T_\psi \otimes T_\psi^2 |\hat{0},\hat{0},\hat{0}\rangle$ is in $\mathcal{G}_3$. To reach a contradiction, assume $1\otimes T_\psi \otimes T_\psi^2 |\hat{0},\hat{0},\hat{0}\rangle\notin \mathcal{G}_3$. By Proposition \ref{gsdegen} any state $\omega\in \mathcal{G}_3$ satisfies $|\omega\rangle =I\otimes T_\psi\otimes T_\psi^2|s\rangle$ for some $|s\rangle$ in the three-qubit symmetric subspace. The first two qubits of $|s\rangle$ must be supported entirely in $\mathcal{G}_2$. Using the form of $\mathcal{G}_2$ derived above, we see that this implies $|s\rangle$ is a superposition of the symmetric Hamming weight zero and one states (with respect to the $\hat{0},\hat{1}$ basis). Since $1\otimes T_\psi \otimes T_\psi^2 |\hat{0},\hat{0},\hat{0}\rangle\notin\mathcal{G}_3$ (and since $\mathcal{G}_3$ is nonempty), we have shown that $\mathcal{G}_3$ is one-dimensional and contains a single state $\omega$ of the form
\[
|\omega\rangle=1\otimes T_\psi \otimes T_\psi^2 \left(a|\hat{0},\hat{0},\hat{0}\rangle +b|\hat{1},\hat{0},\hat{0}\rangle+b|\hat{0},\hat{1},\hat{0}\rangle+b|\hat{0},\hat{0},\hat{1}\rangle\right)
\]
where $b\neq 0$. By assumption there exists a state in $\mathcal{G}_4$, which must have its last three and first three qubits each supported on $\mathcal{G}_3$, i.e., it must be of the form $|\theta_1\rangle|\omega\rangle=|\omega\rangle|\theta_2\rangle$ for some single-qubit states $\theta_1,\theta_2$. However this is impossible since $\omega$ is not a product state whenever $b\neq 0$. Having reached a contradiction we conclude $1\otimes T_\psi \otimes T_\psi^2 |\hat{0},\hat{0},\hat{0}\rangle$ is in $\mathcal{G}_3$.

Now 
\[
_{2,3}\langle \phi|1\otimes T_\psi \otimes T_\psi^2 |\hat{0},\hat{0},\hat{0}\rangle \sim \; |\hat{0}\rangle\left(\langle \epsilon|T_\psi \otimes T_\phi^{-1}T_\psi^2 |\hat{0},\hat{0}\rangle\right)=0
\]
implies that $T_\psi|\hat{0}\rangle$ is an eigenvector of $T_\phi^{-1}T_\psi$. By assumption this operator has only one eigenvector $\hat{0}$, so $T_\psi|\hat{0}\rangle\sim |\hat{0}\rangle$. Thus $|\hat{0}\rangle$ is an eigenvector of both $T_\psi$ and $T_\phi$. Therefore, in the $\hat{0}, \hat{1}$ basis we have 
\[
T_\psi=\left(\begin{array}{rr}
x & y\\
0& z
\end{array}\right) \qquad 
T_\phi=\left(\begin{array}{rr}
q & r\\
0& s
\end{array}\right)
\qquad
T_\phi^{-1}T_\psi \sim\left(\begin{array}{rr}
sx & sy-rz\\
0& qz
\end{array}\right)
\]
for some $x,y,z,q,r,s\in \CC$. Comparing with Eq.~\eqref{Tphipsi} we see that $sx=qz\neq 0$. Now 
\begin{equation}
|\psi\rangle=x^* |\hat{0},\hat{1}\rangle-z^*|\hat{1},\hat{0}\rangle+y^*|\hat{1},\hat{1}\rangle 
\label{psixyz}
\end{equation}
and
\begin{equation}
|\phi\rangle=q^* |\hat{0},\hat{1}\rangle-s^*|\hat{1},\hat{0}\rangle+r^*|\hat{1},\hat{1}\rangle= (q^*/x^*)\left(x^* |\hat{0},\hat{1}\rangle-z^*|\hat{1},\hat{0}\rangle\right)+r^*|\hat{1},\hat{1}\rangle.
\label{phiqrs}
\end{equation}
Comparing Eqs.~\eqref{psixyz},\eqref{phiqrs} we see that $\mathrm{range}(\Pi)=\mathrm{span}\{|\phi\rangle,|\psi\rangle\}=\mathrm{span}\{|\hat{1},\hat{1}\rangle,|\nu\rangle\}$ where
\begin{equation}
|\nu\rangle=\frac{1}{\sqrt{1+|f|^2}}\left(|\hat{0},\hat{1}\rangle-f|\hat{1},\hat{0}\rangle\right)
\label{eq:Pi_orth2}
\end{equation}
and $f=z^*/x^*$ (note that $x,z$ are both nonzero since  $sx=qz\neq 0$). Moreover these states are orthonormal so $\Pi=|\hat{1},\hat{1}\rangle\langle\hat{1},\hat{1}|+|\nu\rangle\langle \nu|$.

By Proposition \ref{gsdegen}, a basis for the zero energy ground space of  $\sum_i |\nu\rangle\langle\nu|_{i,i+1}$ is given by
\[
T_{\nu}^{\mathrm{all}}\sum_{z\in\{0,1\}^{\otimes n}, |z|=j}|\hat{z}\rangle, \qquad  j=0\ldots n
\]
where $T_\nu\sim \mathrm{diag}(1,f)$ and $T_{\nu}^{\mathrm{all}}=1\otimes T_\nu\otimes T_\nu^2\otimes\ldots \otimes T_\nu^{n-1}$. It is then easy to see that the only states in this space orthogonal to $\sum_i |\hat{1},\hat{1}\rangle\langle\hat{1},\hat{1}|_{i,i+1}$ are the basis vectors corresponding to $j=0,1$ and linear combinations thereof. These two states span $\mathcal{G}_n$ for all $n\geq 2$; this shows that we are in case 4 of Theorem \ref{thm:rank2}. 

We now show that the system is gapped if $|f|\neq1$. Without loss of generality we may assume $|f|<1$; note that if $|f|>1$ we may relabel the qubits $1,2,3,\ldots,n$ as $n,n-1,\ldots,1$ (flipping the chain left to right) which sends $f\rightarrow f^{-1}$. Define  $G_n^k=|0\rangle\langle 0|^{\otimes n}+|\chi^{k,n}\rangle\langle\chi^{k,n}|$ where
\begin{equation}
|\chi^{k,n}\rangle=\sqrt{1-|f|^2}\left(\sum_{i=1}^{k} f^{i-1}|\hat{0}^{\otimes i-1}\; \hat{1}\; \hat{0}^{\otimes n-i}\rangle\right) \qquad 1\leq k\leq n.
\label{eq:chibasis2}
\end{equation}
Noting that the ground space projector is $G_n=|0\rangle\langle 0|^{\otimes n}+\|\chi^{n,n}\|^{-2}|\chi^{n,n}\rangle\langle\chi^{n,n}|$ and that
\[
\left\||\chi^{n,n}\rangle\langle\chi^{n,n}|-|\chi^{k,n}\rangle\langle\chi^{k,n}|\right\|\leq 2\left\| |\chi^{n,n}\rangle -|\chi^{k,n}\rangle\right\|\leq 2\sqrt{1-|f|^2}\left(\sum_{i=k+1}^{\infty} |f|^{2(i-1)}\right)^{1/2}=2|f|^{k},
\]
we obtain
\[
\|G_n^{k}-G_n\|\leq 1-\|\chi^{n,n}\|^{2}+\left\||\chi^{n,n}\rangle\langle\chi^{n,n}|-|\chi^{k,n}\rangle\langle\chi^{k,n}|\right\|\leq |f|^{2n}+2|f|^k\leq 3|f|^k.
\]
Now let a partition $[n]=ABC$ be given with $|A|>|B|$ and $|C|=1$. Using the above bound three times, the triangle inequality, and the facts that $\|G_n^{k}\|\leq 1$ and $|f|^{|A|}<|f|^{|B|}$,  we obtain
\begin{equation}
\|G_{AB}G_{BC}-G_{ABC}\| \leq \|G^{|A|}_{AB}G^{|B|}_{BC}-G_{ABC}^{|A|}\|+9|f|^{|B|}.
\label{eq:gap1}
\end{equation}
Using explicit expressions for $G^{|A|}_{AB},G^{|B|}_{BC}$, and $G_{ABC}^{|A|}$ we see that $G^{|A|}_{AB}G^{|B|}_{BC}=G_{ABC}^{|A|}$ and the first term above is zero. This shows that the conditions of Lemma \ref{lem:gapcor} are satisfied with $\delta=|f|$, and $H_n(\Pi)$ is gapped.

Finally, we establish that $H_n(\Pi)$ is gapless if $|f|=1$. Using Eq.~\eqref{eq:Pi_orth2} we see that $H_n(\Pi)$ commutes with the total Hamming weight operator $\sum_{i=1}^{n} |\hat{1}\rangle\langle\hat{1}|_i$ and is therefore block diagonal with a block for each Hamming weight $0,\ldots,n$. The spectral gap $\gamma(\Pi,n)$ is upper bounded by the smallest nonzero eigenvalue within any given block. The matrix of the block with Hamming weight $1$, in the orthonormal basis
 $\{|e_i\rangle=f^{i-1}|\hat{0}^{i-1}\;\hat{1}\;\hat{0}^{n-i}\rangle : \; i=1,\ldots,n\}$, is given by
\[
\langle e_i|H_n(\Pi)|e_j\rangle=\frac{1}{2}\begin{cases} 1, \text{ if }i=j=1 \text{ or }i=j=n\\
2,  \text{ if }2\leq i=j\leq n-1,\\
-1,  \text{ if }  |i-j|=1\\
0,  \text{ otherwise}.\\
\end{cases}
\]
This is $1/2$ times the Laplacian matrix of the path graph of length $n$, and its spectrum is known. In particular, its smallest nonzero eigenvalue is  $\left(1-\cos(\pi/n)\right)$, which upper bounds $\gamma(\Pi,n)$ if $|f|=1$.
\qed
\end{document}